\documentclass[11pt]{article}

\pdfoutput=1

\usepackage{geometry}
\geometry{a4paper,margin={2.3cm,2.3cm}}

\usepackage{
	amsmath,
	amsthm,
	thmtools,
	thm-restate
}

\usepackage{amsfonts,amssymb,mathtools}

\usepackage[labelfont=bf, font=small]{caption}

\usepackage{authblk}

\usepackage[hyphens]{url}
\usepackage[hyperindex]{hyperref}
\hypersetup{
	colorlinks=true,
	linkcolor=magenta,
	citecolor=magenta,
    urlcolor=magenta
}

\usepackage[capitalize]{cleveref}

\usepackage{tikz}
\usetikzlibrary{automata,arrows,positioning}

\usepackage{prftree}

\usepackage{enumitem}

\hyphenation{
	var-iety
	table
	rel-ation
	simil-arly
	asser-tions
	co-algebra
	co-algebra
	coal-gebra
	co-algebras
	coal-gebras
}

\usepackage[
	backend=biber,
	style=ieee,
	url=false
]{biblatex}
\bibliography{main.bib}

\title{Processes Parametrised by an Algebraic Theory\thanks{Schmid, Rozowski, and Silva’s work was partially supported by ERC grant Autoprobe (grant agreement 101002697).}}

\author[1]{Todd Schmid \footnote{\url{todd.schmid.19@ucl.ac.uk}}}
\author[1]{Wojciech Rozowski \footnote{\url{w.rozowski@cs.ucl.ac.uk}}}
\author[2]{Alexandra Silva \footnote{\url{alexandra.silva@cornell.edu}}}
\author[3]{Jurriaan Rot \footnote{\url{jrot@cs.ru.nl}}}

\affil[1]{UCL,	London, UK}
\affil[2]{Cornell University, Ithaca, New York, USA}
\affil[3]{Radboud University, Nijmegen, The Netherlands}

\date{}

\begin{document}

\theoremstyle{plain}
\newtheorem{theorem}{Theorem}[section]
\newtheorem{lemma}{Lemma}[section]
\newtheorem{corollary}{Corollary}[section]
\newtheorem{proposition}{Proposition}[section]

\theoremstyle{definition}
\newtheorem{remark}{Remark}[section]
\newtheorem{definition}{Definition}[section]
\newtheorem{example}{Example}[section]
\newtheorem{conjecture}{Conjecture}[section]
\newtheorem{assumption}{Assumption}
\newtheorem{question}{Question}

\tikzset{every state/.style={minimum size=0pt}}
\tikzset{every node/.style={scale=.8}}

\newcommand \At  		{\texttt{At}} 		
\newcommand \Eq 		{\mathsf{E}}				
\makeatletter
\newcommand*{\da@rightarrow}{\mathchar"0\hexnumber@\symAMSa 4B }
\newcommand*{\da@leftarrow}{\mathchar"0\hexnumber@\symAMSa 4C }
\newcommand*{\xdashrightarrow}[2][]{%
  \mathrel{%
    \mathpalette{\da@xarrow{#1}{#2}{}\da@rightarrow{\,}{}}{}%
  }%
}
\newcommand*{\da@xarrow}[7]{%
  \sbox0{$\ifx#7\scriptstyle\scriptscriptstyle\else\scriptstyle\fi#5#1#6\m@th$}%
  \sbox2{$\ifx#7\scriptstyle\scriptscriptstyle\else\scriptstyle\fi#5#2#6\m@th$}%
  \sbox4{$#7\dabar@\m@th$}%
  \dimen@=\wd0 %
  \ifdim\wd2 >\dimen@
    \dimen@=\wd2 %
  \fi
  \count@=2 %
  \def\da@bars{\dabar@\dabar@}%
  \@whiledim\count@\wd4<\dimen@\do{%
    \advance\count@\@ne
    \expandafter\def\expandafter\da@bars\expandafter{%
      \da@bars
      \dabar@ 
    }%
  }%
  \mathrel{#3}%
  \mathrel{%
    \mathop{\da@bars}\limits
    \ifx\\#1\\%
    \else
      _{\copy0}%
    \fi
    \ifx\\#2\\%
    \else
      ^{\copy2}%
    \fi
  }%
  \mathrel{#4}%
}
\makeatother

\newcommand \Exp 		{\mathtt{Exp}} 		
\newcommand \Expm 		{{\Exp}/{\equiv}}
\newcommand \SExp 		{\mathtt{SExp}} 		
\newcommand \acro[1]	{\(\mathsf{#1}\)} 				

\newcommand \Id 		{\text{Id}}			
\newcommand \Sets 		{\mathbf{Sets}}				
\renewcommand \P		{\mathcal{P}_{\omega}}		
\newcommand \M			{\mathcal{M}_{\omega}}		
\newcommand \N 			{\mathbb N}					
\newcommand \D 			{\mathcal{D}_\omega}		
\newcommand \C			{\mathcal{C}}		

\newcommand \id 		{\operatorname{id}}			
\newcommand \supp 		{\operatorname{supp}}		
\newcommand \conv 		{\operatorname{conv}}	
\newcommand \fv 		{\operatorname{fv}}	
\newcommand \bv 		{\operatorname{bv}}

\newcommand \tr[1] 		{\mathrel{\raisebox{-0.1em}{{\footnotesize\(\xrightarrow{#1}\)}}}}
\newcommand \out[1] 	{\mathrel{\raisebox{-0.1em}{{\footnotesize\(\xRightarrow{#1}\)}}}}
\newcommand \trd[1] 	{\mathrel{\raisebox{-0.1em}{{\footnotesize\(\xdashrightarrow{#1}{}\)}}}}

\newcommand \sem[1] 	{\lceil \!\! \lfloor #1 \rceil \!\! \rfloor}
\newcommand \asem[1] 	{\lceil \!\! \lceil #1 \rceil \!\! \rceil}
\newcommand \beh   		{{!}}
\newcommand \gd			{\mathsf{gd}} 				
\newcommand \skiptt  	{\mathtt{1}} 			
\newcommand \unit		{\underline{u}}
\newcommand \eff  		{}

\newcommand \note[1]  	{{\color{red} #1}}

\maketitle

\begin{abstract}
	We develop a (co)algebraic framework to study a family of process calculi with monadic branching structures and recursion operators. Our framework features a uniform semantics of process terms and a complete axiomatisation of semantic equivalence. We show that there are uniformly defined fragments of our calculi that capture well-known examples from the literature like regular expressions modulo bisimilarity and guarded Kleene algebra with tests. We also derive new calculi for probabilistic and convex processes with an analogue of Kleene star. 
\end{abstract}

\section{Introduction}
	The theory of processes has a long tradition, notably in the study of concurrency, pioneered by seminal works of Milner~\cite{milner1984complete,milner1980ccs} and many others~\cite{baeten2005history}. 
	In labelled transition systems, a popular model of computation in process theory, processes branch nondeterministically.
	This means that any given action or observation transitions a starting state into any member of a predetermined set of states. 
	In Milner's CCS~\cite{milner1980ccs}, nondeterminism appears as a binary operation that constructs from a pair of programs \(e\) and \(f\) the program \(e + f\) that nondeterministically chooses between executing either \(e\) or \(f\). 
	This acts precisely like the join operation in a semilattice.
	In fact, elements of a free semilattice are exactly sets, as the free semilattice generated by a collection \(X\) is the set \(\P^+ X\) of finite nonempty subsets of \(X\)~\cite{manes1976algebraictheories}. 
	This is our first example of a more general phenomenon: the type of branching in models of process calculi can often be captured with an algebraic theory.
	
	A second example appears in the probabilistic process algebra literature, where the process denoted \(e +_p f\) flips a weighted coin and runs \(e\) with probability \(p\) and \(f\) with probability \(1-p\).
	The properties of \(+_p\) are axiomatised and studied in convex algebra, an often revisited algebraic theory of probability~\cite{andova1999process, swirszcz1973convexity, pumplunR1995convexity, stone1949barycentric}.
	The free convex algebra on a set \(X\) is the set \(\D X\) of finitely supported probability distributions on \(X\)~\cite{stone1949barycentric,bonchiSV2019tracesfor,jacobs2010convexity}. 
	
 	A third example is guarded Kleene algebra with tests (\acro{GKAT}), where the process \(e +_b f\) proceeds with \(e\) if a certain Boolean predicate \(b\) holds and otherwise proceeds with \(f\), emulating the \texttt{if-then-else} constructs of imperative programming languages~\cite{manes1991ifthenelse,bloomT1983ifthenelse,mccarthy1961basis}.
	If the predicates are taken from a finite Boolean algebra \(2^{\At}\), the free algebra of \texttt{if-then-else} clauses on a set \(X\) is the function space \(X^{\At}\).
	This explains why adjacency sets for tree models of \acro{GKAT} programs take the form of functions \(\At \to X\).

	This paper proposes a framework in which these languages can be uniformly described and studied.
	We use the \emph{algebra of regular behaviours} (or \acro{ARB}) introduced in \cite{milner1984complete} as a prototypical example.
	\acro{ARB} employs nondeterministic choice as a branching operation, prefixing of terms by atomic actions, a constant representing deadlock, variables, and a recursion operator for each variable.
	Specifications are interpreted using structural operational semantics in the style of \cite{plotkin2004sos}, which sees the set \(\Exp\) of all process terms as one large labelled transition system.
	This is captured succinctly as a \emph{coalgebra}, in this case a function
	\begin{equation}\label{eq:coalgebra}
		\beta : \Exp \to \mathcal P(V + A\times\Exp)
	\end{equation}
	Only finitely branching processes can be specified in \acro{ARB}, so we will replace \(\mathcal P\) with \(\P\) in (\ref{eq:coalgebra}).
	From a technical point of view, \(\P\) is the monad on the category \(\Sets\) of sets and functions presented by the algebraic theory of semilattices with bottom. 
	
	By substituting the finite powerset functor in (\ref{eq:coalgebra}) with other monads presented by algebraic theories, we obtain a parametrised family of process types that covers the examples above and a general framework for studying the processes of each type.
	Instantiating the framework with an algebraic theory gives a fully expressive specification language for processes and a complete axiomatisation of behavioural equivalence for specifications.
	
	One striking feature of many of the specification languages we construct is that they contain a fragment consisting of nonstandard analogues of regular expressions.
	We call these expressions \emph{star expressions} and the fragment composed of star expressions the \emph{star fragment}. 
	Star fragments extend several existing analogues of basic regular algebra found in the process theory literature, including basic process algebra~\cite{bergstraK1989process} and Andova's probabilistic basic process algebra~\cite{andova1999process},  by adding recursion operators modelled after the Kleene star. 
	
	Milner is the first to notice the star fragment of \acro{ARB} in \cite{milner1984complete}.
	He observes that the algebra of processes denoted by star expressions is more unruly than Kleene's algebra of regular languages, and that it is not clear what the appropriate axiomatisation should be.
	He offers a reasonable candidate based on Salomaa's first axiomatisation of Kleene algebra~\cite{salomaa1966two}, but ultimately leaves completeness as an open problem.
	This problem has been subjected to many years of extensive research~\cite{fokkinkzantema1994basic,fokkink1997perpetual,fokkinkzantema1997termination,baetenCG2006starheight,grabmayerfokkink2020complete,grabmayer2021coinductive}. 
	A potential solution has recently been announced by Clemens Grabmayer and will appear in the upcoming LICS.
	
	Replacing nondeterministic choice with the \texttt{if-then-else} branching structure of \acro{GKAT}, we obtain the process behaviours explored in the recent rethinking of the language~\cite{schmidkappekozensilva2021gkat}.
	This makes the open problem of axiomatising \acro{GKAT} (without the use of extremely powerful axioms like the \emph{Uniqueness Axiom} of \cite{gkat}), stated first in \cite{gkat} and again in \cite{schmidkappekozensilva2021gkat}, yet another problem of axiomatising an algebra of star expressions.
	Our general characterisation of star expressions puts all these languages under one umbrella, and shows how they are derived canonically from a single abstract framework.
	
	In summary, the contributions of this paper are as follows:
	\begin{itemize}
		\item We present a family of process types parametrised by an algebraic theory (\cref{sec:a parameterised family})
		together with a uniform syntax and operational semantics (\cref{sec:specifications of processes}). We show how these can be instantiated to concrete algebraic theories, including guarded semilattices and pointed convex algebras. These provide, respectively, a calculus of processes capturing control flow of simple imperative programs and a calculus of probabilistic processes. 
		\item We define an associated denotational semantics and show that it agrees with the operational semantics (\cref{sec:final coalgebras}). This coincidence result is important in order to prove completeness of the uniform axiomatisation we propose for each process type (\cref{sec:soundness and completeness}). 
		\item Finally, we study the star fragment of our parameterised family and propose
		a sound axiomatisation for this fragment (\cref{sec:star fragments}). We show that star fragments of concrete instances of our calculi yield known examples in the literature, e.g. Guarded Kleene Algebra with tests (\acro{GKAT})~\cite{gkat, schmidkappekozensilva2021gkat} and probabilistic processes of Stark and Smolka~\cite{stark1999complete}. 
	\end{itemize}
	Related work is surveyed in \cref{sec:related work}, and future research directions are discussed in \cref{sec:conclusion}.

	\section{A Parametrised Family of Process Types}\label{sec:a parameterised family}
	In this section, we present a family of process types parametrised by a certain kind of algebraic theory. 
	The processes we care about are stateful, meaning they consist of a set of states and a suitably structured set of transitions between states.
	Stateful systems fit neatly into the general framework of \textit{universal coalgebra}~\cite{rutten2000universal}, which stipulates that the type of structure carried by the transitions can be encoded in an endofunctor on the category \(\Sets\) of sets and functions.  
	Formally, given a functor \(B : \Sets \to \Sets\), a \emph{\(B\)-coalgebra} is a pair \((X,\beta)\) consisting of a set \(X\) of \emph{states} and a \emph{structure} map \(\beta : X \to BX\). 
	A \textit{coalgebra homomorphism} \(h : (X,\beta) \to (Y,\vartheta)\) is a function \(h : X \to Y\) satisfying 
	\(
		\vartheta \circ h = B(h) \circ \beta
	\).
	Many types of processes found in the literature coincide with \(B\)-coalgebras for some \(B\), and so do their homomorphisms. 
	For example, finitely branching labelled transition systems are \(\P(A\times\Id)\)-coalgebras, and deterministic Moore automata are \(O \times \Id^A\)-coalgebras~\cite{rutten1998coinduction}. 
	
	In this paper, we consider coalgebras for functors of the form
	\begin{equation}\label{eq:coalgebra signature}
		B_M := M(V + A \times \Id)
	\end{equation}
	for fixed sets \(V\) and \(A\) and a specific kind of functor \(M :\Sets \to \Sets\).
	Intuitively, there are two layers to the process behaviours we care about: one layer consists of either an \emph{output variable} in \(V\) or an \emph{action} from \(A\) that moves on to another state, and the other layer (encoded by \(M\)) combines output variables and action steps in a structured way.
	
	\begin{example}\label{eg:labelled transition system}
		When \(M = \P\), we obtain Milner's nondeterministic processes~\cite{milner1984complete}.
		Coalgebras for \(B_{\P}\) are functions of the form
		\(
			\beta : X \to \P(V + A \times X)
		\), or labelled transition systems with an additional decoration by variables.
		Write \(x \tr{a} y\) to mean \((a,y) \in \beta(x)\) and \(x \Rightarrow v\) to mean \(v \in \beta(x)\). 
		The image below posits a well-defined \(B_{\P}\)-coalgebra
		\[\begin{tikzpicture}
			\node (0) {\(v\)};
			\node[state, right=2em of 0] (1) {\(x_1\)};
			\node[state, right=2cm of 1] (2) {\(x_2\)};
			\node[state, right=2cm of 2] (3) {\(x_3\)};
			\draw (2) edge[->] node[above] {\(a_1\)} (1);
			\draw (2) edge[->] node[above] {\(a_1\)} (3);
			\draw[loop right] (3) edge[->] node[right] {\(a_2\)} (3);
			\draw (1) edge[-implies, double, double distance=.5mm] (0);
		\end{tikzpicture}\]
		Its state space is \(\{x_1,x_2,x_2\}\), \(A\) includes \(a_1\) and \(a_2\), and \(v\) is a variable in \(V\). 
	\end{example}

	\paragraph*{Algebraic Theories and Their Monads}\label{sec:algebraic theories}
	We are particularly interested in \(B_M\)-coalgebras when \(M\) is the functor component of a monad \((M,\eta,\mu)\) that is presented by an algebraic theory capturing a type of branching.
	A monad consists of natural transformations \(\eta : \Id \Rightarrow M\) and \(\mu : MM \Rightarrow M\), called the \emph{unit} and \emph{multiplication} respectively, satisfying two laws: 
	\(
	\mu \circ \eta_M = \id_M = \mu \circ M(\eta) \) and \(
	\mu_M \circ \mu = M(\mu) \circ \mu
	\).
	For our purposes, an \emph{algebraic theory} is a pair \((S, \Eq)\) consisting of a polynomial endofunctor \(S = \coprod_{\sigma \in I}\Id^{n_\sigma}\) on \(\Sets\) called an \textit{algebraic signature} and a set \(\Eq\) of equations in the signature \(S\). 
	An element \(\sigma\) of \(I\) should be thought of as an operation with arity \(n_\sigma\). 
	An algebraic theory \((S,\Eq)\) \textit{presents} a monad \((M, \eta, \mu)\) if there is a natural transformation \(\rho : SM \Rightarrow M\) such that for any set \(X\), \((MX,\rho_X)\) is the free \((S,\Eq)\)-algebra on \(X\). 
	That is, \((MX,\rho_X)\) satisfies \(\Eq\) and for any \(S\)-algebra \((Y,\varphi)\) also satisfying \(\Eq\) and any function \(h : X \to Y\), there is a unique \(S\)-algebra homomorphism \(\hat h : (MX, \rho_X) \to (Y,\varphi)\) such that \(h = \hat h \circ \eta\).
	This universal property implies that any two presentations of a given algebraic theory are isomorphic, so we speak simply of ``the'' monad presented by an algebraic theory.

	\begin{example}\label{eg:arb}
		The finite powerset functor is part of the monad \((\P, \{-\}, \bigcup)\) that is presented by the theory of semilattices (with bottom).
		The theory of semilattices is the pair \((1+\Id^2, \mathsf{SL})\), since the arity of a constant operation is \(0\) and \(+\) is a binary operation, and \(\mathsf{SL}\) consists of 
		\[
		x + 0 \stackrel{(\mathsf{SL1})}{=} x
		\qquad 
		x + x \stackrel{(\mathsf{SL2})}{=} x
		\qquad 
		x + y \stackrel{(\mathsf{SL3})}{=} y + x
		\qquad 
		x + (y + z) \stackrel{(\mathsf{SL4})}{=} (x + y) + z
		\]
	\end{example}
	
	Not every algebraic theory has such a familiar presentation as the theory of semilattices, but it is nevertheless true that every algebraic theory presents a monad.
	If we let \(S^*X\) denote the set of \textit{\(S\)-terms}, expressions built from \(X\) and the operations in \(S\), then \((S, \Eq)\) automatically presents the monad \((M, \eta, \mu)\) where \(MX = (S^*X)/\Eq := \{[q]_\Eq \mid q \in S^*X\}\) is the set of \(\Eq\)-congruence classes of \(S\)-terms, \(\eta\) computes congruence classes of variables, and \(\mu\) evaluates terms.
	This is witnessed by letting the transformation \(\rho\) be the restriction of \(\mu\) to the operations of \(S\) on \(S\)-terms.
	We take this to be the default presentation of an arbitrary algebraic theory.
	
	Our aim is to develop a (co)algebraic framework for studying $B_M$-coalgebras when \(M\) is the functor part of a monad presented by an algebraic theory.
	We will make three assumptions about the algebraic theories. First, we rule out the case of \(M\) being the constant \(1\) functor.

	\begin{assumption}
		The theory \(\Eq\) is \emph{nontrivial}, meaning that the equation \(x = y\) is not a consequence of \(\Eq\) for distinct \(x\) and \(y\).
	\end{assumption}

	This is equivalent to requiring that the unit \(\eta\) is injective.
	That is, the \(\Eq\)-congruence classes \([x]_{\Eq}\) and \([y]_\Eq\) in \(MX\) are distinct for distinct variables \(x\) and \(y\) in \(X\). 

	Second, we assume the existence of a constant symbol denoting deadlock, which might occur when recursing on unguarded programs. 

	\begin{assumption}
		Algebraic theories contain a designated constant \(0\). 
	\end{assumption}
	
	Finally, to keep the specifications of processes finite, we make the following assumption despite the fact that it has no bearing on the results presented before \cref{sec:soundness and completeness}. 

	\begin{assumption} \label{asm:finite arities}
		Each operation from S has a finite arity.
	\end{assumption}
	
	We conclude this section with examples of algebraic theories and the monads they present.
	
	\begin{example}\label{eg:guarded semilattices}
		For a fixed finite set \(\At\) of \emph{atomic tests}, the algebraic theory of \emph{guarded semilattices} is the pair \((1 + \coprod_{b \subseteq \At} \Id^2, \mathsf{GS})\), where \(\mathsf{GS}\) consists of the equations
		\[
		x +_b x \stackrel{(\mathsf{GS1})}= x
		\qquad 
		x +_{\At} y \stackrel{(\mathsf{GS2})}= x
		\qquad 
		x +_b y \stackrel{(\mathsf{GS3})}= y +_{\bar b} x
		\qquad 
		(x +_b y) +_c z \stackrel{(\mathsf{GS4})}= x +_{bc} (y +_c z)
		\]
		Here, \(+_b\) is the binary operation associated with the subset \(b \subseteq \At\), \(\bar b := \At \setminus b\), and \(bc := b \cap c\). 
		The theory of guarded semilattices is presented by the monad \(((1+\Id)^{\At}, \lambda \xi.(-), \Delta^*)\), where 
		\(
			(\lambda \xi.x)(\xi) = x
		\)
		and
		\( 
			\Delta^*(F)(\xi) = F(\xi)(\xi)
		\).
		The idea is that \(+_b\) acts like an \texttt{if-then-else} clause in an imperative program.
		This is reflected in a free guarded semilattice \(((1 + X)^{\At}, \rho_X)\), where for a pair of maps \(h_1, h_2 : \At \to X\) we define
		\[
		\rho_X(h_1 +_b h_2)(\xi) := \begin{cases}
			h_1(\xi) &\text{if \(\xi \in b\)} \\
			h_2(\xi) &\text{otherwise}
		\end{cases}
		\]
		The theory of guarded semilattices dates back to the algebras of \texttt{if-then-else} clauses studied in \cite{mccarthy1961basis, manes1991ifthenelse,bloomT1983ifthenelse,bloomE1988varieties}.
		For instance, guarded semilattices are examples of \emph{McCarthy algebras}, introduced by Manes in \cite{manes1991ifthenelse}.\footnote{More information on the theory of guarded semilattices can be found in \cref{sec: guarded semilattices}.}
	\end{example}

	 \begin{example}
	 	By deleting the second axiom of \(\mathsf{SL}\), we obtain the theory of commutative monoids \((\Id^2, \mathsf{CM})\).
	 	This theory presents the finite multiset monad \((\M, \delta_{(-)}, \sum)\), where \[
	 		\M X = \{m : X \to \N \mid \{x \mid m(x) > 0\}~\text{is finite}\}
	 	\]
	 	and \[
	 		\delta_y(x) = [x = y?]
	 		\qquad 
	 		\sum (F)(x) = \sum_{m \in \M X} F(m)\cdot m(x)
	 	\]
	 \end{example}

	\begin{example}\label{eg:pointed convex algebras}
		The theory of \emph{pointed convex algebras} studied in \cite{bonchi2021presenting} is \((1 + \coprod_{p \in [0,1]} \Id^2, \mathsf{CA})\), where \(\mathsf{CA}\) consists of the equations
		\[
		x +_p x \stackrel{(\mathsf{CA1})}= x
		\qquad 
		x +_{1} y \stackrel{(\mathsf{CA2})}= x
		\qquad 
		x +_p y \stackrel{(\mathsf{CA3})}= y +_{\bar p} x
		\qquad 
		(x +_p y) +_q z \stackrel{(\mathsf{CA4})}= x +_{pq} (y +_{\frac{q\bar p}{1-{pq}}} z)
		\]
		Here, \(+_p\) is the binary operation with index \(p \in [0,1]\), \(\bar p := 1 - p\), and \(pq\neq 1\).
		This theory presents the pointed finite subprobability distribution monad \((\D(1+\Id), \delta_{(-)}, \sum)\), 
		where
		\[
			\D(1+X) = \left\{\theta : X \to [0,1] \middle\vert 
			\begin{array}{c}
				\{x \mid \theta(x) > 0\}~\text{is finite} \\
				\sum_{x \in X}\theta(x) \le 1
			\end{array}
			\right\}
		\]
		for any set \(X\), and for any \(x \in X\), \(\theta \in \D(1+X)\), and \(\Theta \in \D(1 + \D(1 + X))\),
		\[
			\delta_x(y) = [x=y?]
			\qquad \qquad
			\sum(\Theta)(\theta) = \sum_{y \in X}\Theta(\theta)\cdot \theta(y)
		\]
		This is witnessed by the transformation \(\rho\) that takes \(0\) to the trivial subdistribution and computes the Minkowski sum
		\(
		\rho_X(\theta +_p \psi) = p \cdot \theta + (1-p)\cdot \psi
		\)
		for each \(p \in [0,1]\), \(\theta,\psi \in \D(1 + X)\).
	\end{example}
	
	\begin{example}\label{eg:pointed convex semilattices}
		The theory of \emph{pointed convex semilattices} studied in \cite{bonchi2021presenting,varacca2006distributing, bonchi_silva_sokolova:2017} combines the theory of semilattices and the theory of convex algebras. 
		It has both a binary operation \(+\) mimicking nondeterministic choice and the probabilistic choice operations \(+_p\) indexed by \(p \in [0,1]\).
		Formally, it is given by the pair \((1 + \Id^2 + \coprod_{p \in [0,1]} \Id^2, \mathsf{CS})\), where \acro{CS} is the union of \acro{SL}, \acro{CA}, and the distributive law  
		\[
			 (x + y) +_p z \stackrel{(\mathsf{D})}= (x +_p z) + (y +_p z)
		\]
		This theory presents the pointed convex powerset monad \((\C, \eta^{\C}, \mu^{\C})\), where \(\C X\) is the set of finitely generated convex subsets of \(\D(1 + X)\) containing \(\delta_0\), and for \(x \in X\) and \(Q \in \C \C X\),
		\[
			\eta^{\C}(x) = \{p\cdot\delta_x \mid p \in [0,1]\} 
			\qquad \qquad
			\mu^{\C}(Q) = \bigcup_{\Theta \in Q}\left\{
				\sum_{U \in \C_0X}\Theta(U) \cdot \theta_U~ 
				\middle|~ (\forall U \in \C_0 X)~\theta_U \in U
			\right\}
		\]
		The witnessing transformation \(\rho^\C\) takes \(0\) to \(\{\delta_0\}\), computes the Minkowski sum (extended to subsets) in place of \(+_p\), and interprets the \(+\) operation as the convex union
		\[
			\rho_X^\C(U + V) = \{p \cdot \theta_1 + (1-p) \cdot \theta_2 \mid p \in [0,1], \theta_1 \in U, \theta_2 \in V\}
		\]
	\end{example}

\section{Specifications of Processes}\label{sec:specifications of processes}
	Fix an algebraic theory \((S,\Eq)\) presenting a monad \((M,\eta,\mu)\).
	In this section, we give a syntactic and uniformly defined specification system for \(B_M\)-coalgebras and an associated operational semantics. 
	We are primarily concerned with the specifications of finite processes, and indeed the process terms we construct below denote processes with finitely many states.
	The converse is also true, that every finite \(B_M\)-coalgebra admits a specification in the form of a process term, but we defer this result to \cref{sec:soundness and completeness} because of its relevance to the completeness theorem there.
	 
	The syntax of our specifications consists of variables from an infinite set \(V\), actions from a set \(A\), and operations from \(S\). 
	The set \(\Exp\) of \emph{process terms} is given with the grammar
	\[
		e,e_i ::= 0 \mid v \mid \sigma(e_1, \dots, e_n) \mid a e \mid \mu v ~ e
	\]
	where \(v \in V\), \(a \in A\), and \(\sigma\) is an \(S\)-operation.
	Abstractly, process terms form the initial \(\Sigma_M\)-algebra \((\Exp,\alpha)\), where \(\Sigma_M :\Sets \to \Sets\) is the functor defined by
	\[
	\Sigma_M := S + V + A\times \Id + V\times \Id
	\]
	and the algebra map \(\alpha : \Sigma_M\Exp \to \Exp\) evaluates \(\Sigma_M\)-terms.
	
	\begin{figure}[!t]
		\[
		\begin{aligned}
			\epsilon(v) &= \eta(v) \\
			\epsilon(ae) &= \eta((a, e)) 
		\end{aligned}
		\qquad \qquad 
		\begin{aligned}
			\epsilon(\sigma(e_1, \dots, e_n)) &= \sigma(\epsilon(e_1), \dots, \epsilon(e_n)) \\
			\epsilon(\mu v~e) &= \epsilon(e)[\mu v~e/\!/v]
		\end{aligned}\]
		\caption{Operational semantics of process terms. Here, \(v \in V\), \(a \in A\), and \(e,e_i \in \Exp\).\label{fig: coalgebraic operational semantics}}
	\end{figure}

	Intuitively, the symbol \(0\) is the designated constant of \(S\) denoting the \emph{deadlock} process, which takes no action. 
	Output variables are used in one of two ways, depending on the expression in which they appear.
	A variable \(v\) is \emph{free} in an expression \(e\) if it does not appear within the scope of \(\mu v\) and \emph{bound} otherwise. 
	If \(v\) is free in \(e\), then \(v\) denotes ``output \(v\)''. 
	Otherwise, \(v\) denotes a \emph{goto} statement that returns the computation to the \(\mu v\) that binds \(v\).  
	The process \(\sigma(e_1,\dots, e_n)\) is the process that branches into \(e_1, \dots, e_n\) using an \(n\)-ary operation \(\sigma\) as the branching constructor. 
	The expression \(a e\) denotes the process that performs the action \(a\) and then proceeds with \(e\).
	Finally, \(\mu v~e\) denotes recursion in the variable \(v\).
	
	\paragraph*{Small-step Semantics}
	Next we give a small-step (operational) semantics to process terms that is uniformly defined for the process types in our parametrised family.
	Many of the algebraic theories we consider lack a familiar presentation, which ultimately prevents the corresponding semantics from taking the traditional form of a set of inference rules describing transition relations. 
	We take an abstract approach instead by defining a \(B_M\)-coalgebra structure \(\epsilon : \Exp \to B_M\Exp\) that mirrors the intuitive descriptions of the executions of process terms above.
	The formal description of \(\epsilon\) is summarised in \cref{fig: coalgebraic operational semantics}.
	
	The operational interpretation of the recursion operators requires further explanation.
	Intuitively, \(\mu v~e\) performs the process denoted by \(e\) until it reaches an exit in channel \(v\), at which point it loops back to the beginning. 
	However, this is really only an accurate description of recursion in \(v\) when \(e\) performs an action before exiting in \(v\). 
	For example, the process \(\mu v~v\) not only never exits in channel \(v\), but it also never performs any action at all.
	Thus, the operational interpretation of \(\mu v~v\) is indistinguishable from that of deadlock.
	We deal with this issue as follows: if an exit in channel \(v\) is immediately reached by a branch of \(e\), then we replace that exit with deadlock in \(\mu v~e\).
	Formally, we say that a variable \(v\) is \emph{guarded} in a process term \(e\) if (i) \(e \in V\setminus\{v\}\), (ii) \(e = af\) or (iii) \(e = \mu v~f\) for some \(f \in \Exp\), or (iv) either \(e = \mu u~e_1\) or (v) \(e = \sigma(e_1, \dots, e_n)\) and \(v\) is guarded in \(e_i\) for each \(i \le n\).
	In our calculus, we syntactically allow for recursion in unguarded variables, but one should keep in mind that those variables are ultimately deadlock under the recursion operator. 
	
	The operational interpretation of recursion is formally defined using a \emph{guarded syntactic substitution operator} \([g/\!/v] : B_M\Exp \to B_M\Exp\),\footnote{Technically, it is only partially defined. See \cref{sec: substitution operators} for details.} a variant of the usual syntactic substitution of variables.  
	Given \(g \in \Exp\), we first define \([g/\!/v]\) by induction on \(S^*(V + A\times\Exp)\) as
	\[\begin{aligned}
		u[g/\!/v] &= \begin{cases}
			\eta(u) &u\neq v \\
			\eta(0) &u = v
		\end{cases}
	\end{aligned}
	\qquad\qquad
	\begin{aligned}
		(a, f)[g/\!/v] &= (a, f[g/v]) \\
		\sigma(p_1,\dots,p_n)[g/\!/v] &= \sigma(p_1[g/\!/v], \dots, p_n[g/\!/v])		
	\end{aligned}
	\]
	where \(u \in V\), \(p_i \in S^*(V+A\times\Exp)\), \(f \in \Exp\), and \([g/v]\) replaces free occurrence of \(v\) with \(g\). 
	The following lemma completes the description of the operational semantics of process terms. 
	
	\begin{restatable}{lemma}{guardedsyntacticsubstitution}\label{lem:guardedsubstitution}
		For any \(g \in \Exp\) and \(v \in V\), the map \([g/\!/v]\) factors uniquely through \(B_M\Exp\).
	\end{restatable}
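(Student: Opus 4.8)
The plan is to recognise $[g/\!/v]$ as an $S$-algebra homomorphism into a free (hence $\Eq$-satisfying) algebra, and then to invoke the general principle that such homomorphisms are invariant under the $\Eq$-congruence. First I would make the types precise. The clauses in the statement define $[g/\!/v]$ by recursion on the term algebra $S^*(V + A\times\Exp)$, with values in $M(V+A\times\Exp) = B_M\Exp$. Writing $h : V + A\times\Exp \to B_M\Exp$ for the assignment given by the base clauses — $h(u) = \eta(u)$ for $u \in V\setminus\{v\}$, $h(v)$ the deadlock constant $0$, and $h(a,f) = \eta(a, f[g/v])$ — the clause for $\sigma$ says precisely that $[g/\!/v]$ is the $S$-algebra homomorphism extending $h$, where $B_M\Exp$ carries its canonical structure $\rho_{V+A\times\Exp}$. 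By the universal property of the term algebra $S^*(V+A\times\Exp)$ (the free $S$-algebra on $V+A\times\Exp$), this homomorphism $\widetilde h : S^*(V+A\times\Exp) \to B_M\Exp$ exists and is unique. Since $B_M\Exp = (S^*(V+A\times\Exp))/\Eq$, the claim that $[g/\!/v]$ factors through $B_M\Exp$ is exactly the claim that $\widetilde h$ factors through the quotient map $q : S^*(V+A\times\Exp) \twoheadrightarrow B_M\Exp$.

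Next I would show that $\widetilde h$ is constant on $\Eq$-congruence classes. The key point, supplied by the development above, is that $B_M\Exp = M(V+A\times\Exp)$ is the free $\Eq$-algebra on $V+A\times\Exp$ and in particular satisfies every equation in $\Eq$. Consequently, for each equation $(l = r) \in \Eq$ and each substitution instance, the homomorphism $\widetilde h$ sends the two sides to the same element of $B_M\Exp$; that is, every basic $\Eq$-instance lies in the kernel of $\widetilde h$. A kernel of a homomorphism is a congruence, and the $\Eq$-congruence (the kernel of $q$) is by definition the least congruence containing all basic $\Eq$-instances, so $\ker q \subseteq \ker\widetilde h$. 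Hence $\widetilde h$ factors through $q$, and the factored map $[g/\!/v] : B_M\Exp \to B_M\Exp$ is unique because $q$ is surjective. Equivalently, one may obtain the factored map directly by applying the universal property of the free $\Eq$-algebra to $h$, and then check it agrees with $\widetilde h$ on representatives using uniqueness of homomorphisms out of the term algebra.

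The algebraic core above is the standard observation that a homomorphism out of a term algebra into a model of $\Eq$ is invariant under $\equiv_\Eq$; the only routine verifications are that the $\sigma$-clause genuinely expresses the homomorphism condition for $\rho$ and that constants such as $0$ are handled consistently. I expect the real delicacy to be the partiality flagged in the footnote: the base clause for $A\times\Exp$ invokes the ordinary substitution $f[g/v]$ on $\Exp$, which is only partially defined, so $\widetilde h$ is a \emph{partial} homomorphism. The obstacle is that the domain of definedness must be reconciled with the quotient, since an equation in $\Eq$ that duplicates or discards a subterm (for instance $x +_{\At} y = x$ in the theory of guarded semilattices) could in principle relate a term carrying an undefined generator to one that does not. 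I would therefore pin down the subset of $S^*(V+A\times\Exp)$ on which the base assignment is total, show it is closed under the $S$-operations and saturated under the $\Eq$-congruence, and run the kernel argument on that subalgebra. Making this bookkeeping precise — the part deferred to the appendix — is the main work beyond the one-line algebraic principle.
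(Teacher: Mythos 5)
Your proposal is correct and follows essentially the same route as the paper: lift the generator assignment to an \(S\)-algebra homomorphism out of the term algebra \(S^*(V+A\times\Exp)\), observe that the codomain \(B_M\Exp\) satisfies \(\Eq\) so the kernel of that homomorphism contains the \(\Eq\)-congruence, and descend along the surjective quotient \([-]_\Eq\). The one sub-step you flag as "the main work" --- showing the domain of definedness is \emph{saturated} under the \(\Eq\)-congruence --- is not how the paper resolves the partiality, and in fact saturation can fail (an equation such as \(x +_{\At} y = x\) discards a subterm, so it may relate a term containing an undefined generator to one that does not); the paper instead only asserts the kernel inclusion restricted to \(\operatorname{dom}(h^\#)^2\) and accepts that the factored map \(B_M\Exp \rightharpoonup B_M\Exp\) is itself partial.
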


	For more information on these substitution operators, see \cref{sec: substitution operators}.
		
	Formally, the map \(\epsilon\) assigns to each process term \(e\) an \(\Eq\)-congruence class \(\epsilon(e)\) of terms from \(S^*(V + A\times \Exp)\).
	A term from \(S^*(V + A\times \Exp)\) is a combination of variables \(v\) and transition-like pairs \((a, e_i)\), so there is often only a small conceptual leap from the coalgebra structure \(\epsilon\) to a more traditional representation of transitions as decorated arrows.
	We provide the following examples as illustrations of this phenomenon, as well as the specification languages and operational semantics of terms defined above.\footnote{See \cref{sec: examples}.}
	 
	\begin{restatable}{example}{exampleACF}\label{eg:ACF}
		The \emph{algebra of control flows}, or \acro{ACF}, is obtained from the theory of guarded semilattices of \cref{eg:guarded semilattices} and \(M = (1+Id)^{\At}\). 
		Given a structure map \(\beta : X \to B_{(1+\Id)^\At}X\) and \(b \subseteq \At\), write \(x \tr{b\mid a} y\) if \(\beta(x)(\xi) = (a,y)\) for all \(\xi \in b\), and \(x \out{b} v\) if \(\beta(x)(\xi) = v\) for all \(\xi \in b\).
		The operational semantics returns the constant map \(\lambda \xi.v\) given a variable \(v \in V\) and interprets conditional choice as guarded union. 
		For example, let \(e = \mu w~ (a_1(v+_b a_2w)+_b u)\) and \(f = v+_b a_2e\).
		The process denoted by \(e\) is
		\[\begin{tikzpicture}
			\node (0) {\(u\)};
			\node[state, right=2em of 0] (1) {\(e\)};
			\node[state, right=2cm of 1] (2) {\(f\)};
			\node[right=2em of 2] (3) {\(v\)};
			\draw (1) edge[->,bend left=10] node[above] {{\(b\mid a_1\)}} (2);
			\draw (2) edge[->,bend left=10] node[below] {{\(\bar b\mid a_2\)}} (1);
			\draw (1) edge[-implies, double, double distance=0.5mm] node[above] {{\(\bar b\)}} (0);
			\draw (2) edge[-implies, double, double distance=0.5mm] node[above] {{\(b\)}} (3);
		\end{tikzpicture}\]
	\end{restatable}

	\begin{restatable}{example}{exampleAPA}\label{eg:apa}
		The \emph{algebra of probabilistic actions}, or \acro{APA}, is obtained from the theory of pointed convex algebras of \cref{eg:pointed convex algebras} and $M=\D(1+Id)$.
		For a structure map \(\beta : X \to B_{\D(1+\Id)}X\), write \(x \tr{k\mid a} y\) when \(\beta(x)(a,y) = k\) and \(e \out k v\) when \(\beta(e)(v) = k\). 
		The operational semantics returns the Dirac distribution \(\delta_v\) for \(v \in V\) and interprets probabilistic choice as the Minkowski sum.
		The process denoted by \(e = \mu v~ (a_1u +_{\frac{1}{2}}(a_2v +_{\frac{1}{3}} w))\) is
		\[\begin{tikzpicture}
			\node (0) {\(w\)};
			\node[state, right=2em of 0] (1) {$e$};
			\node[state, right=2cm of 1] (2) {\(u\)};
			\node[right=2em of 2] (3) {\(u\)};
			
			\draw (1) edge[->] node[above,pos=0.5] {{\( \frac{1}{2}\mid a_1\)}} (2);
			\draw (1) edge[-implies, double, double distance=0.5mm] node[above] {{\(\frac{1}{3}\)}} (0);
			\draw[loop above] (1) edge[->] node[above] {{\( \frac{1}{6}\mid a_2\)}} (1);
			\draw (2) edge[-implies, double, double distance=0.5mm] node[above] {{\(1\)}} (3);
		\end{tikzpicture}\]
	\end{restatable}

	\begin{restatable}{example}{exampleANPA}
		The \emph{algebra of nondeterministic probabilistic actions}, or \acro{ANP}, is obtained from the theory of pointed convex semilattices of \cref{eg:pointed convex semilattices}.
		For a structure map \(\beta : X \to B_{\C}X\), write \(x \mathrel{\raisebox{-0.1em}{\({\to}\circ\)}{\trd{k\mid a}}} y\) to mean there is a \(\theta \in \beta(x)\) such that \(\theta(a,y)=k\), and \(x \out k v\) to mean there is a \(\theta \in \beta(x)\) with \(\theta(v) = k\). 
		The operational semantics returns \(\eta^\C(v)\) given \(v \in V\), interprets nondeterministic choice as convex union, and replaces probabilistic choice with Minkowski sum. 
		For example, \(e = \mu v~((a_1v +_{\frac{1}{3}} a_2w) + a_2v )\) denotes
		\[\begin{tikzpicture}
			\node (0) {\(\circ\)};
			\node (1) [state, right=2cm of 0] {e};
			\node (2) [right=2cm of 1] {\(\circ\)};
			\node (3) [state, right=2cm of 2] {\(w\)};
			\node (4) [right=2em of 3] {\(w\)};
			
			\draw (1) edge[->] node[above, pos=0.5] {} (0);
			\draw (1) edge[->] node[above, pos=0.5] {} (2);
			\draw (0) edge[->, dashed, bend left] node[above, pos=0.5] {\(1\mid a_2\)} (1);
			\draw (2) edge[->, dashed, bend right] node[above, pos=0.5] {\(\frac{1}{3}\mid a_1\)} (1);
			\draw (2) edge[->, dashed] node[above, pos=0.5] {\(\frac{2}{3}\mid a_2\)} (3);
			\draw (3) edge[-implies, double, double distance=0.5mm] node[above] {\(1\)} (4);
		\end{tikzpicture}\]
	\end{restatable}

\section{Behavioural Equivalence and the Final Coalgebra} \label{sec:final coalgebras}
	In this section, we relate the operational semantics arising from the coalgebra
	structure on $\Exp$ in the previous section to a denotational semantics,
	which arises through the definition of a suitable algebra structure on the 
	domain of process behaviours.
	
	For an arbitrary functor \(B : \Sets \to \Sets\), a \textit{behaviour} is a state of the \textit{final} \(B\)-coalgebra \((Z,\zeta)\), the unique (up to isomorphism) coalgebra (if it exists) such that there is exactly one homomorphism \(!_\beta : (X,\beta) \to (Z,\zeta)\) from every \(B\)-coalgebra \((X,\beta)\).
	It follows from general considerations that the functor \(B_M\) admits a final coalgebra~\cite{rutten2000universal}.
	The universal property of the final \(B_M\)-coalgebra produces the homomorphism \(\beh_\epsilon : (\Exp, \epsilon) \to (Z, \zeta)\).
	The behaviour \(\beh_\epsilon(e)\) is called the \emph{final (coalgebra) semantics} of \(e\), also known as its operational semantics~\cite{ruttenT1993finalsemantics}.

	For example, the final \(B_{\P}\)-coalgebra consists of bisimulation equivalence classes of finite and infinite labelled trees of a certain form~\cite{barr1993terminal}.
	In this setting, \((\Exp, \epsilon)\) is a labelled transition system and the final semantics \(\beh_\epsilon\) constructs a tree from a process term by unrolling.
	Intuitively, this captures the behaviour of a specification by encoding all possible actions and outgoing messages at each time-step in its execution.
	
	\begin{figure}[t!]
		\[
		\begin{aligned}
			\zeta(\gamma(v)) &= [v]_{\Eq} 
			\\
			\zeta(\gamma(a,t)) &= [(a,t)]_{\Eq}
		\end{aligned}
		\qquad\qquad
		\begin{aligned}
			\zeta(\gamma(\sigma(t_1, \dots, t_n))) &= [\sigma(\zeta(t_1), \dots, \zeta(t_n))]_{\Eq}
			\\
			\zeta(\gamma(\mu v~t)) &= \zeta(t)\{\gamma (\mu v~t)/\!/v\}
		\end{aligned}\]
		\caption{The \(\Sigma_M\)-algebra structure of \((Z, \gamma)\). 
			Here, \(v \in V\), \(a \in A\), \(t,t_i \in Z\) for \(i \le n\), and \(\sigma\) is an \(n\)-ary operation from \(S\).
			By Lambek's lemma~\cite{lambek1968fixpoint}, \(\zeta : Z \to B_MZ\) is a bijection, so the first three equations determine \(\gamma : V + SZ + A\times V \to Z\).
			The fourth is a behavioural differential equation~\cite{rutten1998coinduction}.
			\label{fig: algebraic semantics}}
	\end{figure}
	
	In addition to forming the state space of the final \(B_M\)-coalgebra, the set of process behaviours also carries the structure of a \(\Sigma_M\)-algebra \((Z,\gamma)\), summarised in \cref{fig: algebraic semantics}.
	Now, \((\Exp, \alpha)\) is the \textit{initial} \(\Sigma_M\)-algebra, which in particular means there is a unique algebra homomorphism \(\sem- : (\Exp,\alpha) \to (Z,\gamma)\).
	The behaviour \(\sem e\) is called the \textit{initial (algebra) semantics} of \(e\)~\cite{goguenTWW1977initial},
	and provides a denotational semantics to our process calculus.
	
	The algebra structure \(\gamma : \Sigma_M Z \to Z\) of \((Z,\gamma)\) can be seen as a reinterpretation of the programming constructs of the language \(\Exp\) that mimics the operational semantics of process terms.
	The basic constructs are the content of the first three equations in \cref{fig: algebraic semantics}: output variables are evaluated so as to behave like the variables of \((\Exp, \epsilon)\), the behaviour \(at\) performs \(a\) and moves on to \(t\), and \(\sigma(t_1, \dots, t_n)\) branches into the behaviours \(t_1, \dots, t_n\) with additional structure determined by the operation \(\sigma\).
	Interpreting recursive behaviours like \(\mu v~t\) requires coalgebraic analogues of syntactic and guarded syntactic substitution from \cref{sec:specifications of processes}.
	
	For a given behaviour \(s \in Z\) and a variable \(v \in V\), the \emph{behavioural substitution} of \(s\) for \(v\) is the map \(\{s/v\} : Z \to Z\) defined
	by the identity
	\[
		\zeta(t\{s/v\}) = \begin{cases}
			\zeta(s) &\zeta(t) = [v]_{\Eq} \\
			[u]_{\Eq} &\zeta(t) = [u]_{\Eq} \neq [v]_{\Eq} \\
			[(a, r\{s/v\})]_{\Eq} &\zeta(t) = [(a,r)]_{\Eq} \\
			\sigma(\zeta(t_1\{s/v\}), \dots, \zeta(t_n\{s/v\})) &\zeta(t) = [\sigma(\zeta(t_1), \dots, \zeta(t_n))]_{\Eq}
		\end{cases}
	\]
	for any \(t \in Z\).
	The \emph{guarded behavioural substitution} of \(s\) for \(v\) is constructed in analogy with guarded syntactic substitution from the previous section.
	We start by defining guarded behavioural substitution in \(S^*(V+A\times Z)\) as
	\begin{align*}
		u\{s/\!/v\} &= \begin{cases}
		 	u &u\neq v \\
		 	0 &u = v
		\end{cases}
		&
		\begin{aligned}
			(a,r)\{s/\!/v\} &= (a, r\{s/v\}) \\
			\sigma(r_1, \dots, r_n)\{s/\!/v\} &= \sigma(r_1\{s/\!/v\}, \dots, r_n\{s/\!/v\})
		\end{aligned}
	\end{align*}
	where \(u \in V\), \(a \in A\), and \(r,r_i \in Z\) for \(i \le n\).
	This map lifts to an operator \(B_MZ \to B_MZ\) for the same reason as the guarded syntactic substitution operator. This completes the description of the algebraic structure of \((Z,\gamma)\) in \cref{fig: algebraic semantics}.
	
	\begin{restatable}{theorem}{compositionalitytheorem}\label{thm:compositionality}
		Let \(\sem-\) be the unique algebra homorphism \((\Exp, \alpha) \to (Z,\gamma)\).
		For any process term \(e \in \Exp\), we have \(\beh_\epsilon(e) = \sem e\).
	\end{restatable}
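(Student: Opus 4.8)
The plan is to exploit the \emph{initiality} of $(\Exp,\alpha)$. Since $\sem-$ is by definition the \emph{unique} $\Sigma_M$-algebra homomorphism $(\Exp,\alpha)\to(Z,\gamma)$, it suffices to prove that $\beh_\epsilon$ is \emph{also} a $\Sigma_M$-algebra homomorphism into $(Z,\gamma)$; uniqueness then forces $\beh_\epsilon=\sem-$, which is exactly the claim. Concretely, I must verify $\beh_\epsilon\circ\alpha=\gamma\circ\Sigma_M(\beh_\epsilon)$, and because $\Sigma_M=S+V+A\times\Id+V\times\Id$ this breaks into one equation per constructor. Throughout I use that $\beh_\epsilon$ is a $B_M$-coalgebra homomorphism, i.e.\ $\zeta\circ\beh_\epsilon=B_M(\beh_\epsilon)\circ\epsilon$, together with the fact that $\zeta$ is a bijection (Lambek's lemma): to establish an identity $\beh_\epsilon(\ldots)=\gamma(\ldots)$ it is enough to prove the corresponding identity after applying $\zeta$, where the left side rewrites via the coalgebra law to $B_M(\beh_\epsilon)(\epsilon(\ldots))$ and the right side rewrites via the defining equations of $\gamma$ in \cref{fig: algebraic semantics}.

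The cases of variables, action prefixing, and $S$-operations (including the constant $0$) are routine and mutually independent. For a variable $v$, both $\zeta(\beh_\epsilon(v))$ and $\zeta(\gamma(v))$ reduce to $[v]_{\Eq}$ using $\epsilon(v)=\eta(v)$ and naturality of $\eta$; for $ae$ both reduce to $[(a,\beh_\epsilon(e))]_{\Eq}$. For $\sigma(e_1,\dots,e_n)$ the key point is that $B_M(\beh_\epsilon)=M(\id+A\times\beh_\epsilon)$ is an $S$-algebra homomorphism --- this is exactly the naturality of the presentation transformation $\rho$ --- so it commutes with $\sigma$, matching the third clause of $\gamma$. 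These checks do not involve recursion.

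The recursion constructor is the crux. Unwinding as above, the goal $\beh_\epsilon(\mu v~e)=\gamma(\mu v~\beh_\epsilon(e))$ becomes, after applying $\zeta$, the statement that $\beh_\epsilon(\mu v~e)$ \emph{solves the same behavioural differential equation} that defines $\gamma(\mu v~\beh_\epsilon(e))$, namely $\zeta(x)=\zeta(\beh_\epsilon(e))\{x/\!/v\}$. Indeed $\zeta(\beh_\epsilon(\mu v~e))=B_M(\beh_\epsilon)(\epsilon(e)[\mu v~e/\!/v])$ by the defining clause of $\epsilon$, so what I need is a \emph{substitution lemma} asserting that $\beh_\epsilon$ transports guarded syntactic substitution to guarded behavioural substitution:
\[
B_M(\beh_\epsilon)\big(\epsilon(e)[g/\!/v]\big)=\zeta(\beh_\epsilon(e))\{\beh_\epsilon(g)/\!/v\}.
\]
Granting this with $g=\mu v~e$ shows $\zeta(\beh_\epsilon(\mu v~e))=\zeta(\beh_\epsilon(e))\{\beh_\epsilon(\mu v~e)/\!/v\}$; since the behavioural differential equation defining $\gamma(\mu v~\beh_\epsilon(e))$ has a \emph{unique} solution (this uniqueness is precisely what makes $\gamma$ well defined), the two behaviours coincide. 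Note that this is not an induction on recursion depth: uniqueness of the fixpoint closes the loop, sidestepping the apparent circularity of $\beh_\epsilon(\mu v~e)$ occurring on both sides.

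It remains to prove the substitution lemma, which I expect to be the main obstacle. Reducing along the inductive definition of $[g/\!/v]$ on $S^*(V+A\times\Exp)$, the $S$-operation clause follows from naturality of $\rho$, the variable clause from the already-established preservation of variables and of $0$, and the action clause reduces to the unguarded statement $\beh_\epsilon(f[g/v])=\beh_\epsilon(f)\{\beh_\epsilon(g)/v\}$: that $\beh_\epsilon$ commutes with ordinary free substitution. This last identity is genuinely coinductive --- both $\{s/v\}$ and $\beh_\epsilon$ are defined through finality --- and I would prove it by exhibiting the relation $\{(\beh_\epsilon(f[g/v]),\,\beh_\epsilon(f)\{\beh_\epsilon(g)/v\}) : f\in\Exp\}$ as a bisimulation on $(Z,\zeta)$, so that finality of $Z$ collapses it to the diagonal. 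The delicate bookkeeping lies in handling variable capture and the interaction of substitution with nested recursion operators while unfolding both sides through $\zeta$; once this coinductive step is in place, the remaining cases assemble mechanically and initiality delivers the theorem.
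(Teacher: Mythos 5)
Your proposal is correct in outline, but it runs dual to the paper's argument. The paper proves that the \emph{initial-algebra} map \(\sem-\) is a \(B_M\)-coalgebra homomorphism \((\Exp,\epsilon)\to(Z,\zeta)\) (by induction on \(e\), checking \(\zeta(\sem e)=B_M(\sem-)(\epsilon(e))\)) and then invokes \emph{finality} to conclude \(\sem-=\beh_\epsilon\); you instead propose to prove that the \emph{final-coalgebra} map \(\beh_\epsilon\) is a \(\Sigma_M\)-algebra homomorphism into \((Z,\gamma)\) and invoke \emph{initiality}. Both routes funnel into an essentially identical substitution lemma: the paper's \cref{lem:compositionality sublemma 2} states \(B_M(\sem-)([p]_{\Eq}[g/\!/v])=B_M(\sem-)([p]_{\Eq})\{\sem g/\!/v\}\), which is your displayed identity with \(\sem-\) in place of \(\beh_\epsilon\), and in both cases the action clause reduces to compatibility with plain substitution (\cref{lem:compositionality sublemma 1} in the paper: \(\sem{e[f/v]}=\sem e\{\sem f/v\}\)). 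The genuine difference is how that last identity is established. The paper proves it by induction on \(e\), leaning on a stack of appendix lemmas about behavioural substitution (\cref{lem:composing behavioural substitutions,lem:gdd sub of dead variables,lem:semantic fp distributes,lem:freeness and liveness}), since \(\sem-\) is defined by structural recursion; you propose a coinductive proof, exhibiting \(\{(\beh_\epsilon(f[g/v]),\beh_\epsilon(f)\{\beh_\epsilon(g)/v\})\}\) as a bisimulation. That is a legitimate alternative, and arguably more natural for a finally-defined map, but be aware that closing your relation under one-step unfolding still requires a \emph{syntactic} lemma relating \(\epsilon(f[g/v])\) to \(\epsilon(f)\) (the analogue of \cref{lem:guardingtheunguarded,lem:soundness sublemma 1}, but for unguarded substitution), whose recursion case carries the same capture-avoidance side conditions and the same interaction between \([g/v]\) and \([\mu u~{-}/\!/u]\) that occupy the paper's appendix. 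So the ``delicate bookkeeping'' you defer is not a residue but the bulk of the proof; what your route buys is that the outer shell of the argument (the four constructor cases and the appeal to uniqueness of solutions of the guarded behavioural differential equation defining \(\gamma(\mu v~t)\)) is cleanly organised around initiality rather than finality, and both shells are equally sound.
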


	In other words, the final semantics given with respect to operational rules in \(\Exp\) coincides with the initial semantics given with respect to the programming constructs in \(Z\). 
	Consequently, we write \(\sem-\) in place of \(\beh_\epsilon\) and simply refer to \(\sem e\) as the semantics of \(e\).

	\section{An Axiomatisation of Behavioural Equivalence}\label{sec:soundness and completeness}
	An important corollary of \cref{thm:compositionality} is that behavioural equivalence is a \(\Sigma_M\)-congruence on \((\Exp,\alpha)\), meaning that it is preserved by all the program constructs of \(\Sigma_M\).
	This opens the door to the possibility of deriving behavioural equivalences between process terms from just a few axioms.
	The purpose of this section is to show that all behavioural equivalences between process terms can be derived from the equations in \(\Eq\) presenting \((M,\eta,\mu)\) as well as three axiom schemas concerning the recursion operators. 
		
	The first two out of the three recursion axiom schemas are 
	\[\begin{aligned}
		\text{(\acro R1)} && \mu v~e &= e[\mu v~e/\!/v]
	\end{aligned}
	\qquad\qquad
	\begin{aligned}
	\prftree[l]{(\acro R2)\quad}{\text{\(w\) not free in \(e\)}}{\mu v~e = \mu w~(e[w/v])}
	\end{aligned}
	\]
	Above, \(e[\mu v~e/\!/v]\) is the expression obtained by replacing every guarded free occurrence of \(v\) in \(e\) with the expression \(\mu v~e\) and every unguarded occurrence of \(v\) in \(e\) with \(0\), in analogy with the operator on \(B_M\Exp\) of the same name.\footnote{Indeed, the identity \(\epsilon(e[\mu v~e/\!/v]) = \epsilon(e)[\mu v~e/\!/v]\) holds for all \(e \in \Exp\) and \(v \in V\)~\cref{lem:guardingtheunguarded}.}
	
	The axiom (\acro R1) essentially allows for a sort of guarded unravelling of recursive terms.
	This has the effect of identifying \(\mu v~v\) with \(0\), for example, as well as \(\mu v~av\) with \(a(\mu v~av)\). 
	The latter satisfies our intuition that \(\mu v~av\) should solve the recursive specification \(x = ax\) in the indeterminate \(x\). 
	The axiom (\acro R2) allows for recursion variables to be swapped for fresh variables. 
	This amounts to the observation that pairs of terms like \(\mu v~av\) and \(\mu w~aw\) should both denote the unique solution to \(x = ax\). 
		
	The third recursion axiom schema can be stated in the form of the proof rule
	\[\begin{aligned}
		\prftree[l]{(\acro R3)\quad}{g = e[g/v] \quad \text{\(v\) guarded in \(e\)}}{g = \mu v~e}
	\end{aligned}\]
	We let \(\mathsf R\) denote the set of equations derived from (\acro R1)-(\acro R3), and we let \(\equiv\) denote the smallest congruence in \((\Exp,\alpha)\) containing the set of equations derived from \(\Eq\) and \(\mathsf R\).
	When we refer to examples like \acro{ARB}, \acro{ACF}, \acro{APA}, and \acro{ANP}, we are often identifying each of these with their associated algebras \((\Expm, \hat\alpha)\) of process terms modulo \(\equiv\). 
	
	\paragraph*{Soundness} We would like to argue that \(\equiv\) is \emph{sound} with respect to behavioural equivalence, meaning that \(\sem e = \sem f\) whenever \(e \equiv f\). 
	This is indeed the case, and can be derived from the fact that the set of congruence classes of process terms itself forms a \(B_M\)-coalgebra.
	For an arbitrary function \(h : X \to Y\), call the set \(\ker(h) := \{(x,y) \mid h(x) = h(y)\}\) the \emph{kernel} of \(h\).
	
	\begin{restatable}{lemma}{soundnesstheorem}\label{thm:soundness}
		The congruence \(\equiv\) is the kernel of a coalgebra homomorphism.
	\end{restatable}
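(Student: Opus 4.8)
The plan is to realise \(\equiv\) as the kernel of the quotient map \(q : \Exp \to \Expm\) itself, by equipping \(\Expm\) with a \(B_M\)-coalgebra structure \(\hat\epsilon : \Expm \to B_M\Expm\) for which \(q\) is a homomorphism. Since \(q\) is the quotient by \(\equiv\) we have \(\ker(q) = \equiv\) by construction, so once \(q\) is shown to be a coalgebra homomorphism the lemma follows. As \(q\) is surjective, the existence of an \(\hat\epsilon\) with \(\hat\epsilon \circ q = B_M(q) \circ \epsilon\) is equivalent to the soundness (descent) condition
\[
	e \equiv f \quad\Longrightarrow\quad B_M(q)(\epsilon(e)) = B_M(q)(\epsilon(f)),
\]
that is, to the inclusion \(\equiv \,\subseteq\, \ker(B_M(q) \circ \epsilon)\). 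Establishing this inclusion is the whole content of the proof.

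I would prove the inclusion by induction on the derivation of \(e \equiv f\). Reflexivity, symmetry, and transitivity are immediate, and the congruence cases for the \(S\)-operations and for action prefixing follow directly from the defining equations \(\epsilon(\sigma(e_1,\dots,e_n)) = \sigma(\epsilon(e_1),\dots,\epsilon(e_n))\) and \(\epsilon(ae) = \eta((a,e))\) of \cref{fig: coalgebraic operational semantics}, together with the fact that \(B_M(q) = M(\id_V + \id_A\times q)\) is an \(S\)-algebra homomorphism between free algebras and so commutes with \(\eta\) and with the operations. The same observation disposes of the \(\Eq\)-axioms: both \(B_M\Exp\) and \(B_M\Expm\) are free, hence \(\Eq\)-satisfying, \(S\)-algebras, so any instance of an equation from \(\Eq\) is already identified after applying \(\epsilon\). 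The axiom (\acro R1) is essentially built into the semantics: \(\epsilon(\mu v~e) = \epsilon(e)[\mu v~e/\!/v]\) by definition, while \cref{lem:guardingtheunguarded} gives \(\epsilon(e[\mu v~e/\!/v]) = \epsilon(e)[\mu v~e/\!/v]\), so the two sides of (\acro R1) coincide already in \(B_M\Exp\).

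The remaining cases --- the congruence rule for \(\mu v\), and the axioms (\acro R2) and (\acro R3) --- are where the work lies, and all three rest on the guarded-substitution lemma for \(\epsilon\) (\cref{lem:guardedsubstitution}) and on the fact that \(\equiv\) is closed under substitution, i.e. \(g \equiv h\) and \(e \equiv e'\) imply \(e[g/v] \equiv e'[h/v]\). For the \(\mu v\)-congruence case, if \(e \equiv f\) then by the inductive hypothesis the top-level structures of \(\epsilon(\mu v~e) = \epsilon(e)[\mu v~e/\!/v]\) and \(\epsilon(\mu v~f) = \epsilon(f)[\mu v~f/\!/v]\) agree, and their matched continuations \(h[\mu v~e/v]\) and \(h'[\mu v~f/v]\) satisfy \(h \equiv h'\) (hypothesis) and \(\mu v~e \equiv \mu v~f\) (congruence), so substitution-closure identifies them modulo \(\equiv\) and they coincide after \(q\). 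For (\acro R3), with \(v\) guarded in \(e\) and premise \(g \equiv e[g/v]\), I would use the guarded form \(\epsilon(e[g/v]) = \epsilon(e)[g/\!/v]\) to reduce the goal \(B_M(q)(\epsilon(g)) = B_M(q)(\epsilon(\mu v~e))\), via the inductive hypothesis on the premise, to comparing \(\epsilon(e)[g/\!/v]\) with \(\epsilon(e)[\mu v~e/\!/v]\); since \(v\) is guarded these differ only inside continuations, where they read \(h[g/v]\) versus \(h[\mu v~e/v]\), and these are \(\equiv\)-equal by substitution-closure once \(g \equiv \mu v~e\). Case (\acro R2) is analogous, using a renaming lemma for \(\epsilon\) in place of guarded substitution.

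I expect (\acro R3) to be the main obstacle, and it is worth flagging that it only appears circular: verifying the descent of the pair \((g,\mu v~e)\) seems to require the very equality \(g \equiv \mu v~e\) that the rule produces. The resolution is that descent is a property proved about the already-fixed relation \(\equiv\); the equality \(g \equiv \mu v~e\) is a genuine member of \(\equiv\) (it is derivable, being the conclusion of the rule), and may therefore be used freely to reason about the identifications performed by \(q\), independently of the descent argument. The only genuinely new ingredients are thus the guarded-substitution lemma for \(\epsilon\) and substitution-closure of \(\equiv\), both of which are routine but must be set up carefully (see \cref{sec: substitution operators}). Finally, soundness with respect to behaviour follows at once: once \(q\) is a coalgebra homomorphism, finality gives \(\beh_\epsilon = \beh_{\hat\epsilon}\circ q\), so \(e \equiv f\) implies \(q(e) = q(f)\) and hence \(\sem e = \beh_\epsilon(e) = \beh_\epsilon(f) = \sem f\) by \cref{thm:compositionality}.
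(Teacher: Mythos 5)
Your proposal is correct and follows essentially the same route as the paper: equip \(\Expm\) with a coalgebra structure by verifying the descent condition \(\equiv\,\subseteq\ker(B_M([-]_\equiv)\circ\epsilon)\) by induction on derivations, with (\acro{R1}) absorbed into \(\epsilon\) via \cref{lem:guardingtheunguarded} and the (\acro{R2})/(\acro{R3}) cases handled by the guarded-substitution identity and by substitution-closure of \(\equiv\) lifted through \(B_M\) (the paper's \cref{lem:soundness sublemma 1,lem:soundness sublemma 2}), including your correct resolution of the apparent circularity in (\acro{R3}). Your explicit treatment of the \(\mu v\)-congruence case is a point in your favour, since the paper's own case analysis lists only the \(S\)- and \(A\)-congruence rules and silently omits it.
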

	
	We write \([-]_\equiv : \Exp \to \Expm\) for the quotient map and \((\Expm, \bar \epsilon)\) for the coalgebra structure on \(\Expm\) making \([-]_{\equiv}\) a coalgebra homomorphism (there is at most one such coalgebra structure~\cite{rutten2000universal}).
	As \(\sem-\) is the unique coalgebra homomorphism \((\Exp,\epsilon) \to (Z, \zeta)\), and because there is also a coalgebra homomorphism \(\beh_{\bar \epsilon} : (\Expm, \bar\epsilon) \to (Z,\zeta)\), it must be the case that \(\beh_{\bar \epsilon} \circ [-]_\equiv = \sem-\).
	By \cref{thm:soundness}, if \(e \equiv f\), then 
	\(
	\sem e = \beh_{\bar \epsilon}([e]_\equiv) = \beh_{\bar \epsilon}([f]_\equiv) = \sem f
	\).
	This establishes the following.
	
	\begin{theorem}[Soundness]\label{thm:soundness optics}
		Let \(e,f \in \Exp\).
		If \(e \equiv f\), then \(\sem e = \sem f\).
	\end{theorem}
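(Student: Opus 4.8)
The plan is to derive soundness as a purely formal consequence of \cref{thm:soundness} together with the finality of \((Z,\zeta)\); no new coinductive work is required once the quotient coalgebra is in hand. The guiding idea is that two process terms related by \(\equiv\) are identified by the quotient map into a \(B_M\)-coalgebra, and that \emph{any} coalgebra homomorphism out of \((\Exp,\epsilon)\) --- in particular the final semantics map \(\sem-\) --- must respect this identification, because the final semantics factors through the quotient.

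First I would invoke \cref{thm:soundness} to obtain that \(\equiv\) is the kernel of a coalgebra homomorphism. Concretely, this yields the quotient coalgebra \((\Expm,\bar\epsilon)\) together with the coalgebra homomorphism \([-]_\equiv : (\Exp,\epsilon) \to (\Expm,\bar\epsilon)\) whose kernel is exactly \(\equiv\), where \(\bar\epsilon\) is the unique structure making \([-]_\equiv\) a homomorphism. Next, finality of \((Z,\zeta)\) supplies a unique coalgebra homomorphism \(\beh_{\bar\epsilon} : (\Expm,\bar\epsilon) \to (Z,\zeta)\). The composite \(\beh_{\bar\epsilon} \circ [-]_\equiv\) is then again a coalgebra homomorphism \((\Exp,\epsilon) \to (Z,\zeta)\), since coalgebra homomorphisms compose. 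By \cref{thm:compositionality} the map \(\sem- = \beh_\epsilon\) is the unique such homomorphism, so \(\beh_{\bar\epsilon} \circ [-]_\equiv = \sem-\).

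To finish, suppose \(e \equiv f\). Then \([e]_\equiv = [f]_\equiv\), and applying \(\beh_{\bar\epsilon}\) gives \(\sem e = \beh_{\bar\epsilon}([e]_\equiv) = \beh_{\bar\epsilon}([f]_\equiv) = \sem f\), which is the desired conclusion.

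The substance of the whole argument lies entirely in \cref{thm:soundness}: the genuine obstacle is showing that the operational structure \(\epsilon\) descends along the quotient \([-]_\equiv\), i.e.\ that \(\equiv\) is compatible with \(\epsilon\). That is where one must check that each equation in \(\Eq\) and each recursion axiom \((\acro{R1})\)--\((\acro{R3})\) is respected by \(\bar\epsilon\), the nontrivial verification hidden inside \cref{thm:soundness}. Once that lemma is granted, the soundness theorem itself is a short diagram chase driven by the universal property of the final coalgebra, with no remaining obstacle.
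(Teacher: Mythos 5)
Your proposal is correct and follows essentially the same route as the paper: both derive soundness by factoring \(\sem-\) as \(\beh_{\bar\epsilon}\circ[-]_\equiv\) using \cref{thm:soundness}, the uniqueness of the homomorphism into the final coalgebra, and then reading off \(\sem e = \sem f\) from \([e]_\equiv = [f]_\equiv\). Your closing remark that the real work is hidden in \cref{thm:soundness} accurately reflects where the paper places the burden as well.
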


	Soundness allows us to derive at least a subset of all the behavioural equivalences between process terms from the axioms in \(\Eq\) and \(\mathsf R\).
	If our aspiration were simply to have a set of behaviour-preserving code-transformations, then we could simply stop here and be satisfied, since in principle we could see the axioms of \(\Eq\) and \(\mathsf R\) as rewrite rules that satisfy this purpose.

	\paragraph*{Completeness} 
	Aiming a bit higher than deriving only a subset of the behavioural equivalences between process terms, we move on to show the converse of \cref{thm:soundness optics}, that \(\equiv\) is \emph{complete} with respect to behavioural equivalence.
	We use \cite[Lemma 5.1]{schmid2021star}, which can be stated as follows.
	
	\begin{lemma}\label{lem:global approach}
		Let \(B : \Sets \to \Sets\) be an endofunctor with a final coalgebra \((Z,\zeta)\), and let \(\mathbf C\) be a class of \(B\)-coalgebras. 
		If \(\mathbf C\) is closed under homomorphic images\footnote{Ie., if \((X,\beta) \in \mathbf C\) and \(h : (X, \beta) \to (Y, \vartheta)\), then \((h[X],\vartheta|_{h[X]}) \in \mathbf C\).} and has a final object \((E,\varepsilon)\), then \(\beh_\varepsilon : E \to Z\) is injective. 
	\end{lemma}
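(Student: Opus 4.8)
The plan is to factor the unique homomorphism $\beh_\varepsilon : (E,\varepsilon) \to (Z,\zeta)$ through its image and then use finality of $(E,\varepsilon)$ inside $\mathbf C$ to show that this factorisation is a bijection onto the image. Writing $h := \beh_\varepsilon$ and $I := h[E]$, the standard (surjection, injection)-factorisation of coalgebra homomorphisms over $\Sets$~\cite{rutten2000universal} splits $h$ as a surjective homomorphism $e : (E,\varepsilon) \to (I, \zeta|_I)$ followed by the injective inclusion homomorphism $m : (I, \zeta|_I) \to (Z,\zeta)$, with $h = m \circ e$. That $\zeta|_I$ is a well-defined coalgebra structure making both $e$ and $m$ homomorphisms is exactly the homomorphic-image construction named in the lemma's footnote.

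Next I would invoke the two hypotheses. Since $(E,\varepsilon) \in \mathbf C$ and $\mathbf C$ is closed under homomorphic images, the image coalgebra $(I, \zeta|_I)$ also lies in $\mathbf C$. Because $(E,\varepsilon)$ is a final object of $\mathbf C$, there is a unique homomorphism $k : (I, \zeta|_I) \to (E,\varepsilon)$ in $\mathbf C$. Both $e$ and $k$ are morphisms of $\mathbf C$, since their sources and targets lie in $\mathbf C$, so the composite $k \circ e : (E,\varepsilon) \to (E,\varepsilon)$ is again a homomorphism in $\mathbf C$.

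Finality then does the decisive work: a final object admits exactly one endomorphism, namely the identity, so $\id_E$ and $k \circ e$ must coincide, giving $k \circ e = \id_E$. Hence $e$ has a left inverse and is injective; being also surjective by construction, it is a bijection. Therefore $\beh_\varepsilon = m \circ e$ is a composite of two injections and is itself injective, as required.

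I expect the only genuine subtlety to sit in the first step: justifying that the set-theoretic image $I = h[E]$ carries a coalgebra structure $\zeta|_I$ turning $e$ and $m$ into homomorphisms. This is the usual image factorisation of coalgebra morphisms over $\Sets$ and relies on $B$ preserving injections, which holds because every injection with nonempty domain is split (hence preserved by any functor), while the empty case makes injectivity of $\beh_\varepsilon$ vacuous. As the lemma's hypothesis already packages the image as a homomorphic image, this step is effectively granted, and everything else is the short finality argument above.
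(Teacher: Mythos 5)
Your argument is correct. Note that the paper does not prove this lemma itself---it imports it verbatim from \cite[Lemma 5.1]{schmid2021star}---but your proof is exactly the standard one that the cited source gives: factor \(\beh_\varepsilon\) through its image, use closure under homomorphic images to place the image coalgebra in \(\mathbf C\), and then use the fact that the final object of \(\mathbf C\) admits only the identity endomorphism to conclude that the surjective part of the factorisation is split mono, hence bijective. Your handling of the one genuine subtlety (that the image carries a subcoalgebra structure, which needs \(B\) to preserve injections with nonempty domain, the empty case being vacuous) is also right.
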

	
	A \emph{subcoalgebra} of a \(B\)-coalgebra \((X, \beta)\) is an injective map \(\iota : U \hookrightarrow X\) such that \(\beta|_{U}\) factors through \(B(\iota)\). 
	A \(B\)-coalgebra is \emph{locally finite} if every of its states is contained in (the image of) a finite subcoalgebra.
	We instantiate \cref{lem:global approach} in the case where \(B = B_M\), \((E,\varepsilon) = (\Expm, \bar\epsilon)\), and \(\mathbf C\) is the class of locally finite \(B_M\)-coalgebras.
	Completeness of \(\equiv\) with respect to behavioural equivalence follows shortly after, for if \(\sem e = \sem f\), then \(\beh_{\bar\epsilon}([e]_\equiv) = \beh_{\bar \epsilon}([f]_\equiv)\).
	By \cref{lem:global approach}, \(\beh_{\bar \epsilon}\) is injective, so \([e]_\equiv = [f]_\equiv\) or equivalently \(e \equiv f\). 
	To establish the converse of \cref{thm:soundness optics}, it suffices to show that our choices of \((E,\varepsilon)\) and \(\mathbf C\) satisfy the hypotheses of \cref{lem:global approach}.
	
	Before we continue, we would like to remind the reader of Assumption \ref{asm:finite arities}, that \(S\) only has operations of finite arity, as up until now it has not been strictly necessary.

	\begin{restatable}{lemma}{localfinitenesscondition}\label{lem:local finiteness}
		The coalgebra \((\Exp, \epsilon)\) is locally finite. 
	\end{restatable}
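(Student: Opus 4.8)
The plan is to reduce local finiteness to a reachability statement and then prove that by induction on term structure. For a set \(U \subseteq \Exp\), if every process term in \(U\) has all of its \(\epsilon\)-successors again in \(U\), then \(\epsilon\) restricts to a map \(U \to M(V + A\times U) = B_M U\), making \(U\) the carrier of a finite subcoalgebra whenever it is finite. Here I call \(e'\) an \emph{\(\epsilon\)-successor} of \(e\) when the pair \((a,e')\) occurs in (a canonical representative, read off from the rules of \cref{fig: coalgebraic operational semantics}, of) the congruence class \(\epsilon(e) \in M(V + A\times\Exp)\). By \cref{asm:finite arities} every \(S\)-term is a finite expression, so each \(\epsilon(e)\) mentions only finitely many such pairs; hence every state has finitely many immediate successors, and it suffices to show that the least \(\epsilon\)-closed set \(\mathrm{reach}(e)\) containing \(e\) is finite for each \(e\).

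First I would dispatch the easy cases by structural induction. For \(e = 0\) or \(e = v\) there are no successors, so \(\mathrm{reach}(e) = \{e\}\). For \(e = af\) the only successor is \(f\) (since \(\epsilon(af) = \eta((a,f))\)), whence \(\mathrm{reach}(af) = \{af\}\cup\mathrm{reach}(f)\), finite by the induction hypothesis. For \(e = \sigma(e_1,\dots,e_n)\), the equation \(\epsilon(\sigma(e_1,\dots,e_n)) = \sigma(\epsilon(e_1),\dots,\epsilon(e_n))\) shows the immediate successors of \(e\) lie among those of the \(e_i\), so \(\mathrm{reach}(e) \subseteq \{e\}\cup\bigcup_{i\le n}\mathrm{reach}(e_i)\), again finite by the induction hypothesis and finiteness of \(n\).

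The substantive case is recursion \(e = \mu v~f\). Writing \(g := \mu v~f\), I claim
\[
\mathrm{reach}(g) \subseteq \{g\}\cup\{\,h[g/v] : h\in\mathrm{reach}(f)\,\},
\]
which is finite because \(\mathrm{reach}(f)\) is finite by the induction hypothesis. Membership of \(g\) is immediate, and the defining identity \(\epsilon(g) = \epsilon(f)[g/\!/v]\) together with the action of the substitution operator on pairs shows that the immediate successors of \(g\) are exactly \(\{h[g/v] : h \in \mathrm{succ}(f)\} \subseteq \{h[g/v]: h\in\mathrm{reach}(f)\}\). To see that the right-hand side is \(\epsilon\)-closed I would establish a substitution lemma describing how \(\epsilon\) interacts with ordinary syntactic substitution \([g/v]\): for \(h \in \mathrm{reach}(f)\), every immediate successor of \(h[g/v]\) is of the form \(h'[g/v]\) with \(h'\) an immediate successor either of \(h\) or of \(g\). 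The first kind arises from the action-pairs already present in \(\epsilon(h)\), and the second from any free output in channel \(v\), which upon substitution of \(g\) re-enters \(f\); in both cases the resulting term is \(h'[g/v]\) with \(h' \in \mathrm{reach}(f)\).

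I expect the substitution lemma to be the main obstacle. Making it precise requires care on several fronts: because \(\epsilon(e)\) is a congruence class rather than a single \(S\)-term, ``immediate successor'' must be pinned down to the \(\Eq\)-minimal support, and one must check the commutation statement is invariant under \(\Eq\); one must keep track of the distinction between guarded substitution \([\,\cdot/\!/v]\) (used at the outermost unfolding of \(g\), replacing unguarded \(v\) by \(0\)) and ordinary substitution \([\,\cdot/v]\) (used inside reachable terms); and one must manage capture of bound variables, presumably invoking \cref{lem:guardedsubstitution} and the identity \(\epsilon(e[\mu v~e/\!/v]) = \epsilon(e)[\mu v~e/\!/v]\). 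Once this lemma is in hand, the three inductive cases assemble into finiteness of \(\mathrm{reach}(e)\), and hence the existence of a finite subcoalgebra containing each \(e\).
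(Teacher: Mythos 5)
Your proposal is correct and follows essentially the same route as the paper: a structural induction on the term in which the only substantive case is recursion, handled by observing that every state reachable from \(\mu v~f\) is a substitution instance \(h[\mu v~f/v]\) of a state \(h\) reachable from \(f\). The paper merely packages the bound differently, defining an explicit finite over-approximating set \(U(e)\) by structural recursion and checking that derivatives stay inside it, and it leaves the commutation of \(\epsilon\) with substitution at the same level of detail that you flag as the remaining obligation.
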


	The class of locally finite coalgebras is closed under homomorphic images: if \((X,\beta)\) is locally finite and \(h : (X,\beta) \to (Y,\vartheta)\) is a surjective homomorphism, then for any \(y \in Y\) and \(x \in X\) such that \(h(x) = y\), and for any finite subcoalgebra \(U\) of \((X,\beta)\) containing \(x\), \(h[U]\) is a finite subcoalgebra of \((Y,\vartheta)\) containing \(y\)~\cite{gumm1999elements}.
	Since \(y\) was arbitrary, it follows from \cref{thm:soundness} that \((\Expm, \bar \epsilon)\) is locally finite.
	 
	What remains to be seen among the hypotheses of \cref{lem:global approach} is that \((\Expm, \bar \epsilon)\) is the \emph{final} locally finite coalgebra, meaning that for any locally finite coalgebra \((X,\beta)\) there is a unique coalgebra homomorphism \((X,\beta) \to (\Expm,\bar\epsilon)\). 
	Every homomorphism from a locally finite coalgebra is the union of its restrictions to finite subcoalgebras, so it suffices to see that every finite subcoalgebra of \((X,\beta)\) admits a unique coalgebra homomorphism into \((\Expm, \bar \epsilon)\).
	
	To this end, we make use of an old idea, possibly originating in the work of Salomaa~\cite{salomaa1966two}.
	We associate with every finite coalgebra a certain system of equations whose  solutions (in \(\Expm\)) are in one-to-one correspondence with coalgebra homomorphisms into \((\Expm,\bar \epsilon)\).
	Essentially, if a system admits a unique solution, then its corresponding coalgebra admits a unique homomorphism into \((\Expm, \bar{\epsilon})\).
	This would then establish finality.
	 
	\begin{definition}
		A \emph{(finite) system of equations} is a sequence of the form \(\{x_i = e_i\}_{i \le n}\) where \(x_i \in V\) and \(e_i \in \Exp\) for \(i \le n\), and none of \(x_1,\dots, x_n\) appear as bound variables in any of \(e_1,\dots,e_n\). 
		A system of equations \(\{x_i=e_i\}_{i \le n}\) is \emph{guarded} if \(x_1,\dots,x_n\) are guarded in \(e_i\) for each \(i \le n\).
		A \emph{solution} to \(\{x_i=e_i\}_{i \le n}\) is a function \(\phi : \{x_1,\dots,x_n\} \to \Exp\) such that \[
		\phi(x_i) \equiv e_i[\phi(x_1)/x_1, \dots, \phi(x_n)/x_n]
		\] 
		for all \(i \le n\) and \(x_1,\dots, x_n\) do not appear free in \(\phi(x_i)\) for any \(i \le n\).  
	\end{definition}
	
	Every finite \(B_M\)-coalgebra \((X,\beta)\) gives rise to a guarded system of equations in the following way:
	for each \(p \in S^*(V + A\times X)\), define \(p^\dagger\) inductively as
	\[
		v^\dagger = v
		\qquad
		(a,e)^\dagger = ae
		\qquad 
		\sigma(f_1,\dots,f_n)^\dagger = \sigma(f_1^\dagger, \dots, f_n^\dagger)
	\]
	and for each \(x \in X\), let \(p_x\) be a representative of \(\beta(x)\).  
	The\footnote{Technically speaking, there could be many systems of equations associated with a given coalgebra. We say ``the'' system of equations because any two have the same set of solutions up to \(\Eq\).} system of equations \emph{associated with} \((X,\beta)\) is then defined to be \(\{x = p_x^\dagger\}_{x \in X}\). 
	We treat the elements of \(X\) as variables in these equations, and note that by definition every \(y \in X\) is guarded in \(p_x^\dagger\).
	
	\begin{restatable}{theorem}{solutionsarehomomorphisms}\label{lem:solutions are homoms}
		Let \((X,\beta)\) be a finite \(B_M\)-coalgebra and \(\phi : X \to \Exp\) a function.
		Then the composition \([-]_\equiv\circ\phi : X \to \Expm\) is a \(B_M\)-coalgebra homomorphism if and only if \(\phi\) is a solution to the system of equations associated with \((X,\beta)\).
	\end{restatable}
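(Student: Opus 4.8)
The plan is to reduce the coalgebra-homomorphism equation for $\psi:=[-]_\equiv\circ\phi$ to a one-step statement about $\phi$ and then compare it with the solution equation. Write $\theta(x):=p_x^\dagger[\phi(x_1)/x_1,\dots,\phi(x_n)/x_n]$ for the substituted right-hand side of the $x$-th equation, and recall from \cref{thm:soundness} that $[-]_\equiv:(\Exp,\epsilon)\to(\Expm,\bar\epsilon)$ is a coalgebra homomorphism, i.e.\ $\bar\epsilon\circ[-]_\equiv=B_M([-]_\equiv)\circ\epsilon$. The computational heart of the argument is the identity
\[
  \epsilon(\theta(x)) \;=\; B_M(\phi)(\beta(x)) \qquad (x\in X),
\]
where $B_M(\phi)=M(V+A\times\phi)$. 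I would prove it by induction on a representative $p_x\in S^*(V+A\times X)$ of $\beta(x)$: the leaf cases use $\epsilon(v)=\eta(v)$ and $\epsilon(a\,\phi(x'))=\eta((a,\phi(x')))$, and the case of an operation $\sigma$ uses that $\epsilon$ interprets $\sigma$ via $\rho$ and that $B_M(\phi)$, being of the form $M(-)$, is an $S$-algebra homomorphism and so commutes with $\rho$. The essential point is that every system variable occurs guarded (under an action prefix) in $p_x^\dagger$, so substituting $\phi$ can only reach a transition target, matching how $B_M(\phi)$ acts on successors. Granting the identity, functoriality of $B_M$ and the homomorphism property of $[-]_\equiv$ give $B_M(\psi)(\beta(x))=B_M([-]_\equiv)(\epsilon(\theta(x)))=\bar\epsilon([\theta(x)]_\equiv)$, while $\bar\epsilon(\psi(x))=\bar\epsilon([\phi(x)]_\equiv)$; hence $\psi$ is a homomorphism if and only if $\bar\epsilon([\phi(x)]_\equiv)=\bar\epsilon([\theta(x)]_\equiv)$ for every $x$. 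The backward implication is then immediate, since a solution satisfies $\phi(x)\equiv\theta(x)$, whence $[\phi(x)]_\equiv=[\theta(x)]_\equiv$ and we may apply $\bar\epsilon$.

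The forward implication is the main obstacle: $\bar\epsilon$ is not injective, so equality of one-step behaviours cannot simply be cancelled to recover $\phi(x)\equiv\theta(x)$. The device I would use is an unfolding lemma. Writing $\mathsf{unf}(e):=q^\dagger$ for a representative $q$ of $\epsilon(e)$ (well defined up to $\equiv$), I claim every term satisfies $e\equiv\mathsf{unf}(e)$, proved by induction on $e$. The only nontrivial case is $e=\mu v~f$: here $\epsilon(e)=\epsilon(f)[\mu v~f/\!/v]$ and, because applying $\dagger$ commutes with guarded substitution (that is, $(q[\mu v~f/\!/v])^\dagger=q^\dagger[\mu v~f/\!/v]$), we get $\mathsf{unf}(e)=\mathsf{unf}(f)[\mu v~f/\!/v]$. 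Now axiom (\acro R1) gives $\mu v~f\equiv f[\mu v~f/\!/v]$, and the induction hypothesis $f\equiv\mathsf{unf}(f)$ together with compatibility of guarded substitution with $\equiv$ yields $e\equiv f[\mu v~f/\!/v]\equiv\mathsf{unf}(f)[\mu v~f/\!/v]=\mathsf{unf}(e)$. Note that this case is exactly where the recursion axiom (\acro R1) is needed.

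To finish the forward implication I would observe that the assignment $[q]_\Eq\mapsto[q^\dagger]_\equiv$ is a well-defined map $B_M\Expm\to\Expm$: it is the free $S$-algebra extension of the leaf map, and it respects $\Eq$ because $\Expm$, being the quotient of $\Exp$ by a congruence containing the equations of $\Eq$, is an $S$-algebra satisfying $\Eq$. Applying this map to $\bar\epsilon([\phi(x)]_\equiv)=\bar\epsilon([\theta(x)]_\equiv)$ gives $\mathsf{unf}(\phi(x))\equiv\mathsf{unf}(\theta(x))$, and unfolding both ends produces
\[
  \phi(x)\;\equiv\;\mathsf{unf}(\phi(x))\;\equiv\;\mathsf{unf}(\theta(x))\;\equiv\;\theta(x),
\]
which is the solution equation. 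The remaining side condition, that no system variable is free in any $\phi(x_i)$, I would dispatch by the bookkeeping that the output variables appearing in $\beta(x)$ are genuine elements of $V$ kept disjoint from the reserved state variables $X$; this is routine but should be stated explicitly. I expect the unfolding lemma, and in particular making its $\mu$-case and the commutation of $\dagger$ with guarded substitution precise, to be the hardest part.
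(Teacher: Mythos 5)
Your proposal is correct and follows essentially the same route as the paper: your map $[q]_\Eq\mapsto[q^\dagger]_\equiv$ is exactly the paper's $(-)^\heartsuit$, your unfolding lemma $e\equiv\mathsf{unf}(e)$ (with its (\acro{R1})-based $\mu$-case) is exactly the paper's proof that $(-)^\heartsuit$ inverts $\bar\epsilon$, and both arguments reduce the homomorphism condition to the one-step identity via \cref{thm:soundness} and the guardedness of state variables in $p_x^\dagger$. The only cosmetic difference is that the paper packages the key step as ``$\bar\epsilon$ is a bijection'' up front, whereas you invoke only the left-inverse property where the forward implication needs it.
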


	As a direct consequence of \cref{lem:solutions are homoms}, we see that a finite subcoalgebra \(U \hookrightarrow \Exp\) of \((\Exp,\epsilon)\) is a solution to the system of equations associated with \((U, \epsilon|_U)\). 
	
	\begin{example}
		The system of equations associated with the automaton in \cref{eg:ACF} is the two-element set \(\{x_1 = a_1x_2 +_b u, x_2 = v +_b a_2x_1 \}\).
		The map \(\phi : \{x_1, x_2\} \to \Exp\) defined by \(\phi(x_1) = \mu w~(a_1(v +_b a_2w)+_b u)\) and \(\phi(x_2) = v +_b a_2~\phi(x_1)\) is a solution. 
	\end{example}

	\cref{lem:solutions are homoms} establishes a one-to-one correspondence between solutions to systems and coalgebra homomorphisms as follows.
	Say that two solutions \(\phi\) and \(\psi\) to a system \(\{x_i=e_i\}_{i \le n}\) are \emph{\(\equiv\)-equivalent} if \(\phi(x_i) \equiv \psi(x_i)\) for all \(i \le n\).
	Starting with a solution \(\phi : X \to \Exp\) to the system associated with \((X,\beta)\), we obtain the homomorphism \([-]_\equiv\circ \phi\) using \cref{lem:solutions are homoms}. 
	A pair of solutions \(\phi\) and \(\psi\) are \(\equiv\)-equivalent if and only if \([-]_\equiv\circ\phi = [-]_\equiv\circ \psi\), so up to \(\equiv\)-equivalence the correspondence \(\phi \mapsto [-]_\equiv\circ \phi\) is injective. 
	Going in the opposite direction and starting with a homomorphism \(\psi : (X,\beta) \to (\Expm, \bar{\epsilon})\), let \(e_x\) be a representative of \(\psi(x)\) for each \(x \in X\) and define \(\phi := \lambda x.e_x\).
	Then \(\phi\) is a solution to \((X,\beta)\), and \([-]_\equiv \circ \phi = \psi\).
	Thus, up to \(\equiv\)-equivalence, solutions to systems are in one-to-one correspondence with coalgebra homomorphisms into \((\Expm,\bar{\epsilon})\).
	
	Say that a system \emph{admits a unique solution up to \(\equiv\)} if it has a solution and any two solutions to the system are \(\equiv\)-equivalent. 
	Since, up to \(\equiv\)-equivalence, solutions to a system associated with a coalgebra \((X,\beta)\) are in one-to-one correspondence with coalgebra homomorphisms \((X,\beta) \to (\Expm,\bar{\epsilon})\), it suffices for the purposes of satisfying the hypotheses of \cref{lem:global approach} to show that every finite guarded system of equations admits a unique solution up to \(\equiv\). 
	The following theorem is a generalisation of~\cite[Theorem 5.7]{milner1984complete}. 
	
	\begin{restatable}{theorem}{milnerslemma}\label{lem:Milner's Lemma}
		Every finite guarded system of equations admits a unique solution up to \(\equiv\).
	\end{restatable}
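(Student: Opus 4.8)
The plan is to prove both existence and uniqueness by induction on the number \(n\) of equations, using \emph{Gaussian elimination}: at each step I eliminate one variable by solving for it with the recursion operator and substituting the result into the remaining equations. This is Milner's strategy; the only genuinely new element is that arbitrary \(S\)-operations \(\sigma\) must be handled uniformly, but these are absorbed by clause (v) of the definition of guardedness and cause no real difficulty.

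Base case (\(n = 1\)): I first isolate the crux as a self-contained lemma, namely that a single equation \(x = e\) with \(x\) guarded in \(e\) has \(\mu x~e\) as its unique solution up to \(\equiv\). Existence follows from \((\mathsf{R1})\): we have \(\mu x~e \equiv e[\mu x~e /\!/ x]\), and since \(x\) is guarded in \(e\) there are no unguarded free occurrences of \(x\) to delete, so \(e[\mu x~e /\!/ x] = e[\mu x~e / x]\) and \(\mu x~e\) satisfies the solution condition. Uniqueness is precisely \((\mathsf{R3})\): any \(g\) with \(g \equiv e[g/x]\) and \(x\) guarded in \(e\) satisfies \(g \equiv \mu x~e\).

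Inductive step: given a guarded system \(\{x_i = e_i\}_{i \le n}\), I set \(e_n^\star := \mu x_n~e_n\) and form the reduced system \(\{x_i = e_i[e_n^\star/x_n]\}_{i < n}\). I first check this is again an admissible guarded system: \(x_j\) (for \(j<n\)) is guarded in \(\mu x_n~e_n\) by clause (iv), since it is guarded in \(e_n\), and guardedness is preserved when a term in which \(x_j\) is guarded is substituted for \(x_n\) (all occurrences of \(x_n\) in \(e_i\) are themselves guarded); the side conditions on bound variables hold after \(\alpha\)-renaming via \((\mathsf{R2})\). The core is then a correspondence lemma: \(\phi\) solves the original system iff \(\phi\) restricted to \(\{x_1,\dots,x_{n-1}\}\) solves the reduced system and \(\phi(x_n) \equiv e_n^\star[\phi(x_1)/x_1,\dots,\phi(x_{n-1})/x_{n-1}]\). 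The forward direction uses \((\mathsf{R3})\): substituting the values for \(x_1,\dots,x_{n-1}\) into the \(n\)-th solution equation \(\phi(x_n) \equiv e_n[\phi(x_1)/x_1,\dots,\phi(x_n)/x_n]\) exhibits \(\phi(x_n)\) as a fixed point \(\phi(x_n) \equiv \hat e_n[\phi(x_n)/x_n]\) of the term \(\hat e_n := e_n[\phi(x_1)/x_1,\dots,\phi(x_{n-1})/x_{n-1}]\), in which \(x_n\) is still guarded, so \((\mathsf{R3})\) gives \(\phi(x_n) \equiv \mu x_n~\hat e_n \equiv e_n^\star[\phi(x_1)/x_1,\dots,\phi(x_{n-1})/x_{n-1}]\); a substitution-composition identity then rewrites the remaining solution equations as the reduced ones. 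Applying the induction hypothesis to the reduced system yields both existence (lift a solution \(\phi'\) to \(\phi\) by defining \(\phi(x_n) := e_n^\star[\phi'(x_1)/x_1,\dots,\phi'(x_{n-1})/x_{n-1}]\)) and uniqueness (two solutions restrict to \(\equiv\)-equivalent solutions of the reduced system, whence their \(x_n\)-components agree by congruence).

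The main obstacle I expect is the bookkeeping around substitution. I must commute \(\mu x_n\) past the simultaneous substitution \([\phi(x_1)/x_1,\dots,\phi(x_{n-1})/x_{n-1}]\), which is legitimate because \(x_n\) is not free in any \(\phi(x_i)\) (the solution condition forbids \(x_1,\dots,x_n\) from occurring free in the solution values), and I must compose substitutions, turning \(e_i[e_n^\star/x_n][\phi(x_1)/x_1,\dots]\) into \(e_i[\phi(x_1)/x_1,\dots,\,e_n^\star[\phi(x_1)/x_1,\dots]/x_n]\); both require care to avoid variable capture and are discharged by \(\alpha\)-renaming bound variables fresh via \((\mathsf{R2})\). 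Establishing that guardedness is stable under these substitutions, through the lemma that a variable \(v\) remains guarded when a term guarded in \(v\) is placed in a guarded position, or when substituting for a variable other than \(v\) that is not free in the substituend, is the technical heart and is proved by a routine structural induction on process terms.
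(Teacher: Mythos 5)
Your proposal is correct and follows essentially the same route as the paper's proof: eliminate \(x_n\) by setting \(e_n^\star = \mu x_n~e_n\), substitute into the remaining equations to form a smaller guarded system, apply the induction hypothesis, and use (\(\mathsf{R1}\)) for the existence check and (\(\mathsf{R3}\)) together with the substitution-composition and \(\mu\)-commutation identities for uniqueness. Your packaging of the inductive step as a single ``correspondence lemma'' between solutions of the original and reduced systems is a slightly cleaner organisation of the same verification the paper carries out directly, and the bookkeeping points you flag (freshness of \(x_n\) in the solution values, preservation of guardedness under substitution) are exactly the side conditions the paper relies on.
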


	The proof is a recreation of the one that appears under \cite[Theorem 5.7]{milner1984complete} with the more general context of our paper in mind.
	Remarkably, the essential details of the proof remain unchanged despite the jump in the level of abstraction between the two results. 
	
	Completeness of \(\equiv\) with respect to behavioural equivalence is now a direct consequence of \cref{lem:global approach,lem:solutions are homoms,lem:Milner's Lemma}.

	\begin{corollary}[Completeness]
		Let \(e,f \in \Exp\).
		If \(\sem e = \sem f\), then \(e \equiv f\). 
	\end{corollary}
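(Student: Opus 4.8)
The plan is to deduce completeness from the injectivity of the behaviour map \(\beh_{\bar\epsilon} : \Expm \to Z\), using the identity \(\beh_{\bar\epsilon} \circ [-]_\equiv = \sem{-}\) already recorded in the discussion of soundness. Granting this injectivity, the corollary is immediate: if \(\sem e = \sem f\) then \(\beh_{\bar\epsilon}([e]_\equiv) = \beh_{\bar\epsilon}([f]_\equiv)\), whence \([e]_\equiv = [f]_\equiv\), which is exactly \(e \equiv f\). So the entire weight of the proof rests on verifying that \(\beh_{\bar\epsilon}\) is injective, and for that I would invoke \cref{lem:global approach} with \(B := B_M\), with \(\mathbf C\) the class of locally finite \(B_M\)-coalgebras, and with candidate final object \((E, \varepsilon) := (\Expm, \bar\epsilon)\).

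Applying \cref{lem:global approach} requires three things: that \(B_M\) has a final coalgebra (already noted from general considerations), that \(\mathbf C\) is closed under homomorphic images, and that \((\Expm, \bar\epsilon)\) is a final object of \(\mathbf C\). Closure under homomorphic images is a standard fact about locally finite coalgebras, obtained by pushing a finite subcoalgebra containing a chosen preimage forward along a surjective homomorphism. Membership of \((\Expm, \bar\epsilon)\) in \(\mathbf C\) then follows by combining \cref{lem:local finiteness}, which gives local finiteness of \((\Exp, \epsilon)\), with \cref{thm:soundness}: the quotient \([-]_\equiv : (\Exp, \epsilon) \to (\Expm, \bar\epsilon)\) is a surjective coalgebra homomorphism, so local finiteness transfers to the quotient by the same closure property.

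The substantive step is finality of \((\Expm, \bar\epsilon)\), i.e. that every locally finite coalgebra \((X, \beta)\) admits a \emph{unique} homomorphism into \((\Expm, \bar\epsilon)\). Here I would first reduce to the finite case: any homomorphism out of a locally finite coalgebra is determined by, and can be assembled from, its restrictions to the finite subcoalgebras that cover \(X\), so existence and uniqueness for \((X, \beta)\) follow from existence and uniqueness for each finite subcoalgebra. For a finite coalgebra I would then pass through the system-of-equations translation: by \cref{lem:solutions are homoms}, homomorphisms \((X, \beta) \to (\Expm, \bar\epsilon)\) correspond, up to \(\equiv\)-equivalence, exactly to solutions of the guarded system associated with \((X, \beta)\), and by \cref{lem:Milner's Lemma} that guarded system has a unique solution up to \(\equiv\). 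Translating back yields exactly one homomorphism, establishing finality.

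The main obstacle is isolated entirely in \cref{lem:Milner's Lemma}: uniqueness of solutions to guarded systems is where the recursion axioms (\acro{R1})--(\acro{R3}), and in particular the fixed-point rule (\acro{R3}), do their real work, and it is the one ingredient that is not purely formal bookkeeping. Assuming that lemma, the only delicate point remaining in the corollary itself is the gluing argument in the reduction to finite coalgebras — ensuring that the locally defined homomorphisms coming from finite subcoalgebras are mutually compatible and that their union is again a homomorphism into \((\Expm, \bar\epsilon)\). Since any two finite subcoalgebras sit inside a common finite subcoalgebra and solutions are unique up to \(\equiv\), this compatibility is automatic.
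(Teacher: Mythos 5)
Your proposal is correct and follows the paper's argument essentially verbatim: completeness via injectivity of \(\beh_{\bar\epsilon}\), obtained from \cref{lem:global approach} applied to the class of locally finite \(B_M\)-coalgebras, with finality of \((\Expm,\bar\epsilon)\) reduced to finite subcoalgebras and then settled by the correspondence of \cref{lem:solutions are homoms} together with \cref{lem:Milner's Lemma}. No gaps; the identification of \cref{lem:Milner's Lemma} as the load-bearing step matches the paper's own emphasis.
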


	One way to interpret this theorem is that the algebra \((\Expm, \hat\alpha)\) of process terms modulo \(\equiv\) is isomorphic to a subalgebra of \((Z,\gamma)\), or dually \((\Expm,\bar\epsilon)\) is a subcoalgebra of \((Z,\zeta)\).
	It is in this sense that \acro{ARB}, \acro{ACF}, \acro{APA}, and \acro{ANP} are algebras of behaviours. 

	\section{Star Fragments}\label{sec:star fragments}
	In this section we study a fragment of our specification languages consisting of \emph{star expressions}. 
	These include primitive actions from \(A\), a form of sequential composition, and analogues of the Kleene star. 
	We do not aim to give a complete axiomatisation of behavioural equivalence for star expressions, as even in simple cases this is notoriously difficult.
	Nevertheless, we think it is valuable to extrapolate from known examples a speculative axiomatisation independent of the specification languages from previous sections.
	
	Fix an algebraic theory \((S,\Eq)\) and assume \(S\) consists of only constants and binary operations.
	Its \emph{star fragment} is the set \(\SExp\) of expressions given by the grammar
	\[
	e,e_i ::= c \mid \skiptt \mid a \mid e_1 +_\sigma e_2 \mid e_1e_2 \mid e^{(\sigma)}
	\]
	where \(a \in A\), \(c\) is a constant in \(S\), and \(\sigma\) is a binary \(S\)-operation.
	
	The star fragment of an algebraic theory is a fragment of \(\Exp\) in the sense that star expressions can be thought of as shorthands for process terms, as we explain next. In this translation, we fix a distinguished variable \(\unit \in V\), called the \emph{unit}, which will denote successful termination, and we also fix a  variable \(v\) distinct from the unit, which will appear in the fixpoint.  
	The translation of star expressions to process terms is defined to be
	\[
		\skiptt \mapsto \unit
		\qquad 
		a \mapsto a\unit
		\qquad 
		e_1 +_\sigma e_2 \mapsto \sigma(e_1, e_2)
		\qquad
		e_1e_2 \mapsto e_1[e_2/\unit]
		\qquad
		e^{(\sigma)} \mapsto \mu v~(e[v/\unit]+_\sigma \unit)
	\]
	Sequential composition of terms is associative and distributes over branching operations on the right-hand side\footnote{But not on the left-hand side! Observe the difference between the processes \(a(b + c)\) and \(ab + ac\) here.}: for any \(e_1,e_2,f \in \SExp\), \((e_1 +_\sigma e_2)f\) and \(e_1 f +_\sigma e_2f\) translate to the same process term. 
	Similarly, the intuitively correct identities \(\skiptt e = e = e\skiptt\) hold modulo translation, as well as the identity \(0e = 0\).\footnote{But not \(e0 = 0\)! See also the previous footnote.\label{fn:grat}} 
	
	\begin{figure}[!t]
		\centering
		\[\begin{aligned}	
			\ell(c) &= [c]_{\Eq}
			\\
			\ell(\skiptt) &= [\checkmark]_{\Eq}
			\\
			\ell(a) &= [(a,\skiptt)]_{\Eq}  
		\end{aligned}
		\qquad\qquad
		\begin{aligned}
			\ell(e_1 +_\sigma e_2) &= \sigma(\ell(e_1), \ell(e_2))
			\\
			\ell(ef) &= p(\ell(f),[(a_1,e_1f)]_{\Eq}, \dots, [(a_n,e_nf)]_{\Eq})
			\\
			\ell(e^{(\sigma)}) &= p([0]_{\Eq},[(a_1,e_1e^{(\sigma)})]_{\Eq}, \dots, [(a_n,e_ne^{(\sigma)})]_{\Eq}) +_\sigma [\checkmark]_{\Eq}
		\end{aligned}\]
		\caption{
			The coalgebra structure map \(\ell : \SExp \to L_M\SExp\). 
			Here, \(c\) is a constant of \(S\), \(\sigma\) is a binary operation of \(S\), \(a \in A\), and \(e,e_i \in \SExp\). In the last two equations, \(\ell(e) = [p(\checkmark,(a_1,e_1), \dots, (a_n,e_n))]_{\Eq}\) for some \(p \in S^*(\{\checkmark\} + A\times \SExp)\).
		\label{fig:operational semantics of SExp}}
	\end{figure} 

	The operational semantics for star expressions is given by an \(L_M\)-coalgebra \((\SExp, \ell)\) in \cref{fig:operational semantics of SExp}, where \(L_M = M(\{\checkmark\} + A \times \Id)\).
	Abstractly, the operational interpretation \(\ell(e)\) of a star expression \(e\) is obtained by translating \(e\) into a process term (also called \(e\)) and then identifying \(\unit\) with \(\checkmark\) in \(\epsilon(e)\).
	While the notation is somewhat opaque at this level of generality, in specific instances the map \(\ell\) amounts to a familiar transition structure.
	
	\begin{example}
		The star fragment of \acro{ACF} from \cref{eg:guarded semilattices} and \cref{eg:ACF} coincides with \acro{GKAT}, the algebra of programs introduced in \cite{kozentseng2008} and studied further in \cite{gkat,schmidkappekozensilva2021gkat}.
		Instantiating \(\SExp\) in this context reveals the syntax
		\[
			e_i ::= 0 \mid \skiptt \mid a \mid e_1 +_b e_2 \mid e_1 e_2 \mid e^{(b)}
		\]
		for \(b \subseteq \At\) and \(a \in A\). 
		This is nearly the syntax of \acro{GKAT}, the only difference being the presence of \(\skiptt\) and \(0\) instead of Boolean constants \(b \subseteq \At\).
		This is merely cosmetic, as we can just as well define \(b := \skiptt +_b 0\).
		
		In this context, \(M = (1 + \Id)^{\At}\), and so \(L_M \cong (2 + A \times \Id)^{\At}\), which is the precise coalgebraic signature of the automaton models of \acro{GKAT} expressions.
		It is readily checked that the operational semantics of \acro{GKAT} also coincides with the operational semantics of the star fragment of \acro{ACF} given above.
	\end{example}
	
	\begin{example}
		The star fragment of \acro{APA} from \cref{eg:pointed convex algebras} and \cref{eg:apa} is a subset of the calculus of programs introduced in \cite{bonchiSV2019tracesfor}, but with an iteration operator for each \(p \in [0,1]\).
		Instantiating \(\SExp\) in this context reveals the syntax
		\[
		e_i ::= 0 \mid \skiptt \mid a \mid e_1 +_p e_2 \mid e_1 e_2 \mid e^{(p)}
		\]
		for \(p \in [0,1]\) and \(a \in A\). 
		The process \(e^{(p)}\) can be thought of as a generalised Bernoulli process that runs \(e\) until it reaches \(\checkmark\) and then flips a weighted coin to decide whether to start from the beginning of \(e\) or to terminate successfully. 
	\end{example} 
	
	We now provide a candidate axiomatisation for the star fragment while leaving the question of completeness open. 	
	Say that a star expression \(e\) is \emph{guarded} if the unit is guarded in \(e\) as an expression in \(\Exp\).
	We define \(\Eq^*\) to be the theory consisting of \(\Eq\), the axiom schema
	\[
	\begin{aligned}
		(\mathsf{E^*1}) && 1e &= e1 = e  \\
		(\mathsf{E^*2}) && ce &= c 	  
	\end{aligned}
	\qquad\qquad
	\begin{aligned}
		(\mathsf{E^*3}) && e_1 (e_2e_3) &= (e_1e_2)e_3 \\
		(\mathsf{E^*4}) && (e +_\tau \skiptt)^{(\sigma)} &= (e +_\tau 0)^{(\sigma)}
	\end{aligned}
	\]
	and the inference rules
	\[
		\prftree[l]{(\(\mathsf{E^*5}\))\quad}{\text{\(e\) is guarded}}{e^{(\sigma)} = ee^{(\sigma)} +_\sigma \skiptt}
		\qquad\qquad 
		\prftree[l]{(\(\mathsf{E^*6}\))\quad}{g = eg +_\sigma f \qquad \text{\(e\) is guarded}}{g = e^{(\sigma)}f}
	\]
	In the specific cases where \(\Eq = \mathsf{SL}\) and \(\Eq = \mathsf{GS}\), \(\Eq^*\) is equivalent to the candidate axiomatisations for the star fragments of \acro{ARB}~\cite{milner1984complete} and \acro{ACF}~\cite{gkat,schmidkappekozensilva2021gkat}.\footnote{See \cref{sec:unguarded unravelling} for details.}
	
	There is a difference between our axioms and the axioms in \cite{milner1984complete,gkat,schmidkappekozensilva2021gkat}: instead of (\acro{E^*5}), all equations of the form \(e^{(\sigma)} = ee^{(\sigma)} +_\sigma 1\) appear in loc cit, even those where \(e\) is unguarded.
	We adopt (\acro{E^*5}) instead because the unrestricted version of (\acro{E^*5}) fails to be sound for the star expressions of \acro{APA}.
	For example, if \(e = \skiptt +_{\frac 13} a\), then \(ee^{(1/2)} +_{\frac 12} \skiptt \out{7/{12}} \checkmark\) while \(e^{(\frac 12)} \out{1/2} \checkmark\).
	Secondly, the unrestricted axioms can be derived from (\acro{E^*5}) in the cases of Milner's star fragment and the star fragment of \acro{ACF}.\footnote{See \cref{sec:unguarded unravelling}.}
	
	We are confident that a completeness result can be obtained in several instances of the framework for the axiomatisation we have suggested above.
	However, in several cases this cannot happen.
	For example, there is no way to derive the identity \(((a +_{\frac12} \skiptt) + b)^* = ((a +_{\frac12} 0) + b)^*\) from \(\mathsf{CS}^*\) (see \cref{eg:pointed convex semilattices}) despite these expressions being behaviourally equivalent. 
	What is likely missing from \(\mathsf{CS}\) is a number of axioms that would allow \(\skiptt\) to be moved to the top level of every \(S\)-term (and then replaced by \(0\) using (\acro{E^*4})). 
	Algebraic theories where this is doable are called \textit{skew-associative}, which we define formally as follows.
	
	\begin{definition}\label{def:skew-associative}
		 An algebraic theory \((S,\Eq)\) consisting of constants and binary operations is called \emph{skew-associative} if for any pair of binary operations \(\sigma_1,\tau_1\), there is a pair of binary operations \(\sigma_2, \tau_2\) such that \(\sigma_1(x,\tau_1(y, z)) = \tau_2(\sigma_2(x, y), z)\) appears in \(\Eq\).
	\end{definition}
	
	Many of the examples we care about are skew-associative, including the theories of semilattices, guarded semilattices, and convex algebras.
	
	\begin{question}
		Assume \((S,\Eq)\) is a skew-associative algebraic theory.
	 	If \(e\) and \(f\) are behaviourally equivalent star expressions, is it true that \(\mathsf{E^*} \vdash e = f\)? 
	\end{question}

	\section{Related Work}\label{sec:related work}
	Our framework can be seen as a generalisation of Milner's \acro{ARB}~\cite{milner1984complete} that reaches beyond nondeterministic choice and covers several other process algebras already identified in the literature.	
	For example, instantiating our framework in the theory of pointed convex algebras produces the algebra we have called \acro{APA} (see \cref{eg:apa}), which only differs from the algebra \acro{PE} of Stark and Smolka~\cite{stark1999complete} in the axiom (\acro{R1}).
	In loc cit, the requirement that the variable be guarded in the recursed expression is absent because recursion is computed as a least fixed point in their semantics.
	This is not how we interpret recursion.
	We have included the guardedness requirement because it is necessary for the soundness of the axiom in our semantics: for example, where \(e = u +_{\frac12} v\), we have \(\mu v~e \out{1/2} u\) and \(e[\mu v~e/v] \out{3/4} u\).
	In contrast, both \(\mu v~e\) and \(e[\mu v~e/v]\) exit in \(u\) with probability \(1\) in \cite{stark1999complete}.
	
	For another example, instantiating our framework in the theory \acro{CS} of pointed convex semilattices gives \acro{ANP} (see \cref{eg:pointed convex semilattices}), which differs from the calculus of Mislove, Ouaknine, and Worrell~\cite{mislove03axioms} on two points.
	Firstly, their axiomatisation contains an unguarded version of (\acro{R1}), like in \cite{stark1999complete}.
	Secondly, the underlying algebraic theory of \cite{mislove03axioms} corresponds to \acro{CS} extended with the axiom \(x +_p 0 = 0\). 
	The resulting theory is known in the literature as that of \emph{convex semilattices with top}~\cite{bonchiSV2019tracesfor}.
	
	Star expressions for non-deterministic processes appeared in the work of Milner \cite{milner1984complete} as a fragment of \acro{ARB} and can be thought of as a bisimulation-focused analogue of Kleene's regular expressions for NFAs. 
	While the syntaxes of Milner's star expressions and Kleene's regular expressions are the same, there are several important differences between their interpretations. 
	For example, sequential composition is interpreted as the variable substitution \(ef := e[f/\unit]\) in Milner's paper, which fails to distribute over \(+\) on the left. 
	A notable insight from \cite{milner1984complete} is that, despite these differences, an iteration operator \((-)^*\) can be defined for Milner's star expressions that satisfies many of the same identities as the Kleene star. 
	Given a variable \(v\) distinct from the unit and a process term \(e\) of \acro{ARB} in which at most the unit is free,
	\[
		e^* = \mu v~(e[v/\unit] + \unit)
	\]
	defines the iteration operator in Milner's star fragment of \acro{ARB}.
	In \Cref{sec:star fragments}, we generalised this construction of Milner for the more general process types that we considered in this paper. Our proposed axiomatization is also inspired by Milner's work. We expect completeness of our general calculus will be a hard problem, as completeness in the instantiation to \acro{ARB} was open for decades despite the extensive literature on the subject~\cite{fokkinkzantema1994basic,fokkink1997perpetual,fokkinkzantema1997termination,baetenCG2006starheight,grabmayerfokkink2020complete}.

	There are clear parallels between our work and the thesis of Silva~\cite{silva2010kleene}, in which a family of calculi is introduced that includes one-exit versions of \acro{ARB}, \acro{ACF}, and \acro{APA} (see \cref{eg:arb,eg:ACF,eg:apa}). 
	The main difference is that our framework is parametric on a finitary monad on \(\Sets\) whereas Silva's is centered around one particular theory (semilattices). 
	However, her work considers general polynomial functors on \(\Sets\), which we have not yet done in our paper. 
	We could achieve a similar level of generality by replacing \(A\times\Id\) in our signatures \(\Sigma_M\), \(B_M\), and \(L_M\) with an arbitrary polynomial functor.
	
	Our results are also in the same vein as the work of Myers on coalgebraic expressions~\cite{myers2009coalgebraic}.
	Coalgebraic expressions generalise the calculi of \cite{silva2010kleene} to arbitrary finitary coalgebraic signatures on a variety of algebras, and furthermore have totally defined recursion operators similar to ours.
	However, the focus of the framework of coalgebraic expressions is on language semantics, achieved by lifting the coalgebraic signature to a variety. This distinguishes the framework from our approach: we focus on bisimulation semantics.
	This focus is also the reason we interpret our \(B_M\)-coalgebras in \(\Sets\) and not in the Kleisli category
	of the monad $M$, as is done in~\cite{hasuoJS2007generic} to capture trace semantics of coalgebras.
	
	Finally, there is also a notable connection to the iterative theories of Elgot~\cite{elgot1975monadic,bloomE1976iterative,bloomE1988varieties,nelson1983iterative}. 
	\Cref{lem:Milner's Lemma} in particular implies that our process algebras are examples of iterative algebras.

\section{Future Work}\label{sec:conclusion}
	In this paper, we introduced a family of process types whose branching structure is determined by an algebraic theory.
	We provided each process type with a fully expressive specification language paired with a sound and complete axiomatisation of behavioural equivalence.

	There are several instantiations of our framework that we have not yet explored and are of interest.  
	For example, processes with multiset branching given by the theory of commutative monoids produces nondeterministic processes with a simplistic notion of resources. Another example is nondeterministic weighted processes with branching captured by the monad arising from the weak distributive law between the free semimodule and powerset monads~\cite{bonchi2021combining}.
	Yet another instantiation arises from the theory of monoids (presenting the list monad), which produces processes related to breadth-first search algorithms.  

	Star fragments offer a uniform construction of Kleene-like algebras for a variety of paradigms of computing.
	However, our framework does not suggest an axiomatisation of the star fragment that combines nondeterministic and probabilistic choice, as the theory \acro{CS} is not skew-associative (see \cref{def:skew-associative}).
	We would like to expand our framework to include this fragment as it provides an interesting but nonstandard interpretation of a part of the language ProbNetKAT used to verify probabilistic networks~\cite{foster16probabilistic}. 

	We would also like to investigate the question at the end of \cref{sec:star fragments} of whether \acro{E^*} is complete for skew-associative theories. 
	In particular, we believe that a connection can be made to the work of Grabmayer and Fokkink~\cite{grabmayerfokkink2020complete} on LLEE-charts, which provides a completeness theorem for the so-called 1-free expressions of the star fragment of \acro{ARB}.
	Our process algebras also have uniformly defined 1-free star fragments, and it is not difficult to give 1-free versions of the axiomatisation \acro{E^*}.
	We intend to suitably generalise LLEE-charts to arbitrary skew-associative theories and prove completeness theorems for 1-free star fragments.

	Finally, we would like to know whether our operational semantics for process terms is an instance of the mathematical operational semantics introduced by Turi and Plotkin~\cite{turiplotkin1997operational}.

\printbibliography

\appendix

\section{Notes on Guarded Semilattices} \label{sec: guarded semilattices}

This appendix is here mainly to give an account of what is essentially the algebra of \texttt{if-then-else} statements of a propositional imperative programming language.
For a fixed finite set \(\At\), call a structure \((X,0,\{+_b\}_{b \subseteq \At})\) consisting of an underlying set \(X\), a constant \(0 \in X\), and a binary operation \(+_b : X^2 \to X\) for each \(b \subseteq \At\) a \emph{guarded semilattice} if \((X,0,\{+_b\}_{b\subseteq\At})\) satisfies all instances of the equations (\acro{GS}1)-(\acro{GS}4) from \cref{eg:guarded semilattices}.
Guarded semilattices are examples of the so-called \emph{McCarthy algebras} of \cite{manes1991ifthenelse}, and conversely every McCarthy algebra in finitely many propositions is a guarded semilattice.
We change the name to emphasise the presence of finite Boolean guards, as well as the inherent order structure of guarded semilattices that will be expanded upon in future work.

Let \(\mathbf{GS}\) denote the category of guarded semilattices and their homomorphisms.
We define the functor \(F : \Sets \to \mathbf{GS}\) such that \(FX = ((1 + X)^{\At}, 0, \{+_b\}_{b\subseteq \At})\) for every set \(X\), where \(0 := \lambda \xi.0\) and 
\[
	h +_b k := \lambda\xi.\begin{cases}
			h(\xi) &\xi \in b \\
			k(\xi) &\xi \not\in b
		\end{cases}
\]
and given a function \(f : X \to Y\), \(F(f)(h) = (1 + f) \circ h\).
It is straightforward to verify that \(FX\) is indeed a guarded semilattice for any \(X\).

\begin{lemma} \label{lem:gsl embedding}
	Every guarded semilattice can be embedded into a guarded semilattice of the form \(FX\) for some set \(X\).
\end{lemma}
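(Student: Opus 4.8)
The plan is to exhibit \(FX=((1+X)^{\At},0,\{+_b\})\) as an \(\At\)-indexed power of ``projection'' guarded semilattices and to embed an arbitrary guarded semilattice \(G\) coordinate by coordinate. For each atom \(\xi\in\At\), equip the pointed set \(1+X\) with the operations \(a+_b c=a\) if \(\xi\in b\) and \(a+_b c=c\) if \(\xi\notin b\); a direct check against \((\mathsf{GS1})\)–\((\mathsf{GS4})\) shows this is a guarded semilattice, and that the map \(h\mapsto (h(\xi))_{\xi\in\At}\) identifies \(FX\) with \(\prod_{\xi\in\At}(1+X)\) carrying this \(\xi\)-projection structure on the \(\xi\)-th factor and coordinatewise operations. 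Hence it suffices to produce, for one fixed set \(X\) and each \(\xi\), a homomorphism \(\phi_\xi\colon G\to 1+X\) into the \(\xi\)-projection algebra such that the family \((\phi_\xi)_{\xi\in\At}\) is jointly injective; the induced map \(g\mapsto(\phi_\xi(g))_\xi\) is then the desired embedding \(G\hookrightarrow FX\).

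To build the \(\phi_\xi\), I would isolate the ``value of \(g\) at \(\xi\)'' by the definable unary operation \(r_\xi(g):=g+_{\{\xi\}}0\), which in \(FX\) returns \(g\) on the atom \(\xi\) and \(0\) elsewhere. The key lemma is the equational identity
\[
r_\xi(g_1+_b g_2)=\begin{cases} r_\xi(g_1) & \xi\in b\\ r_\xi(g_2) & \xi\notin b\end{cases}
\]
together with \(r_\xi(0)=0\) (immediate from \((\mathsf{GS1})\)). I then take \(X\) to be the set of \(\xi\)-tagged nonzero restrictions \(\{(\xi,r_\xi(g)):g\in G,\ \xi\in\At,\ r_\xi(g)\neq 0\}\) and set \(\phi_\xi(g)=r_\xi(g)\), read as the basepoint of \(1+X\) when \(r_\xi(g)=0\). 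The displayed identity is exactly the assertion that \(\phi_\xi\) is a homomorphism into the \(\xi\)-projection algebra, and \(r_\xi(0)=0\) handles the constant; the tagging is what allows a single \(X\) to serve all \(\At\) coordinates simultaneously. The identity itself follows by unfolding \(r_\xi\) and applying \((\mathsf{GS4})\) with \(c=\{\xi\}\) (so that \(b\cap\{\xi\}\) is \(\{\xi\}\) or \(\emptyset\) according to whether \(\xi\in b\)), followed by absorption steps from \((\mathsf{GS1})\) and \((\mathsf{GS3})\).

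Injectivity is where the finiteness of \(\At\) (fixed throughout, see \cref{eg:guarded semilattices}) enters. First I would reduce pointwise equality of restrictions to a guard condition: using \((\mathsf{GS4})\) and \((\mathsf{GS1})\) one shows \(g+_{\{\xi\}}g'=r_\xi(g)+_{\{\xi\}}g'\), so that \(r_\xi(g)=r_\xi(g')\) yields \(g+_{\{\xi\}}g'=g'\) (and symmetrically \(g'+_{\{\xi\}}g=g\)). Next, writing \(\At=\{\xi_1,\dots,\xi_m\}\), I would combine these singleton guards into the full guard by induction on \(m\), using the guard-union law \(u+_{\{\xi\}\cup c}w=u+_{\{\xi\}}(u+_c w)\) for \(\xi\notin c\) (a consequence of \((\mathsf{GS3})\) and \((\mathsf{GS4})\)), to obtain \(g+_{\At}g'=g'\); but \(g+_{\At}g'=g\) by \((\mathsf{GS2})\), forcing \(g=g'\). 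Thus the combined map is injective, completing the embedding.

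I expect the main obstacle to be purely equational rather than conceptual: deriving the component identity for \(r_\xi\), the reduction \(g+_{\{\xi\}}g'=r_\xi(g)+_{\{\xi\}}g'\), and the guard-union law from \((\mathsf{GS1})\)–\((\mathsf{GS4})\), each of which requires careful repeated use of the skew-associativity law \((\mathsf{GS4})\) and the complementation law \((\mathsf{GS3})\). A secondary point of care is the bookkeeping that lets the single set \(X\) of tagged restrictions index all \(\At\) coordinates at once, and checking that the basepoint-preserving inclusions \(\phi_\xi\) assemble into a genuine \(\mathbf{GS}\)-homomorphism into \(FX\).
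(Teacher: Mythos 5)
Your proposal is correct and follows essentially the same route as the paper: your restriction map \(r_\xi(g)=g+_{\{\xi\}}0\) is exactly the paper's embedding \(\eta(g)(\xi)=g+_\xi 0\), your key identity \(r_\xi(g_1+_b g_2)=r_\xi(g_1)\) or \(r_\xi(g_2)\) is the paper's \((\mathsf{GS4})\) case-split computation, and your injectivity argument (assembling the singleton guards over all of \(\At\) and concluding via \((\mathsf{GS2})\)) matches the paper's reconstruction of \(g\) as the nested sum \(g+_{\xi_n}(g+_{\xi_{n-1}}(\dots(g+_{\xi_1}0)))\). The remaining differences — presenting \(FX\) as a product of \(\xi\)-projection algebras and re-tagging the carrier, where the paper simply takes \(X\) to be the underlying set of the given guarded semilattice — are cosmetic.
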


\begin{proof}
	Let \((X,0, \{+_b\}_{b\subseteq \At})\) be a guarded semilattice and define the map \(\eta : X \to (1 + X)^{\At}\) by \(\eta(x) = \lambda \xi.x+_\xi 0\).
	Let \(\At = \{\xi_1, \dots, \xi_n\}\).
	It follows from (\acro{GS}4) and (\acro{GS}2) that for any \(x \in X\),
	\[
		x = x +_{\xi_n} (x +_{\xi_{n-1}} (\dots(x +_{\xi_1} 0)))
	\]
	Whence, \(\eta\) is clearly injective, for if \(x +_\xi 0 = y +_\xi 0\) for all \(\xi \in \At\), one can show by induction on \(n\) that\[
	 x +_{\xi_n} (x +_{\xi_{n-1}} (\dots(x +_{\xi_1} 0))) = y +_{\xi_n} (y +_{\xi_{n-1}} (\dots(y +_{\xi_1} 0)))
	\]
	Hence, it suffices to see that \(\eta\) is an algebra homomorphism.
	Given \(x,y \in X\) and \(b \subseteq \At\), we have
	\begin{align*}
		\eta({x +_b y})(\xi) 
		&= (x +_b y) +_\xi 0 
		= x +_{b \cap \xi} (y +_\xi 0)
		= \begin{cases}
			x +_\xi (y +_\xi 0) &\xi \in b \\
			x +_0 (y +_\xi 0) &\xi \not\in b
		\end{cases}\\
		&= \begin{cases}
			x +_\xi \bar \xi(y +_\xi 0) &\xi \in b \\
			y +_\xi 0 &\xi \not\in b
		\end{cases}\quad
		= \begin{cases}
			x +_\xi 0 &\xi \in b \\
			y +_\xi 0 &\xi \not\in b
		\end{cases}
		\quad = (\eta({x}) +_b \eta({y}))(\xi)
	\end{align*}
\end{proof}

As a corollary, we obtain the following theorem, which essentially states that the theory of guarded semilattices presents the monad \((1 + \Id)^{\At}\).

\begin{theorem}\label{thm:free guarded semi}
	Where \(U : \mathbf{GSL} \to \Sets\) is the forgetful functor taking an algebra to its underlying set, there exists an adjunction \(F \dashv U\).
\end{theorem}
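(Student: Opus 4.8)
The plan is to show that $(FX,\eta_X)$ is the free guarded semilattice on $X$, where the unit $\eta_X : X \to UFX = (1+X)^{\At}$ sends $x$ to the constant map $\lambda\xi.\mathrm{inr}(x)$; the adjunction $F \dashv U$ is then exactly this universal property. Concretely, I would fix a guarded semilattice $(Y,0,\{+_b\}_{b\subseteq\At})$ together with a function $f : X \to UY$ and produce a \emph{unique} homomorphism $\hat f : FX \to Y$ with $U\hat f \circ \eta_X = f$. Naturality of the resulting bijection $\mathbf{GSL}(FX,Y) \cong \Sets(X,UY)$, $k \mapsto Uk\circ\eta_X$, is then routine once existence and uniqueness are in place.

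For existence, I fix an enumeration $\At = \{\xi_1,\dots,\xi_n\}$ and let $f^\flat : 1+X \to Y$ send the point of $1$ to $0$ and $\mathrm{inr}(x)$ to $f(x)$. I would define $\hat f(g)$, for $g \in (1+X)^{\At}$, by the iterated guarded sum $f^\flat(g(\xi_n)) +_{\xi_n} (f^\flat(g(\xi_{n-1})) +_{\xi_{n-1}} (\cdots +_{\xi_1} 0))$ computed in $Y$, where $+_{\xi}$ abbreviates $+_{\{\xi\}}$. That $U\hat f \circ \eta_X = f$ then follows immediately from the identity $y = y +_{\xi_n}(y +_{\xi_{n-1}}(\cdots(y +_{\xi_1} 0)))$, which holds in every guarded semilattice and is precisely the computation carried out in the proof of \cref{lem:gsl embedding}; this is the point at which the present theorem is genuinely a corollary of that lemma.

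For uniqueness, I would record the analogous decomposition inside $FX$ itself: a direct evaluation at each atom shows that every $g \in (1+X)^{\At}$ equals $\tilde g(\xi_n) +_{\xi_n}(\cdots(\tilde g(\xi_1) +_{\xi_1} 0_{FX}))$, where $\tilde g(\xi)$ is $\eta_X(x)$ when $g(\xi) = \mathrm{inr}(x)$ and $0_{FX}$ otherwise. Hence $FX$ is generated by $\eta_X(X)\cup\{0\}$ under the guarded operations, so any homomorphism out of $FX$ is determined by its restriction to this set, and thus by $f$; this forces any extension of $f$ to coincide with $\hat f$.

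The main obstacle is verifying that $\hat f$ is actually a homomorphism, i.e.\ that $\hat f(g +_b g') = \hat f(g) +_b \hat f(g')$ for all $b\subseteq\At$. Unwinding both sides through the iterated-sum definition reduces this to a single combinatorial identity about guarded sums: that $(A_n +_{\xi_n}(\cdots +_{\xi_1} 0)) +_b (B_n +_{\xi_n}(\cdots +_{\xi_1} 0))$ selects $A_k$ at coordinate $\xi_k$ when $\xi_k \in b$ and $B_k$ otherwise. I would prove this by induction on $n$, pushing the outer $+_b$ inward using the associativity-distributivity axiom $(\mathsf{GS4})$ together with $(\mathsf{GS1})$--$(\mathsf{GS3})$ to handle the two cases $\xi_n \in b$ and $\xi_n \notin b$. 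This axiomatic bookkeeping is the only genuinely laborious step; everything else is forced by \cref{lem:gsl embedding} and the definition of $F$.
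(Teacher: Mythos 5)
Your proposal is correct, but it proves the theorem by a different (and more explicit) route than the paper. The paper presents the adjunction in unit--counit form: it points at the embedding map from \cref{lem:gsl embedding} for the unit, writes down the counit \(\varepsilon(h) = h(\xi_n) +_{\xi_n}(\cdots(h(\xi_1)+_{\xi_1}0))\), observes that \(\Delta^* = U(\varepsilon_F)\), and leaves the triangle identities and the verification that \(\varepsilon\) is a homomorphism implicit. You instead establish the universal property directly --- \((FX,\eta_X)\) is the free guarded semilattice on \(X\) --- which yields the hom-set bijection form of the adjunction; your \(\hat f\) is exactly \(\varepsilon_Y\) precomposed with the pointwise application of \(f^\flat\), so the two proofs compute the same transpose. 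Two remarks. First, your identification of the unit as the constant map \(x\mapsto\lambda\xi.\mathrm{inr}(x)\) (the monad unit) is the correct reading; the paper's pointer to the \(\eta\) of \cref{lem:gsl embedding} is loose, since that map is defined using the operations of a guarded semilattice and does not literally typecheck on a bare set. Second, the one step you flag as laborious --- that \(\hat f(g+_bg')=\hat f(g)+_b\hat f(g')\) --- does go through by the induction you describe: using \((\mathsf{GS4})\) to peel off the head of the left summand, \((\mathsf{GS3})\) together with \((\mathsf{GS2})\) to discard or retain the head of the right summand according to whether \(\xi_m\in b\), and \((\mathsf{GS1})\) at the base. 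But you can avoid this bookkeeping entirely by using \cref{lem:gsl embedding} semantically: the required identity about iterated guarded sums holds in every algebra of the form \(FZ\) by coordinatewise evaluation, hence in the arbitrary codomain \(Y\) via its embedding into \(FUY\). That shortcut is arguably the real sense in which the theorem is a corollary of the lemma, and it buys you a proof with no equational induction at all; your direct derivation buys a self-contained syntactic argument. Do also record the trivial checks that \(\hat f\) preserves the constant \(0\) and that the bijection is natural, which you correctly describe as routine.
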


The unit of the adjunction \(\Id \Rightarrow UF\) is the map \(\eta\) defined in the proof of \cref{lem:gsl embedding}, and where \(\At = \{\xi_1, \dots, \xi_n\}\), the counit \(\varepsilon : FU \Rightarrow \Id\) is given by 
\[
	\varepsilon(h) \mapsto h(\xi_n) +_{\xi_n} (h(\xi_{n-1}) +_{\xi_{n-1}} (\dots (h(\xi_1) +_{\xi_1}0)))
\]
Furthermore, the transformation \(\Delta^* : UFUF \Rightarrow UF\), as it is defined in \cref{eg:guarded semilattices}, is precisely the transformation \(U(\varepsilon_{F})\).  

\section{Examples from Section 3}\label{sec: examples}
\begin{itemize}
	\item Consider \(M=(1 + \Id)^{\At}\) and an expression \(e = \mu w~ (a_1(v+_b a_2w)+_b u)\). The \(B_M\)-coalgebra structure on this expression is given by:
	\begin{align*}
		\epsilon(e) &= \epsilon(a_1(v+_b a_2w)+_b u)[e/\!/w] = (\lambda \xi.(a_1, v+_b a_2w) +_b \lambda \xi.u)[e/\!/w] \\
		&= (\lambda \xi.(a_1, v+_b a_2w))[e/\!/v] +_b (\lambda \xi.u)[e/\!/w] = \lambda \xi.(a_1, v+_b a_2e) +_b \lambda \xi.u \\
		&= \lambda \xi.(a_1, f) +_b \lambda \xi.u
\end{align*}
	\item Consider \(M=\D(1+\Id)\) and an expression \(e = \mu v~ (a_1u +_{\frac{1}{2}}(a_2v +_{\frac{1}{3}} w))\). The derivation of the coalgebra structure for this expression is given by:
	\begin{align*}
		\epsilon(e) &= \epsilon(a_1u +_{\frac{1}{2}}(a_2v +_{\frac{1}{3}} w))[e/\!/v] = \frac{1}{2}\epsilon(a_1u)[e/\!/v] + \frac{1}{2}\epsilon(a_2v +_{\frac{1}{3}} w)[e/\!/v] \\
		&= \frac{1}{2}\epsilon(a_1u)[e/\!/v] + \frac{1}{2}(\frac{1}{3}\epsilon(a_2v)[e/\!/v] + \frac{2}{3}\epsilon(w)[e/\!/v]) 
		\\
		&= \frac{1}{2}\delta_{(a_{1}, u)}[e/\!/v] + \frac{1}{6}\delta_{(a_{2},v)}[e/\!/v] + \frac{2}{6}\delta_{w}[e/\!/v] \\
		&= \frac{1}{2}\delta_{(a_{1}, u)}+ \frac{1}{6}\delta_{(a_{2},e)} + \frac{2}{6}\delta_{w} \\
	\end{align*}
	\item Finally, consider \(M=\C_0\) and an expression \(e = \mu v~((a_1v +_{\frac{1}{3}} a_2w) + a_2v )\). The \(B_M\)-coalgebra structure is given by:
	\begin{align*}
		\epsilon(e) &= \epsilon((a_1v +_{\frac{1}{3}} a_2w) + a_2v ))[e/\!/v] = \conv(\epsilon(a_1v +_{\frac{1}{3}} a_2w)[e/\!/v], \epsilon(a_2v)[e/\!/v])\\
		&= \conv(\epsilon(a_1v)[e/\!/v] +_{\frac{1}{3}} \epsilon(a_2w)[e/\!/v], \epsilon(a_2v)[e/\!/v]) = \conv(\epsilon(a_1e) +_{\frac{1}{3}} \epsilon(a_2w), \epsilon(a_2e))\\
		&= \conv(\{\frac{1}{3}\delta_{(a_1,e)} + \frac{2}{3}\delta_{(a_2,w)}\}, \{\delta_{(a_2,e)}\})= \conv(\{\delta_{(a_1,e)}\} +_{\frac{1}{3}} \{\delta_{(a_2,w)}\}, \{\delta_{(a_1,e)}\})\\
	\end{align*}
\end{itemize}

\section{Notes on Substitution Operators}\label{sec: substitution operators}

In this paper, several different kinds of substitution operators are used. 
We have left this appendix to explain their properties and prepare for their use in proof later in the document.

The first kind of substitution that appears is \emph{syntactic} substitution. 
Given two expressions \(e\) and \(f\) and a variable \(v\), we define the expression \(e[f/v]\) by induction on \(e\) as follows: For the basic constructions, 
\[
	u[f/v] = \begin{cases}
			f &u = v \\
			u &u \neq v
		\end{cases}
	\quad 
	(ae)[f/v] = a(e[f/v])
	\quad
	\sigma(e_1,\dots,e_n)[f/v] = \sigma(e_1[f/v], \dots, e_n[f/v])
\]
but for the recursion case, we only let \((\mu u~e)[f/v]\) be well-defined if either \(u = v\), in which case \((\mu v~e)[f/v] = \mu v~e\) (because \(v\) is not free in \(\mu v~e\)), or \(u\) is not free in \(f\), in which case \((\mu u~e)[f/v] = \mu u~(e[f/v])\).
Thus, \([f/v]\) is a partial map \(\Exp \rightharpoonup \Exp\).

We similarly define \(e[f_1/v_1,\dots,f_n/v_n]\) for a distinct list of variables \(v_1,\dots,v_n\) to be the simultaneous substitution of \(f_i\) for \(v_i\), \(i=1,\dots,n\).
For this kind of substitution,
\[
	u[f_1/v_1,\dots,f_n/v_n] = \begin{cases}
			f_i &u = v_i \\
			u &(\forall i\le n)~u \neq v_i
		\end{cases}
\]
and if \(u = v_i\), then
\[
	(\mu u~e)[f_1/v_1,\dots,f_n/v_n] = (\mu u~e)[f_1/v_1, \dots, f_{i-1}/v_{i-1}, f_{i+1}/v_{i+1},\dots, f_n/v_n]
\]
and otherwise, if \(u\) is not free in \(f_i\) for all \(i\le n\), then
\[
	(\mu u~e)[f_1/v_1,\dots,f_n/v_n] = \mu u~(e[f_1/v_1,\dots,f_n/v_n])
\]
Again, \([f_1/v_1,\dots,f_n/v_n]\) defines a partial operation \(\Exp \rightharpoonup \Exp\). 

\begin{lemma}
	Let \(e, f \in \Exp\) and \(v \in V\).
	If no free variable of \(f\) is bound in \(e\), then \(e[f/v]\) is well-defined. 
\end{lemma}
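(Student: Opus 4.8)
The plan is to argue by structural induction on \(e\), carrying along the hypothesis \(\fv(f) \cap \bv(e) = \emptyset\) (i.e.\ no free variable of \(f\) occurs as a bound variable in \(e\)). The crucial observation guiding everything is that the \emph{only} source of partiality in the definition of \([f/v]\) is the recursion clause, which is undefined precisely when one tries to substitute under a binder \(\mu u\) with \(u \neq v\) but \(u \in \fv(f)\). So the entire task reduces to showing that, under the stated hypothesis, this bad situation never arises along the recursive unfolding of \([f/v]\) on \(e\).

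First I would dispatch the cases with no binding. If \(e\) is a variable, then \(e[f/v]\) is either \(f\) or \(e\) and is defined outright; constants (including \(0\)) are the \(n=0\) instance of the \(\sigma\)-case. For \(e = a e'\) and \(e = \sigma(e_1,\dots,e_n)\) I would use that \(\bv(a e') = \bv(e')\) and \(\bv(\sigma(e_1,\dots,e_n)) = \bigcup_{i \le n}\bv(e_i)\), so that \(\bv(e_i) \subseteq \bv(e)\) for every immediate subterm \(e_i\). Hence the hypothesis transfers verbatim to each subterm, giving \(\fv(f)\cap\bv(e_i)=\emptyset\), and the induction hypothesis makes every \(e_i[f/v]\) defined; since \([f/v]\) merely pushes through \(a(-)\) and \(\sigma\), the substitution on \(e\) is defined.

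The one case deserving attention, and the (modest) main obstacle, is recursion \(e = \mu u~e'\). If \(u = v\), the definition sets \((\mu v~e')[f/v] = \mu v~e'\), which is defined with no recursive call. If \(u \neq v\), the clause gives \(\mu u~(e'[f/v])\) subject to two requirements: that \(u\) is not free in \(f\), and that \(e'[f/v]\) be defined. For the first, the key point is that \(u \in \bv(\mu u~e') = \bv(e)\), so the hypothesis \(\fv(f)\cap\bv(e)=\emptyset\) forces \(u \notin \fv(f)\) — exactly the side condition making the clause applicable. For the second, \(\bv(e') \subseteq \bv(e)\) again transfers the hypothesis to \(e'\), and the induction hypothesis supplies well-definedness of \(e'[f/v]\). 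This closes the induction, proving that \(e[f/v]\) is well-defined whenever no free variable of \(f\) is bound in \(e\).
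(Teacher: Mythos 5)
Your proof is correct and follows essentially the same route as the paper's: a structural induction on \(e\) using \(\bv(e_i)\subseteq\bv(e)\) to propagate the hypothesis to subterms, with the recursion case split on \(u=v\) versus \(u\neq v\), where \(u\in\bv(\mu u\,e')\) forces \(u\notin\fv(f)\) so the side condition of the \(\mu\)-clause is met. No gaps.
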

\begin{proof}
	Variables bound in \(e\) are formally given by a function \(\bv : \Exp \to \mathcal{P}(V)\):
	\[
	\begin{aligned}
		\bv(v) &= \emptyset \\
		\bv(ae) &= \bv(e)\\
	\end{aligned}
	\qquad\qquad
	\begin{aligned}
		\bv(\sigma(e_1, \dots, e_n))&=\bv(e_1)~\cup \dots \cup~\bv(e_n)\\
		\bv(\mu v~e)&=\{v\}~\cup~\bv(e)
	\end{aligned}
	\]
	Similarly, the free variables of expression \(e\) are a function  \(\fv : \Exp \to \mathcal{P}(V)\) defined as:
	\[
	\begin{aligned}
		\fv(v) &= \{v\}\\
		\fv(ae) &= \fv(e)\\
	\end{aligned}
	\qquad\qquad
	\begin{aligned}
		\fv(\sigma(e_1, \dots, e_n))&=\fv(e_1)~\cup \dots \cup~\fv(e_n)\\
		\fv(\mu v~e)&=\fv(e) \setminus \{v\}\\
	\end{aligned}
	\]
	We prove the lemma by induction on \(e\). We assume that \(\fv(f) \cap \bv(e) = \emptyset\).
	\begin{itemize}
		\item The variable case is trivial, as the syntactic substitution is always well-defined.
		\item For the prefixing case, assume that \(e=ag\). By definition \((ag)[f/v]=a(g[f/v])\), so either is well-defined whenever \(g[f/v]\) is well-defined. Since \(\bv(ag)=\bv(g)\), by the induction hypothesis \(ag[f/v]\) is well-defined.
		\item Now suppose \(e=\sigma(e_1, \dots, e_n)\). By definition, \(\sigma(e_1, \dots, e_n)[f/v]=\sigma(e_1[f/v], \dots, e_n[f/v])\), so substituting \(f\) for \(v\) is well defined for each \(e_1, \dots, e_n\). 
		Since \(\bv(e_k) \subseteq \bv(\sigma(e_1, \dots, e_n))\) for each subexpression \(e_k\), for any \(x \in \fv(f)\) we have that \(x \notin \bv(\sigma(e_1, \dots,e_k, 
		\dots, e_n))\) and therefore \(x \notin \bv(e_k)\). By the induction hypothesis, \(e_k[f/v]\) is well-defined for each \(e_k\). It follows that \(e[f/v]\) is well-defined.
		\item For the recursion case, assume that \(e=\mu u~g\). We consider two subcases.
		\begin{itemize}
			\item If \(u = v\), then syntactic substitution is well-defined and is given by \((\mu u~g)[f/v]=\mu u~g\).
			\item Otherwise, \(u \neq v\). By assumption we know that if \(x \in \fv(f)\), then \(x \notin \bv(g)\cup\{u\}\), so \(u\) is not free in \(f\). 
			It follows that \((\mu u~g)[f/v]\) is well defined if and only if \(g[f/v]\) is well-defined. By the induction hypothesis, for any \(x \in \fv(f)\) we have that \(x \notin \bv(g)\). \qedhere
		\end{itemize}
	\end{itemize}
\end{proof}

Semantic substitution has a cousin that appears in the paper, namely \emph{guarded syntactic substitution}. 
Given \(e,f \in \Exp\) and \(v \in V\), we define \(e[f/\!/v]\) to be the expression
\[
u[f/\!/v] = \begin{cases}
	0 &u = v \\
	u &u \neq v
\end{cases}
\qquad 
(ae)[f/\!/v] = a(e[f/v])
\qquad
\sigma(e_1,\dots,e_n)[f/\!/v] = \sigma(e_1[f/\!/v], \dots, e_n[f/\!/v])
\]
Again, we only let \((\mu u~e)[f/\!/v]\) be well-defined if either \(u = v\), in which case \((\mu v~e)[f/\!/v] = \mu v~e\), or \(u\) is not free in \(f\), in which case \((\mu u~e)[f/\!/v] = \mu u~(e[f/\!/v])\).
Thus, \([f/\!/v]\) is yet another partial map \(\Exp \rightharpoonup \Exp\).

The first appearance of the guarded substitution operator in the paper is actually as a partial operator on \(B_M(\Exp)\), however. 
Recalling that \(B_M(\Exp) = M(V + A\times \Exp)\) is the set \(S^*(V + A \times \Exp)\) modulo \(\Eq\), it is defined first as a map \(V + A\times \Exp \to B_M\Exp\) as follows: given \(u \in V\) and \((a,e) \in A\times \Exp\), let
\[
u[f/\!/v] = \begin{cases}
	[0]_{\Eq} &u = v \\
	[u]_{\Eq} &u \neq v
\end{cases}
\qquad 
(a,e)[f/\!/v] = a(e[f/v])
\]
We record the existence of a lift in the following lemma. 

\guardedsyntacticsubstitution*

\begin{proof}
	By definition, \([\mu v~e/\!/v]\) is obtained from the unique lifting of the partial map \(h : V + A\times\Exp \rightharpoonup  S^*(V + A\times \Exp)\) defined 
	\[
	h(u) = \begin{cases}
		u &u \neq v \\
		0 &u = v
	\end{cases}
	\qquad
	h(a,f) = (a, f[\mu v~e/v]) 
	\]
	to a partial \(S\)-algebra homomorphism \(h^\# : S^*(V + A\times \Exp) \rightharpoonup S^*(V + A\times \Exp)\) by further composing with the quotient homomorphism \([-]_{\Eq} : S^*(V + A\times \Exp) \to B_M\Exp\) (this factorisation exists for partial maps because \(S^*\) preserves monos).
	Thus, \[
	\ker([\mu v~e/\!/v]) = \ker([-]_{\Eq} ~ h^\#) \supseteq \ker([-]_{\Eq}) \cap \operatorname{dom}(h^\#)^2
	\]
	where for an arbitrary partial map \(f : X \rightharpoonup Y\), \(\ker(f) = \{(x_1,x_2) \mid f(x_1)= f(x_2)\}\).
	In other words, \([\mu v~e/\!/v]\) is constant on \(\Eq\)-congruence classes.
	Since \(B_M\) preserves monos and \([-]_{\Eq}\) is surjective, there is a unique \(B_M\Exp \rightharpoonup B_M\Exp\) such that the following diagram commutes.
	\[\begin{tikzpicture}
		\node (1) {\(S^*(V + A\times \Exp)\)};
		\node[right=2cm of 1] (2) {\(S^*(V + A\times \Exp)\)};
		\node[below=1cm of 1] (3) {\(B_M\Exp\)};
		\node[below=1cm of 2] (4) {\(B_M\Exp\)};
		\node[above right=0.17cm of 3] (5) {\(\circlearrowleft\)};
		\draw (1) edge[-left to] node[above] {\(h^\#\)} (2);
		\draw (1) edge[->] node[left] {\([-]_{\Eq}\)} (3);
		\draw (1) edge[->] node[fill=white] {\([\mu v~e/\!/v]\)} (4);
		\draw (2) edge[->] node[right] {\([-]_{\Eq}\)} (4);
		\draw (3) edge[-left to, dashed] (4);
	\end{tikzpicture}\qedhere\]
\end{proof}

The two versions of guarded syntactic substitution interact as expected.

\begin{lemma}
	For any \(p \in S^*(V+A\times \Exp)\) and \(f \in \Exp\) and \(v \in V\), \(p[f/\!/v]\) is well-defined if and only if \(p^{\dagger}[f/\!/v]\) is well-defined, and in such a case \([p]_\Eq[f/\!/v]=\epsilon(p^{\dagger}[f/\!/v])\). 
\end{lemma}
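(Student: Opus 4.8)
The plan is to prove the domain equivalence and the value identity simultaneously, by structural induction on $p \in S^*(V+A\times\Exp)$. The single observation that makes everything align is that the skeleton of $p^\dagger$ traversed by the \emph{syntactic} operator $[f/\!/v]$ contains no recursion: since $(-)^\dagger$ produces $v^\dagger = v$, $(a,e)^\dagger = ae$, and $\sigma(p_1,\dots,p_n)^\dagger = \sigma(p_1^\dagger,\dots,p_n^\dagger)$, every $\mu$ occurring in $p^\dagger$ is nested strictly inside a prefixed subterm $ae$. Hence, when $[f/\!/v]$ is applied to $p^\dagger$, it only descends through $S$-operations until it meets a variable or a prefix $ae$, where it switches to the ordinary substitution $[f/v]$; its partial $\mu$-clause is never invoked at a position it reaches directly. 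On the coalgebraic side, $S^*(V+A\times\Exp)$ carries no $\mu$ at all, so there too the only source of partiality is the $[f/v]$ occurring in the clause for pairs $(a,e)$. Thus both well-definedness conditions reduce to well-definedness of the \emph{same} ordinary substitutions $e[f/v]$ attached to the prefixed leaves of $p$, which is what makes the two domains coincide.

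For the base cases I would argue as follows. If $p = u \in V$, then $p^\dagger = u$; the coalgebraic operator gives $[0]_\Eq$ or $\eta(u)$ according to whether $u=v$, the syntactic operator gives the process term $0$ or $u$, and both are total; the value identity then follows from the operational-semantics clauses (\cref{fig: coalgebraic operational semantics}), namely $\epsilon(u)=\eta(u)$ and $\epsilon(0)=[0]_\Eq$ (the latter since $\epsilon$ sends a constant to its $\Eq$-congruence class). If $p=(a,e)$, then $p^\dagger = ae$; the coalgebraic operator yields $\eta(a,e[f/v])$ and the syntactic operator yields $a(e[f/v])$, each defined exactly when $e[f/v]$ is defined, and the identity is precisely $\epsilon(a(e[f/v])) = \eta(a,e[f/v])$, again a single operational-semantics clause.

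For the inductive step $p = \sigma(p_1,\dots,p_n)$, I would use $p^\dagger = \sigma(p_1^\dagger,\dots,p_n^\dagger)$ together with the fact that both guarded-substitution operators commute with $\sigma$ — the coalgebraic one by its defining clause, interpreting $\sigma$ via the free $S$-algebra structure $\rho$ on $B_M\Exp$, and the syntactic one by its defining clause on $\Exp$. Each side is well-defined iff it is well-defined at every $p_i$, and these are equivalent by the induction hypothesis. For the value, the clause $\epsilon(\sigma(e_1,\dots,e_n)) = \sigma(\epsilon(e_1),\dots,\epsilon(e_n))$ pushes $\epsilon$ through $\sigma$, and then the induction hypothesis $[p_i]_\Eq[f/\!/v] = \epsilon(p_i^\dagger[f/\!/v])$ closes the computation, since $\sigma$ is interpreted by the same operation on both sides. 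Taking $n=0$ recovers the constant leaves with no appeal to an induction hypothesis.

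I expect the only genuine obstacle to be bookkeeping rather than mathematics: keeping the three overloaded uses of $[f/\!/v]$ cleanly separated — the raw partial map on $S^*(V+A\times\Exp)$, its factorisation through $B_M\Exp$ furnished by \cref{lem:guardedsubstitution}, and the syntactic partial map on $\Exp$ — and recording that ``$p[f/\!/v]$ is well-defined'' for the raw map coincides with the factored operator being defined on $[p]_\Eq$. Once this is pinned down and the non-triggering of the $\mu$-clause on $p^\dagger$ is noted, each case of the induction is a one-line appeal to the relevant defining equation.
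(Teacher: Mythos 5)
Your proposal is correct and follows essentially the same route as the paper's proof: a structural induction on \(p\) with the variable, pair, and \(\sigma\) cases each discharged by the corresponding defining clauses of \((-)^\dagger\), the two guarded substitution operators, and \(\epsilon\). Your preliminary observation that the \(\mu\)-clause of the syntactic operator is never triggered on the skeleton of \(p^\dagger\) (so both well-definedness conditions reduce to the same ordinary substitutions \(e[f/v]\) at the prefixed leaves) is a helpful gloss on why the domains coincide, but the substance of the argument matches the paper's.
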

\begin{proof}
	We proceed by induction on \(p\).
	\begin{itemize}
		\item Suppose that \(p = w\) for some \(w \in V\). In this case, \(w^{\dagger}=w\), so both \(w[f/\!/v]\) and \(w^{\dagger}[f/\!/v]\) are trivially well-defined. For the desired identity, consider following subcases.
		\begin{itemize}
			\item If \(w = v\), then \([w]_{\Eq}[f/\!/v]=\eta^{M}(0)=\epsilon(0)=\epsilon(w[f/\!/v])=\epsilon(w^\dagger[f/\!/v])\).
			\item Otherwise, if \(w \neq v\), then \([w]_{\Eq}[f/\!/v]=\eta^{M}(w)=\epsilon(w)=\epsilon(w[f/\!/v])=\epsilon(w^\dagger[f/\!/v])\).
		\end{itemize}
		\item Suppose that \(p = (a,e)\) for some \(a \in A\) and \(e \in \Exp\). Since \((a,e)^{\dagger}=ae\) in this case,  both \(p[f/\!/v]\) and \(p^{\dagger}[f/\!/v]\) are well-defined when \(e[f/v]\) is. 
		Therefore \(ae[f/\!/v]\) is well-defined if and only if \((a,e)^{\dagger}[f/\!/v]\) is well defined.
		Furthermore, we know \([(a,e)]_{\Eq}[f/\!/v]=[(a,e[f/v])]_{\Eq}\), so 
		\(
		[(a,e)]_\Eq[f/\!/v] = [(a,e[f/v])]_{\Eq} = \epsilon(ae[f/v]) = \epsilon((a,e)^\dagger[f/\!/v])
		\).
		\item Finally, suppose \(p=\sigma(p_1, \dots, p_n)\) for some \(p_1, \dots, p_n \in S^*(V+A\times \Exp)\), and recall that \(\sigma(p_1, \dots, p_n)^{\dagger}=\sigma(p_1^{\dagger}, \dots, p_n^{\dagger})\). 
		Since \(\sigma(p_1, \dots, p_n)[f/\!/v]=\sigma(p_1[f/\!/v], \dots, p_n[f/\!/v])\), we have 
		\[\sigma(p_1, \dots, p_n)^{\dagger}[f/\!/v]=\sigma(p_1^{\dagger}[f/\!/v], \dots, p_n^{\dagger}[f/\!/v])\] 
		By the induction hypothesis, \(p_i[f/\!/v]\) is well-defined if and only if \(p_i^{\dagger}[f/\!/v]\) is well-defined, for \(1\leq i \leq n\).
		Towards the desired identity, recall that \[
			[\sigma(p_1, \dots, p_n)]_{\Eq}[f/\!/v]=\sigma([p_1]_{\Eq}, \dots, [p_n]_{\Eq})[f/\!/v]=\sigma([p_1]_{\Eq}[f/\!/v], \dots, [p_n]_{\Eq}[f/\!/v])
		\]
		It follows from the induction hypothesis that 
		\begin{align*}
			[\sigma(p_1, \dots, p_n)]_{\Eq}[f/\!/v]
			&= [\sigma(p_1[f/\!/v], \dots, p_n[f/\!/v])]_{\Eq} \\
			&= \sigma(\epsilon(p_1^\dagger[f/\!/v])),\dots, \epsilon(p_n^\dagger[f/\!/v])) \\
			&=\epsilon(\sigma(p_1,\dots, p_n)^\dagger[f/\!/v])
			\qedhere
		\end{align*}
	\end{itemize}
\end{proof}

In the paper, we also introduced two versions of substitution for behaviours. 
Given \(t,s \in Z\), and \(v \in V\), we define \(t\{s/v\}\) by a behavioural differential equation that depends on \(\zeta(t)\) as follows:
\[
\zeta(t\{s/v\}) = \begin{cases}
	\zeta(s) &\zeta(t) = [v]_{\Eq} \\
	[u]_{\Eq} &\zeta(t) = [u]_{\Eq} \neq [v]_{\Eq} \\
	[(a, r\{s/v\})]_{\Eq} &\zeta(t) = [(a,r)]_{\Eq} \\
	\sigma(\zeta(t_1\{s/v\}), \dots, \zeta(t_n\{s/v\})) &\zeta(t) = [\sigma(\zeta(t_1), \dots, \zeta(t_n))]_{\Eq}
\end{cases}
\]
Note that unlike its syntactic relative, \(\{s/v\}\) is a total function \(Z \to Z\). 
Despite their differences, however, behavioural substitution enjoys many of the important properties of syntactic substitution. 
The following theorem provides a simplified coinductive principle with which we can prove these properties for our processes.

\begin{theorem}\label{thm:bisimilar maps}
	Let \(h,k : Z \to Z\).
	If \(h\) and \(k\) satisfy properties (i)-(iii) below, then \(h = k\).
	\begin{enumerate}[leftmargin=1cm]
		\item[(i)] if \(\zeta(t) = [w]_{\Eq}\), then \(\zeta(h(t)) = \zeta(k(t))\);
		\item[(ii)] if \(\zeta(t) = [(a, r)]_{\Eq}\), then \(\zeta(h(t)) = [(a,h(r))]_{\Eq}\) and \(\zeta(k(t)) = [(a,k(r))]_{\Eq}\); and
		\item[(iii)] if \(\zeta(t) = \sigma(\zeta(t_1), \dots, \zeta(t_n))\), then 
		\[
			\zeta(h(t)) = \sigma(\zeta(h(t_1)), \dots, \zeta(h(t_n)))
		\quad
		\text{and}
		\quad
			\zeta(k(t)) = \sigma(\zeta(k(t_1)), \dots, \zeta(k(t_n)))
		\] 
	\end{enumerate}
\end{theorem}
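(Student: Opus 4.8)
The plan is to prove the statement by coinduction, exhibiting a bisimulation on the final coalgebra $(Z,\zeta)$ and invoking finality. Concretely, I would set
\[
	R = \{(h(t), k(t)) \mid t \in Z\} \cup \Delta_Z \subseteq Z \times Z,
\]
where $\Delta_Z$ is the diagonal, and construct an (Aczel--Mendler) coalgebra structure $\xi : R \to B_M R$ for which both projections $\pi_1, \pi_2 : R \to Z$ are $B_M$-coalgebra homomorphisms into $(Z,\zeta)$. Once this is done, finality of $(Z,\zeta)$ forces $\pi_1 = \pi_2$, since there is a \emph{unique} homomorphism out of $(R,\xi)$ into the final coalgebra; as every pair $(h(t),k(t))$ lies in $R$, this yields $h(t) = k(t)$ for all $t$, i.e.\ $h = k$. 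This route has the advantage of avoiding any weak-pullback-preservation hypothesis on $B_M$, because the witness $\xi$ is built by hand rather than extracted from a relation lifting.

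Building $\xi$ is the heart of the argument. For a pair $(h(t), k(t))$ I must produce an element $u \in M(V + A\times R)$ sent to $\zeta(h(t))$ by $M(V + A\times \pi_1)$ and to $\zeta(k(t))$ by $M(V + A\times\pi_2)$; for a diagonal pair $(s,s)$ I take $u = B_M\langle \id,\id\rangle(\zeta(s))$. Since $\zeta$ is a bijection (Lambek's lemma), every $\zeta(t)$ is represented by an $S$-term over $V + A\times Z$, and I would construct $u$ by induction on such a representative, using the three hypotheses for the three shapes of its top symbol:
\begin{itemize}
	\item If the representative is a variable $w$, then $\zeta(t) = [w]_\Eq$ and (i) gives $\zeta(h(t)) = \zeta(k(t))$; I lift this common element through the diagonal $Z \hookrightarrow R$, which is exactly why $\Delta_Z$ must be included in $R$.
	\item If it is a generator $(a,r)$, then (ii) gives $\zeta(h(t)) = [(a,h(r))]_\Eq$ and $\zeta(k(t)) = [(a,k(r))]_\Eq$; since $(h(r),k(r)) \in R$, the element $\eta((a,(h(r),k(r))))$ is the required witness by naturality of $\eta$.
	\item If it is $\sigma(p_1,\dots,p_n)$, then each $[p_i]_\Eq$ equals $\zeta(t_i)$ for a unique $t_i$, and (iii) gives $\zeta(h(t)) = \sigma(\zeta(h(t_1)),\dots,\zeta(h(t_n)))$ and likewise for $k$; taking the witnesses $u_i$ furnished by the induction hypothesis for $(h(t_i),k(t_i))$, the element $\sigma(u_1,\dots,u_n)$ works, because each $M(V+A\times\pi_j)$ is an $S$-algebra homomorphism (every map of the form $Mf$ is, by freeness) and therefore commutes with $\sigma$.
\end{itemize}
Finiteness of the arities (Assumption~\ref{asm:finite arities}) keeps each representative a well-founded finite term, so the induction is legitimate.

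The main obstacle I anticipate is bookkeeping rather than conceptual: elements of $B_M Z = M(V+A\times Z)$ are $\Eq$-congruence classes, so the case analysis and the inductive construction must be carried out on chosen representatives, and I must check that the resulting assignment $\xi$ is a genuine function. This is fine, since for each pair it suffices to fix one representative of $\zeta(t)$ and any valid witness will do, the two projection conditions being independent of the choice. A secondary point to verify carefully is that the witness produced in the variable case really lands in $M(V+A\times R)$ and not merely in $M(V+A\times Z)$; this is precisely what forces $\Delta_Z \subseteq R$, since the action-subterms $(a,s)$ occurring in the common value $\zeta(h(t))=\zeta(k(t))$ must be tagged by the reflexive pairs $(s,s) \in R$. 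With $\xi$ in hand, the conclusion $h=k$ is immediate from finality.
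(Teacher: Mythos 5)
Your proposal is correct and follows essentially the same route as the paper: both construct the relation \(R = \Delta_Z \cup \{(h(t),k(t)) \mid t \in Z\}\), equip it by hand with a \(B_M\)-coalgebra structure built from a chosen \(S\)-term representative of \(\zeta(t)\) (tagging action-subterms with pairs and using (i)--(iii) for the three generator shapes), and conclude \(\pi_1 = \pi_2\) by finality of \((Z,\zeta)\). Your structural induction on the representative is if anything slightly more explicit than the paper's one-shot substitution of generators, but it is the same argument.
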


\begin{proof}
	We proceed with a proof by \emph{coinduction}. 
	Namely, we will give the relation\[
	R := \Delta_Z \cup \{(h(r), k(r)) \mid r \in Z\}
	\]
	a \(B_M\)-coalgebra structure \(\rho : R \to B_MR\) such that the projections \(\pi_i : R \to Z\), \(i = 1,2\), are coalgebra homomorphisms. 
	By finality of \((Z,\zeta)\), this then implies that \(\pi_1 = {!}_{\rho} = \pi_2\), which establishes the identity we are hoping to prove.
	
	Consider a \(t \in Z\) and let \(\zeta(t) = [p(v_1,\dots,v_n, (a_1,s_1),\dots,(a_m,s_m))]_{\Eq}\) for some \(p \in S^*(V + A\times Z)\).
	Along the diagonal, define
	\[
	\rho(t,t) = [p(v_1,\dots,v_n, (a_1, (s_1,s_1)), \dots, (a_n, (s_n,s_n)))]_{\Eq}
	\]
	For the other pairs, assume \([v_i]_{\Eq} = \zeta(h(t_i)) = \zeta(k(t_i))\) for \(i = 1,\dots, n\) using (i) and write
	\[
	\rho(h(t), k(t)) = [p(v_1, \dots, v_n, (a_1, (h(s_1), k(s_1))), \dots, (a_m, (h(s_m), k(s_m))))]_{\Eq}
	\]   
	using (ii) and (iii).
	To see that \(\pi_i\) is a coalgebra homomorphism for each \(i = 1,2\), observe
	\begin{align*}
	B_M(\pi_1)(\rho(h(t),k(t)))
	&= [S^*(\pi_1)(p(v_1, \dots, v_n, (a_1, (h(s_1), k(s_1))), \dots, (a_m, (h(s_m), k(s_m)))))]_{\Eq} \\
	&= [p(v_1,\dots,v_n, (a_1, h(s_1), \dots, (a_m, h(s_m))]_{\Eq} \\
	&= \zeta \circ \pi_1(h(r), k(r))
	\end{align*}
	and similarly for \(\pi_2\).
\end{proof} 

For the following lemma, we say that a variable \(v\) is \emph{dead} in a behaviour \(t\) if for any other behaviour \(s\), \(t\{s/v\} = t\). 

\begin{lemma}\label{lem:composing behavioural substitutions}
	Let \(u,v \in V\) and \(r,s,t \in Z\).
	If \(v\) is dead in \(t\) and \(u \neq v\), then \[r\{s/v\}\{t/u\} = r\{t/u\}\{s\{t/u\}/v\}\] 
\end{lemma}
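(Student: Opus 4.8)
The plan is to apply the coinductive principle of \cref{thm:bisimilar maps}. I would fix $s,t,u,v$ and define two maps $h,k : Z \to Z$ by $h(q) = q\{s/v\}\{t/u\}$ and $k(q) = q\{t/u\}\{s\{t/u\}/v\}$, noting that $s\{t/u\}$ is a fixed behaviour so both are genuine endofunctions of $Z$. The lemma is precisely the assertion $h(r)=k(r)$ for all $r$, so it suffices to check that $h$ and $k$ satisfy hypotheses (i)--(iii) of \cref{thm:bisimilar maps}; the theorem then yields $h=k$.

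Conditions (ii) and (iii) I would dispatch by simply unfolding the defining behavioural differential equation for $\{-/-\}$ twice on each side. For instance, if $\zeta(q) = [(a,r)]_\Eq$ then $\zeta(q\{s/v\}) = [(a, r\{s/v\})]_\Eq$, and a second application of $\{t/u\}$ gives $\zeta(h(q)) = [(a, r\{s/v\}\{t/u\})]_\Eq = [(a,h(r))]_\Eq$; the computations for $k$ and for the $\sigma$-branching case are identical, relying only on the fact that behavioural substitution commutes with the prefix and branching structure by definition.

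The real content is condition (i). Suppose $\zeta(q) = [w]_\Eq$ for a variable $w$, and split into three cases. If $w \notin \{u,v\}$, every substitution leaves the head variable untouched on both sides, so $\zeta(h(q)) = [w]_\Eq = \zeta(k(q))$. If $w = v$ (so $w \neq u$), then on the left $\zeta(q\{s/v\}) = \zeta(s)$, whence $q\{s/v\} = s$ by injectivity of $\zeta$ (Lambek's lemma), giving $\zeta(h(q)) = \zeta(s\{t/u\})$; on the right $q\{t/u\}$ still has head variable $v$, so substituting $s\{t/u\}$ for $v$ yields $\zeta(k(q)) = \zeta(s\{t/u\})$, matching. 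The crucial case is $w = u$ (so $w \neq v$): on the left $q\{s/v\}$ still has head variable $u$, so $\zeta(h(q)) = \zeta(t)$; on the right $\zeta(q\{t/u\}) = \zeta(t)$, hence $q\{t/u\} = t$, and it is exactly here that I invoke the hypothesis that $v$ is dead in $t$ to conclude $t\{s\{t/u\}/v\} = t$, giving $\zeta(k(q)) = \zeta(t)$ as required.

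This last case is the only obstacle, and the sole place where both hypotheses $u \neq v$ and ``$v$ dead in $t$'' are used: without deadness the right-hand side would re-substitute $s\{t/u\}$ into the fresh copy of $t$ produced by $\{t/u\}$, breaking the match with the left-hand side, while $u \neq v$ is what keeps the head variables of the two composites aligned in the case analysis above. Everything else reduces to a routine double unfolding of the defining equation, together with the injectivity of $\zeta$ supplied by Lambek's lemma.
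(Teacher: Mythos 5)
Your proposal is correct and follows essentially the same route as the paper: both invoke the coinductive principle of \cref{thm:bisimilar maps} for the two composite substitution maps, dispatch the action and branching cases by double unfolding, and split the variable case into the three subcases $w\notin\{u,v\}$, $w=v$, $w=u$, with deadness of $v$ in $t$ used exactly in the $w=u$ subcase (the paper's case (i')). Your explicit commentary on where each hypothesis is needed matches the structure of the paper's argument.
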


\begin{proof}
	We use \cref{thm:bisimilar maps}, which requires us to verify (i)-(iii) for the maps \(\{s/v\}\{t/u\}\) and \(\{t/u\}\{s\{t/u\}/v\}\).
	Assume \(u, v, w\) are distinct variables, and let \(a \in A\).
	There are several cases to consider. 
	\begin{itemize}[leftmargin=1cm]
		\item[(i)] If \(\zeta(r) = [v]_{\Eq}\), then \(\zeta(r\{s/v\}) = \zeta(s)\) and \(\zeta(r\{t/u\}) = \zeta(r)\). 
		This means that
		\[
		\zeta(r\{s/v\}\{t/u\}) = \zeta(s\{t/u\}) = \zeta(r\{s\{t/u\}/v\}) = \zeta(r\{t/u\}\{s\{t/u\}/v\})
		\]
		\item[(i')] If \(\zeta(r) = [u]_{\Eq}\), then \(\zeta(r\{s/v\}) = \zeta(r)\) and \(\zeta(r\{t/u\}) = \zeta(t)\).
		This means that 
		\[
			\zeta(r\{s/v\}\{t/u\}) = \zeta(r\{t/u\}) = \zeta(t) = \zeta(t\{s\{t/u\}/v\}) = \zeta(r\{t/u\}\{s\{t/u\}/v\})
		\]
		\item[(i'')] If \(\zeta(r) = [w]_{\Eq}\), then
		\(
			\zeta(r\{s/v\}\{t/u\}) = \zeta(r) = \zeta(r\{t/u\}\{s\{t/u\}/v\})
		\).
		\item[(ii)] If \(\zeta(r) = [(a, r')]_{\Eq}\), then \(\zeta(r\{s/v\}) = [(a,r'\{s/v\})]_{\Eq}\) and \(\zeta(r\{t/u\}) = [(a, r'\{t/u\})]_{\Eq}\).
		It follows that \[
		\zeta(r\{s/v\}\{t/u\}) = [(a, r'\{s/v\}\{t/u\})]_{\Eq}
		\] 
		and \[
		\zeta(r\{t/u\}\{s\{t/u\}/v\}) = [(a, r'\{t/u\}\{s\{t/u\}/v\})]_{\Eq}
		\]
		\item[(iii)] Now let
		\(
			\zeta(r) = \sigma(\zeta(r_1), \dots, \zeta(r_n))
		\).
		By definition,
		\[
			\zeta(r\{s/v\}\{t/u\}) = \sigma(\zeta(r_1\{s/v\}\{t/u\}), \dots, \zeta(r_n\{s/v\}\{t/u\}))
		\]
		and
		\[
			\zeta(r\{t/u\}\{s\{t/u/v\}) = \sigma(\zeta(r_1\{t/u\}\{s\{t/u/v\}), \dots, \zeta(r_n\{t/u\}\{s\{t/u/v\}))
		\]
		as desired.\qedhere
	\end{itemize}
\end{proof}

\begin{lemma}
	For any \(r,s,t \in Z\) and \(v \in V\), \(r\{s/v\}\{t/v\} = r\{s\{t/v\}/v\}\).
\end{lemma}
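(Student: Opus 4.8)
The plan is to prove the identity $r\{s/v\}\{t/v\} = r\{s\{t/v\}/v\}$ by coinduction, using the simplified coinductive principle of \cref{thm:bisimilar maps}. The two maps to compare are $h = \{s/v\}\{t/v\}$ and $k = \{s\{t/v\}/v\}$, both of type $Z \to Z$. It suffices to verify that $h$ and $k$ satisfy conditions (i)--(iii) of that theorem, after which the theorem immediately yields $h = k$, which is exactly the claim.

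First I would do a case analysis on the shape of $\zeta(r)$, exactly as in the proof of \cref{lem:composing behavioural substitutions}, since the present statement is the degenerate case $u = v$ of that lemma (the earlier lemma required $u \neq v$, so this is the companion case and cannot be read off directly). For condition (i), suppose $\zeta(r) = [w]_{\Eq}$ for some variable $w$. If $w = v$, then applying the defining behavioural differential equation of $\{s/v\}$ gives $\zeta(r\{s/v\}) = \zeta(s)$, and then applying $\{t/v\}$ yields $\zeta(r\{s/v\}\{t/v\}) = \zeta(s\{t/v\})$; on the other side, $\zeta(r\{s\{t/v\}/v\}) = \zeta(s\{t/v\})$, so the two agree. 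If instead $w \neq v$, then $\{s/v\}$ leaves $r$ fixed on this step, $\{t/v\}$ also leaves it fixed since $w \neq v$, and likewise $\{s\{t/v\}/v\}$ fixes it, so both sides equal $\zeta(r)$.

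Next I would treat the recursive conditions. For condition (ii), when $\zeta(r) = [(a,r')]_{\Eq}$, the defining equation for each behavioural substitution pushes the operator inside the action prefix, so I would compute $\zeta(h(r)) = [(a, r'\{s/v\}\{t/v\})]_{\Eq} = [(a, h(r'))]_{\Eq}$ and $\zeta(k(r)) = [(a, r'\{s\{t/v\}/v\})]_{\Eq} = [(a, k(r'))]_{\Eq}$, which is precisely what (ii) demands. For condition (iii), when $\zeta(r) = \sigma(\zeta(r_1), \dots, \zeta(r_n))$, each behavioural substitution distributes over $\sigma$ by its defining equation, giving $\zeta(h(r)) = \sigma(\zeta(h(r_1)), \dots, \zeta(h(r_n)))$ and the analogous equation for $k$.

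The only genuinely delicate point is condition (i) in the subcase $w = v$: here the two branches of the definition of $\{t/v\}$ interact with the substituted behaviour $s$, and one must check that feeding $\zeta(s)$ into $\{t/v\}$ produces the same output as the single substitution of $s\{t/v\}$. This reduces to the observation that $\{t/v\}$ followed by nothing on $s$ matches the definition of $\{s\{t/v\}/v\}$ applied to a variable labelled $v$, which is immediate from unfolding definitions. All other cases are routine distributivity computations, so I expect no serious obstacle; the proof is a short and direct application of the coinductive principle.
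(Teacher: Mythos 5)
Your proposal is correct and follows exactly the paper's own route: the paper proves this lemma by a one-line appeal to the coinductive principle with the same choice of maps $h = \{s/v\}\{t/v\}$ and $k = \{s\{t/v\}/v\}$, and your case analysis (including the observation that in the case $\zeta(r) = [v]_{\Eq}$ one has $r\{s/v\} = s$ by injectivity of $\zeta$, so both sides reduce to $\zeta(s\{t/v\})$) is simply the verification the paper leaves implicit.
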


\begin{proof}
	This also follows from \cref{thm:bisimilar maps}, where this time we take \(h = \{s/v\}\{t/v\}\) and \(k = \{s\{t/v\}/v\}\).
\end{proof}

We also found the need to define a \emph{guarded} version of behavioural substitution. 
Given \(u \in V\) and \((a,r) \in A\times Z\), we define
\begin{align*}
	u\{s/\!/v\} &= \begin{cases}
		u &u\neq v \\
		0 &u = v
	\end{cases} &
	(a,r)\{s/\!/v\} &= (a, r\{s/v\}) 
\end{align*}
and denote by \(\{s/\!/v\}\) the unique lifting of this map to an operator on \(B_MZ\).\footnote{By Lambek's lemma, \(\zeta:Z \cong B_MZ\), so \(\{s/\!/v\}\) may also denote the corresponding operator \(\zeta^{-1}~\{s/\!/v\}~\zeta\) on \(Z\) (but we don't really find occasion for this here).}

\begin{lemma}\label{lem:gdd sub of dead variables}
	Let \(v \in V\) and \(t,s \in Z\). 
	If \(v\) is dead in \(t\), then \(\zeta(t)\{s/\!/v\} = \zeta(t)\). 	
\end{lemma}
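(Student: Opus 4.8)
The plan is to transport the whole statement into the structure set \(B_M Z = M(V + A\times Z)\), which (since \(M\) presents \((S,\Eq)\)) is the free \(S\)-algebra satisfying \(\Eq\) on the generating set \(W := V + A\times Z\), and to compare the guarded operator \(\{s/\!/v\}\) with a conjugated copy of ordinary behavioural substitution. Set \(H_s := \zeta\circ\{s/v\}\circ\zeta^{-1} : B_M Z \to B_M Z\). First I would record that both \(\{s/\!/v\}\) and \(H_s\) are \(S\)-algebra homomorphisms: \(\{s/\!/v\}\) is by definition the free-algebra lifting of a generator map, while the \(\sigma\)-clause in the defining identity of \(\{s/v\}\) says precisely that \(\{s/v\}\) commutes with the \(S\)-operations on \(Z\), and \(\zeta\) is an \(S\)-algebra isomorphism between \((Z,\gamma)\) and \(B_M Z\) with its free \(S\)-structure (the branch clause of \cref{fig: algebraic semantics}). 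Reading off the generator clauses, the two homomorphisms agree on every generator except the bare variable \(v\): on \([u]_{\Eq}\) with \(u\neq v\) both are the identity, and on \([(a,r)]_{\Eq}\) both return \([(a, r\{s/v\})]_{\Eq}\); only on \([v]_{\Eq}\) do they differ, with \(\{s/\!/v\}\) giving \([0]_{\Eq}\) and \(H_s\) giving \(\zeta(s)\).

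Since two \(S\)-algebra homomorphisms that agree on a set of generators agree on the subalgebra it generates, \(\{s/\!/v\}\) and \(H_s\) coincide on the subalgebra \(\langle W\setminus\{v\}\rangle\) generated by all generators other than \(v\). The crux is therefore to show that \(\zeta(t)\) lies in \(\langle W\setminus\{v\}\rangle\), and the key trick is to instantiate deadness at the deadlock behaviour \(\gamma(0)\), for which \(\zeta(\gamma(0)) = [0]_{\Eq}\). For \(s=\gamma(0)\) the two operators agree even on \([v]_{\Eq}\) (both return \([0]_{\Eq}\)), so \(\{\gamma(0)/\!/v\} = H_{\gamma(0)}\) on all of \(B_M Z\); hence \(\zeta(t)\{\gamma(0)/\!/v\} = \zeta(t\{\gamma(0)/v\}) = \zeta(t)\), the last step by deadness. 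Moreover every generator clause of \(\{\gamma(0)/\!/v\}\) has output inside \(\langle W\setminus\{v\}\rangle\) (the image of \([v]_{\Eq}\) is the constant \([0]_{\Eq}\), and pairs stay pairs), so the image of \(\{\gamma(0)/\!/v\}\) is contained in \(\langle W\setminus\{v\}\rangle\); combining these gives \(\zeta(t) = \zeta(t)\{\gamma(0)/\!/v\} \in \langle W\setminus\{v\}\rangle\).

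With \(\zeta(t)\in\langle W\setminus\{v\}\rangle\) established, the proof closes immediately: for the given \(s\), the agreement on the subalgebra yields \(\zeta(t)\{s/\!/v\} = H_s(\zeta(t))\), and \(H_s(\zeta(t)) = \zeta(t\{s/v\}) = \zeta(t)\) once more by deadness of \(v\) in \(t\). I expect the main obstacle to be exactly the top-level mismatch at \([v]_{\Eq}\), where guarded substitution inserts \([0]_{\Eq}\) while the unguarded one inserts \(\zeta(s)\): the naive route forces one to prove separately that deadness is hereditary along transitions (so that the inner \(\{s/v\}\) on successors is inert), whereas instantiating at \(\gamma(0)\) sidesteps this by exhibiting \(\zeta(t)\) as a value of a substitution whose entire image avoids the generator \(v\). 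Beyond the definitions, the only facts needed are that \(M\) presents a free-algebra monad (so homomorphisms are determined on generators) and that \(\zeta\) is an \(S\)-algebra isomorphism.
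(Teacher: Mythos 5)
Your proof is correct, but it takes a genuinely different route from the paper's. The paper argues by induction on a term representative $p$ of $\zeta(t)$, invoking deadness of $v$ in $t$ directly in the variable and prefix cases and appealing to the induction hypothesis in the branching case $\zeta(t)=\sigma(\zeta(t_1),\dots,\zeta(t_n))$. As you anticipate, that last step tacitly requires deadness of $v$ in the components $t_i$ (or some surrogate for it), which the paper does not establish. Your argument avoids this entirely: you observe that both $\{s/\!/v\}$ and the conjugate $H_s=\zeta\circ\{s/v\}\circ\zeta^{-1}$ are $S$-algebra endomorphisms of the free algebra $B_MZ=M(V+A\times Z)$ (the former as a lift of a generator map, the latter because the branch clause of the definition of $\{s/v\}$ makes it commute with the $S$-operations and $\zeta$ is an $S$-algebra isomorphism by Lambek plus \cref{fig: algebraic semantics}), that they agree on every generator except $[v]_{\Eq}$, and that $\zeta(t)$ lies in the subalgebra generated by the $v$-free generators. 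The instantiation of deadness at $\gamma(0)$ — where the two endomorphisms agree even at $[v]_{\Eq}$ since $\zeta(\gamma(0))=[0]_{\Eq}$, and where the image of the guarded operator manifestly avoids the generator $[v]_{\Eq}$ — is the right trick to obtain that membership, and it uses only Assumption~2 (existence of the constant $0$) and the injectivity of the unit from Assumption~1 to separate $[u]_{\Eq}$ from $[v]_{\Eq}$. What your approach buys is a fully rigorous proof that needs no heredity of deadness along transitions; what it costs is reliance on the universal property of the presentation (uniqueness of homomorphic extensions and closure of subalgebras under operations), whereas the paper's induction, once the heredity gap is patched, stays entirely at the level of term representatives. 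Both proofs inherit the same implicit dependence on the well-definedness of $\{s/v\}$ as stated in its defining case analysis.
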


\begin{proof}
	Let \(\zeta(t)=[p]_{\Eq}\) for some \(p \in S^*(V + A\times Z)\). 
	We proceed by induction on \(p\).
	\begin{itemize}
		\item For the variable case, suppose \(p = w \neq v\). 
		We have \([w]_{\Eq}\{s/\!/v\}=[w]_{\Eq}=\zeta(t)\).
		\item Now assume that \(p=(a,e)\) for some \(e \in \Exp\). 
		Since \(v\) is dead in \(t\), \([(a,e)]_{\Eq}\{s/\!/v\}=[(a, e\{s/v\})]_{\Eq} = \zeta(t\{s/v\}) = \zeta(t)\).   
		\item For the inductive step, assume that \(p=\sigma(\zeta(t_1), \dots, \zeta(t_n))\). 
		By the induction hypothesis, \begin{align*}
			\zeta(t)\{s/\!/v\} 
			&= [\sigma(\zeta(t_1)\{s/\!/v\}, \dots, \zeta(t_n)\{s/\!/v\})]_{\Eq} \\
			&= [\sigma(\zeta(t_1), \dots, \zeta(t_n))]_{\Eq} 
			=\zeta(t) \qedhere
		\end{align*}
	\end{itemize}
\end{proof}

\begin{lemma}\label{lem:semantic fp distributes}
	Let \(u,v\) be distinct variables and \(s,t \in Z\). 
	If \(v\) is dead in \(s\), then \[(\mu v~t)\{s/u\} = \mu v~(t\{s/u\})\] 
\end{lemma}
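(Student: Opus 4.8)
The plan is to characterise $\mu v~(t\{s/u\})$ by the guarded behavioural differential equation it satisfies, and then verify that $(\mu v~t)\{s/u\}$ is a solution of that same equation. Recall from \cref{fig: algebraic semantics} that for any $r \in Z$ the behaviour $\mu v~r$ is determined by the identity $\zeta(\mu v~r) = \zeta(r)\{\mu v~r/\!/v\}$. Since the guarded substitution $\{-/\!/v\}$ replaces $v$ only in guarded positions (under a prefix $(a,-)$) while sending every unguarded occurrence of $v$ to the constant $0$, the depth-$(n{+}1)$ behaviour of any $X$ solving $\zeta(X) = \zeta(r)\{X/\!/v\}$ depends only on the depth-$\le n$ behaviour of $X$. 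Thus this is a guarded behavioural differential equation, which by finality of $(Z,\zeta)$ has a unique solution; this is precisely what makes $\gamma$ well defined. Note that the assignment $r \mapsto \mu v~r$ is not structural for $\zeta$ — unfolding $\mu v~r$ one step re-exposes $\mu v~r$ under the prefix rather than a subterm, so condition (ii) of \cref{thm:bisimilar maps} fails — which is why we argue through uniqueness of guarded solutions rather than the coinductive principle.

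By this uniqueness it suffices to show that $X := (\mu v~t)\{s/u\}$ satisfies the equation defining $\mu v~(t\{s/u\})$, namely
\[
\zeta\big((\mu v~t)\{s/u\}\big) = \zeta\big(t\{s/u\}\big)\big\{(\mu v~t)\{s/u\}/\!/v\big\}.
\]
Write $g := \mu v~t$ and pick a representative $p \in S^*(V + A\times Z)$ with $\zeta(t) = [p]_\Eq$. The left-hand side is obtained by first rewriting $\zeta(g) = \zeta(t)\{g/\!/v\} = [\,p\{g/\!/v\}\,]_\Eq$ and then applying the behavioural substitution $\{s/u\}$ through its defining equation; the right-hand side is obtained by first applying $\{s/u\}$ to $[p]_\Eq$ and then the guarded substitution $\{g\{s/u\}/\!/v\}$. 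Both routes descend through the same $\sigma$-tree structure of $p$, so the claim reduces, by induction on $p$, to comparing the two orders of substitution at the leaves of $p$, the case $p = \sigma(p_1,\dots,p_n)$ being immediate since all three operators act componentwise.

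The hypotheses enter exactly at the leaves. If $p = w$ with $w \notin \{u,v\}$, both routes fix $w$; if $p = v$, both routes produce the constant $0$ (on the left $\{g/\!/v\}$ sends $v\mapsto 0$ and $\{s/u\}$ fixes that constant because $u\neq v$; on the right $\{s/u\}$ fixes $v$ because $u \neq v$ and then $\{g\{s/u\}/\!/v\}$ sends $v\mapsto 0$). The decisive case is $p = u$: the left route yields $\zeta(s)$, whereas the right route yields $\zeta(s)\{g\{s/u\}/\!/v\}$, and these agree precisely because $v$ is dead in $s$, by \cref{lem:gdd sub of dead variables}. For a prefix leaf $p = (a,r)$, the left route gives $(a,(r\{g/v\})\{s/u\})$ and the right route gives $(a,(r\{s/u\})\{g\{s/u\}/v\})$, which coincide by \cref{lem:composing behavioural substitutions} applied with the behaviour $s$ substituted for $u$ (dead in $v$) and $u \neq v$. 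This establishes the displayed identity, so $X$ solves the guarded equation of $\mu v~(t\{s/u\})$ and, by uniqueness, $(\mu v~t)\{s/u\} = \mu v~(t\{s/u\})$.

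I expect the main obstacle to be the bookkeeping that justifies reducing the two composite substitutions to a leafwise comparison along the fixed representative $p$ — in particular, checking that applying $\{s/u\}$ after $\{g/\!/v\}$ neither disturbs the rewriting of unguarded $v$ into $0$ nor reintroduces a free $v$ that the guarded substitution would have to act on (this includes verifying that $\{-/\!/v\}$ is compatible with the flattening used when $u$ is replaced by the full behaviour $\zeta(s)$). Both dangers are excluded by exactly the two hypotheses: $u \neq v$ preserves the $v$-to-$0$ rewriting, and $v$ dead in $s$ guarantees, via \cref{lem:gdd sub of dead variables,lem:composing behavioural substitutions}, that substituting $s$ for $u$ creates no new occurrences of $v$. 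Everything remaining is routine structural induction.
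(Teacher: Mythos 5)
Your proposal is correct and follows essentially the same route as the paper: both arguments identify $\mu v~(t\{s/u\})$ as the unique solution of the behavioural differential equation $\zeta(r) = \zeta(t\{s/u\})\{r/\!/v\}$ and then verify that $(\mu v~t)\{s/u\}$ satisfies it by a case analysis on a representative of $\zeta(t)$, using \cref{lem:gdd sub of dead variables} for the leaf $u$ and \cref{lem:composing behavioural substitutions} for the prefix case. The only nit is a wording slip in the prefix case ("$s$ \ldots dead in $v$" should read "$v$ dead in $s$"); the substance matches the paper's proof exactly.
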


\begin{proof}
	The behaviour \(\mu v~(t\{s/u\})\) is (by definition) the unique solution to the behavioural differential equation \[
		\zeta(r) = \zeta(t\{s/u\})\{r/\!/v\} \tag{*}
	\]
	in the variable \(r\). 
	Thus, it suffices to see that \(r = (\mu v~t)\{s/u\}\) satisfies this equation. 
	To this end, there are a few cases to consider. 
	\begin{itemize}
		\item If \(\zeta(t) = [v]_{\Eq}\), then \(
			\zeta(\mu v~t) = \zeta(t)\{\mu v~t/\!/v\} = [0]_{\Eq}
		\).
		Hence, \[
			\zeta((\mu v~t)\{s/u\}) 
			= [0]_{\Eq} 
			= \zeta(t)\{(\mu v~t)\{s/u\}/\!/v\}
			= \zeta(t\{s/u\})\{(\mu v~t)\{s/u\}/\!/v\}
		\]
		since \(u\) is dead in \(t\) as well.
		Setting \(r = (\mu v~t)\{s/u\}\) satisfies (*), as desired. 
		\item If \(\zeta(t) = [u]_{\Eq}\), then \(\zeta(\mu v~t) = \zeta(t)\{\mu v~t/\!/v\} = \zeta(t)\). 
		Taking \(r = (\mu v~t)\{s/u\}\), we see that 
		\[
			\zeta(r) = \zeta(t\{s/u\}) = \zeta(s) = \zeta(s)\{r/\!/v\} = \zeta(t\{s/u\})\{r/\!/v\}
		\]
		by \cref{lem:gdd sub of dead variables}, since \(v\) is dead in \(s\). 
		\item Let \(w\) be distinct from \(u\) and \(v\).
		If \(\zeta(t) = [w]_{\Eq}\), then trivially \(\zeta(r) = [w]_{\Eq} = \zeta(t) = \zeta(t\{s/u\}) = \zeta(t\{s/u\}) \{r/\!/v\}\).
		\item If \(\zeta(t) = [(a, t')]_{\Eq}\), then \(\zeta(\mu v~t) = [(a, t'\{\mu v~t/v\})]_{\Eq}\).
		Where \(r := (\mu v~t)\{s/u\}\),
		\begin{align*}
			\zeta((\mu v~t)\{s/u\}) &= [(a, t'\{(\mu v~t)/v\}\{s/u\}\})]_{\Eq} \\
			&= [(a, t'\{s/u\}\{(\mu v~t)\{s/u\}/v\}\})]_{\Eq} \tag{\cref{lem:composing behavioural substitutions}}\\
			&= [(a, t'\{s/u\})]_{\Eq}\{r/\!/v\} = \zeta(t\{s/u\})\{r/\!/v\}
		\end{align*}
		because \(v\) is dead in \(s\). 
		\item Finally, let \(\zeta(t) = [p(v_1, \dots, v_n, (a_1, r_1), \dots, (a_m,r_m))]_{\Eq}\).
		We have
		\begin{align*}
			\zeta(r) 
			&= \zeta((\mu v~t)\{s/u\}) \\
			&= [p(w_1, \dots, w_n, (a_1, r_1\{\mu v~t/v\}\{s/u\}), \dots, (a_m,r_m\{\mu v~t/v\}\{s/u\}))]_{\Eq} \\
			&= [p(w_1, \dots, w_n, (a_1, r_1\{s/u\}\{r/v\}), \dots, (a_m,r_m\{s/u\}\{r/v\}))]_{\Eq} \\
			&= [p(v_1', \dots, v_n', (a_1, r_1\{s/u\}), \dots, (a_m,r_m\{s/u\}))]_{\Eq}\{r/\!/v\} = \zeta(t\{s/u\})\{r/\!/v\}
		\end{align*}
		where \(v_i' = v_i\) if \(v_i \neq u\) else \(v_i' = u\), and \(w_i = v_i'\) if \(v_i' \neq v\) else \(w_i=0\). \qedhere
	\end{itemize}
\end{proof}

\begin{lemma}\label{lem:freeness and liveness}
	Let \(v \in V\) and \(e \in \Exp\). 
	If \(v\) is not free in \(e\), then \(v\) is dead in \(\sem e\).
\end{lemma}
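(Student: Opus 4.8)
The plan is to prove the explicitly quantified statement that \(v\notin\fv(e)\) implies \(\sem e\{s/v\}=\sem e\) for every \(s\in Z\), by structural induction on \(e\). Throughout I would use that \(\sem-\) is the unique \(\Sigma_M\)-algebra homomorphism \((\Exp,\alpha)\to(Z,\gamma)\), so that it commutes with the constructors (e.g.\ \(\sem{ae}=\gamma(a,\sem e)\), \(\sem{\sigma(e_1,\dots,e_n)}=\gamma(\sigma(\sem{e_1},\dots,\sem{e_n}))\), and \(\sem{\mu u\,f}=\gamma(\mu u\,\sem f)\)), and that \(\zeta\) is a bijection by Lambek's lemma. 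The latter means that to prove \(\sem e\{s/v\}=\sem e\) it suffices to check that both sides have the same image under \(\zeta\), which I compute by unfolding the behavioural differential equation defining \(\{s/v\}\).

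The base and non-recursive inductive cases are routine. When \(e\) is the constant \(0\) (or any constant), \(\zeta(\sem e)=[0]_{\Eq}\) and the branching clause of \(\{s/v\}\) with empty arity returns the same class. When \(e=u\) is a variable with \(v\notin\fv(u)=\{u\}\), we have \(u\neq v\), so the second clause gives \(\zeta(\sem u\{s/v\})=[u]_{\Eq}=\zeta(\sem u)\). For \(e=ae'\) we have \(\fv(e)=\fv(e')\), so the induction hypothesis yields \(\sem{e'}\{s/v\}=\sem{e'}\), and the prefix clause gives \(\zeta(\sem e\{s/v\})=[(a,\sem{e'}\{s/v\})]_{\Eq}=[(a,\sem{e'})]_{\Eq}=\zeta(\sem e)\); the case \(e=\sigma(e_1,\dots,e_n)\) is identical, using \(\fv(e)=\bigcup_i\fv(e_i)\) and the branching clause. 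In each case injectivity of \(\zeta\) closes the argument.

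The real work is the recursion case \(e=\mu u\,f\), where \(v\notin\fv(\mu u\,f)=\fv(f)\setminus\{u\}\) splits into \(v=u\) and \(v\notin\fv(f)\). I would prove both at once via the claim: if either \(v=u\), or \(v\) is dead in \(\sem f\) (which holds by the induction hypothesis in the second case), then \(v\) is dead in \(\mu u\,\sem f\). The strategy is to show that the behaviour \(r:=(\mu u\,\sem f)\{s/v\}\) solves the guarded behavioural differential equation \(\zeta(x)=\zeta(\sem f)\{x/\!/u\}\) that uniquely characterises \(\mu u\,\sem f\); uniqueness of its solution then forces \(r=\mu u\,\sem f\). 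Concretely, one unfolds \(\zeta(\mu u\,\sem f)=\zeta(\sem f)\{\mu u\,\sem f/\!/u\}\), applies the clauses of \(\{s/v\}\), checks that no top-level output variable equals \(v\) (for \(v=u\) because the guarded substitution has already erased it, for \(v\neq u\) because \(v\) is dead in \(\sem f\)), and re-associates the nested substitutions under each action: when \(v=u\) via the identity \(R\{A/v\}\{B/v\}=R\{A\{B/v\}/v\}\), and when \(v\neq u\) via \cref{lem:composing behavioural substitutions} together with the fact that \(v\) is dead in each action-successor of \(\sem f\).

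The main obstacle is exactly this recursion case: controlling the interaction between the bound recursion variable \(u\) and an arbitrary substituted behaviour \(s\) (potential capture), and propagating ``\(v\) dead in \(\sem f\)'' down to the action-successors appearing beneath \(\mu u\). It is tempting to instead push the substitution through the fixpoint with \cref{lem:semantic fp distributes}, obtaining \((\mu u\,\sem f)\{s/v\}=\mu u\,(\sem f\{s/v\})=\mu u\,\sem f\); but that lemma requires the recursion variable \(u\) to be dead in \(s\), which fails for general \(s\in Z\) (a behaviour may keep every variable live, and renaming \(u\) to a single fresh variable cannot avoid all such \(s\) simultaneously). This is why I favour the direct behavioural-differential-equation argument above, which never needs to move \(s\) across the binder; \cref{thm:bisimilar maps} could serve as an alternative coinductive vehicle for the same computation.
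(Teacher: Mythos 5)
Your overall strategy --- structural induction on \(e\), with the recursion case handled by showing that \((\mu u\,\sem f)\{s/v\}\) satisfies the behavioural differential equation \(\zeta(x)=\zeta(\sem f)\{x/\!/u\}\) and invoking uniqueness of its solution --- is exactly the paper's proof, and the non-recursive cases are fine. The gap is in the subcase \(u\neq v\) of the recursion step. To re-associate the nested substitutions under each action, i.e.\ to pass from \(t_i\{\mu u\,\sem f/u\}\{s/v\}\) to \(t_i\{s/v\}\{(\mu u\,\sem f)\{s/v\}/u\}\), you must apply \cref{lem:composing behavioural substitutions} with its variables instantiated so that its hypothesis reads ``\(u\) is dead in \(s\)''. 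For arbitrary \(s\in Z\) this hypothesis simply fails, and the re-association is genuinely false when \(u\) is live in \(s\): the occurrences of \(u\) in \(s\) get captured when the loop on \(u\) is closed. You correctly identify this capture problem as the reason \cref{lem:semantic fp distributes} cannot be used, but your ``direct behavioural-differential-equation argument'' runs into the very same hypothesis, and nothing in your outline explains how the computation goes through for \(s\) in which \(u\) is live.

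The paper closes this gap with an extra step you are missing: it first proves \(\sem{\mu u~e}\{s/v\}=\sem{\mu u~e}\) only for those \(s\) in which \(u\) is dead (where \cref{lem:composing behavioural substitutions} does apply), and then extends to arbitrary \(s\) via
\[
\sem{\mu u~e}\{s/v\}=\sem{\mu u~e}\{\sem 0/v\}\{s/v\}=\sem{\mu u~e}\{\sem 0\{s/v\}/v\}=\sem{\mu u~e}\{\sem 0/v\}=\sem{\mu u~e},
\]
using the same-variable composition law \(r\{s/v\}\{t/v\}=r\{s\{t/v\}/v\}\) and the fact that both \(u\) and \(v\) are dead in \(\sem 0\). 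You need this (or some substitute for it) to finish. A second, smaller point: you propose to ``propagate'' deadness of \(v\) in \(\sem f\) down to the individual action-successors \(t_i\); equality of \(\Eq\)-congruence classes does not immediately yield componentwise equality of the successors, and the paper instead uses deadness only at the level of the whole class, via \(\zeta(\sem e\{s/v\})=\zeta(\sem e)\), which suffices and avoids this issue.
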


\begin{proof}
	We show that \(\zeta(\sem e\{s/v\}) = \zeta(\sem e)\) for all \(s\) by induction on \(e\). 
	\begin{itemize}
		\item In the variable case, we only consider \(u \neq v\). 
		Here, \(\zeta(\sem u) = [u]_{\Eq}\), so \(\zeta(\sem{u\{s/v\}}) = \sem u\).
		\item Now suppose the result is true for \(e\). 
		Since \(v\) is free in \(ae\) if and only if \(v\) is free in \(e\), it must be the case that \(v\) is not free in \(e\). 
		By the induction hypothesis, \(\zeta(\sem{ae}\{s/v\}) = [(a, \sem e\{s/v\})]_{\Eq} = [(a, \sem e)]_{\Eq}\).
		\item Next, assume the result for \(e_1, \dots, e_n\), and let \(\sigma\) be an \(S\)-operation. 
		Since \(v\) is not free in \(\sigma(e_1,\dots,e_n)\) if and only if \(v\) is not free in any of the \(e_i\), and \(\zeta(\sem{\sigma(e_1, \dots, e_n)}) = \sigma(\zeta(\sem{e_1}), \dots, \zeta(\sem{e_n})) \) we have 
		\begin{align*}
			\zeta(\sem{\sigma(e_1, \dots, e_n)}\{s/v\}) &= \sigma(\zeta(\sem{e_1}\{s/v\}), \dots, \zeta(\sem{e_n}\{s/v\})) \\
			&= \sigma(\zeta(\sem{e_1}), \dots, \zeta(\sem{e_n})) = \zeta(\sem{\sigma(e_1, \dots, e_n)}) 
		\end{align*}
		
		\item Now assume the result for \(e\) and let \(u \in V\). 	
		In this case, we consider the expression \(\mu u~e\) and the following two subcases. 
		\begin{itemize}
			\item If \(u = v\), then by assumption \(v\) is not free in \(\mu u~e\).
			Here, \(\zeta(\sem{\mu v~e}) = \zeta(e)\{\sem {\mu v~e}/\!/v\}\).
			Let \(p(v_1, \dots, v_n, (a_1, t_1), \dots, (a_m,t_m)) \in S^*(V + A \times Z)\) such that 
			\[\zeta(\sem e) = [p(v_1, \dots, v_n, (a_1, t_1), \dots, (a_m,t_m))]_{\Eq}\] 
			and set \(w_i = 0 \) if \(v_i = v\) else \(w_i = v_i\).
			By definition, 
			\begin{align*}
				\zeta(\sem {\mu v~e}) &= \zeta(e)\{\sem {\mu v~e}/\!/v\} \\
				&= [p(w_1, \dots, w_n, (a_1, t_1\{\sem {\mu v~e}/v\}), \dots, (a_m,t_m\{\sem {\mu v~e}/v\}))]_{\Eq}
			\end{align*}
			because \(w_i \neq v\) for any \(i \le n\), and consequently,
			\begin{align*}
				&\zeta(\sem {\mu v~e}\{s/v\}) \\
				&= [p(w_1, \dots, w_n, (a_1, t_1\{\sem {\mu v~e}/v\}\{s/v\}), \dots, (a_m,t_m\{\sem {\mu v~e}/v\}\{s/v\}))]_{\Eq} \\
				&= [p(w_1, \dots, w_n, (a_1, t_1\{\sem {\mu v~e}\{s/v\}/v\}), \dots, (a_m,t_m\{\sem {\mu v~e}\{s/v\}/v\}))]_{\Eq} \\
				&= [p(w_1, \dots, w_n, (a_1, t_1), \dots, (a_m,t_m))]_{\Eq} \{\sem {\mu v~e}\{s/v\}/\!/v\} \\
				&= \zeta(\sem e)\{\sem {\mu v~e}\{s/v\}/\!/v\}
			\end{align*}
			Now \(\sem{\mu v~e}\) is the unique solution to the behavioural differential equation \(\zeta(r) = \zeta(\sem e)\{r/\!/v\}\) in the indeterminate \(r\), and \(r = \sem{\mu v~e}\{s/t\}\) satisfies this equation, so it must be the case that \(\sem {\mu v~e} = \sem{\mu v~e}\{s/v\}\). 
			Hence, \(v\) is dead in \(\sem {\mu v~e}\). 
			
			\item Now assume \(u \neq v\). 
			This means that \(v\) is free in \(\mu u~e\) if and only if it is free in \(e\), so by the inductive hypothesis \(v\) is dead in \(\sem e\). 
			Again, we let \[
				\zeta(\sem e) = [p(v_1, \dots, v_n, (a_1, t_1), \dots, (a_m,t_m))]_{\Eq}
				\] 
			and \(w_i = 0 \) if \(v_i = v\) else \(w_i = v_i\).
			In the previous case, we showed that \(u\) is dead in \(\sem{\mu u~e}\), so we begin by showing that \(\sem{\mu u~e}\{s/v\} = \sem{\mu u~e}\) for each \(s \in Z\) such that \(u\) is dead in \(s\). 
			We have
			\begin{align*}
				&\zeta(\sem{\mu u~e}\{s/v\})\\
				&= [p(w_1, \dots, w_n, (a_1, t_1\{\sem {\mu u~e}/u\}\{s/v\}), \dots, (a_m,t_m\{\sem {\mu u~e}/u\}\{s/v\}))]_{\Eq} \\
				&= [p(w_1, \dots, w_n, (a_1, t_1\{s/v\}\{\sem {\mu u~e}\{s/v\}/u\}), \dots,\\
				&\hspace{4cm} (a_m,t_m\{s/v\}\{\sem {\mu u~e}\{s/v\}/u\}))]_{\Eq} \tag{\cref{lem:composing behavioural substitutions}}\\
				&= [p(v_1, \dots, v_n, (a_1, t_1\{s/v\}), \dots, (a_m,t_m\{s/v\}))]_{\Eq} \{\sem {\mu u~e}\{s/v\}/\!/u\} \\
				&= \zeta(\sem e) \{\sem {\mu u~e}\{s/v\}/\!/u\}
				\tag{\(v\) dead in \(\sem{e}\)}
			\end{align*}
			Hence, \(r = \sem {\mu u~e}\{s/v\}\) solves the behavioural differential equation defining \(\sem{\mu u~e}\), so \(\sem{\mu u~e} = \sem {\mu u~e}\{s/v\}\).
			The desired result follows from the following observation: Both \(u\) and \(v\) are clearly dead in \(\sem 0\), so for arbitrary \(s \in Z\) we have 
		\begin{align*}
			\sem {\mu u~e}\{s/v\} &= \sem {\mu u~e}\{\sem 0/v\}\{s/v\} = \sem {\mu u~e}\{\sem 0\{s/v\}/v\} \\
			&= \sem {\mu u~e}\{\sem 0/v\}= \sem {\mu u~e}
		\end{align*}
		It follows that \(v\) is dead in \(\sem{\mu u~e}\). \qedhere
		\end{itemize}
	\end{itemize}
\end{proof}

\begin{lemma}\label{lem:compositionality sublemma 1}
	Let \(e,f \in \Exp\) and \(v \in V\).
	Assume that \(v\) is not free in \(f\), and that no free variable of \(f\) appears bound in \(e\).
	Then \(\sem{e}\{\sem f/v\} = \sem {e[f/v]}\).
\end{lemma}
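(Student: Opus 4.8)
The plan is to proceed by induction on the structure of \(e\), exploiting that \(\sem-\) is a \(\Sigma_M\)-algebra homomorphism and that \(\zeta\) is a bijection (Lambek's lemma), so that it suffices to compare the behaviours on both sides clause by clause. Recall that the algebra structure on \(Z\) gives \(\zeta(\sem{u}) = [u]_{\Eq}\), \(\zeta(\sem{ae'}) = [(a,\sem{e'})]_{\Eq}\), \(\zeta(\sem{\sigma(e_1,\dots,e_n)}) = [\sigma(\zeta(\sem{e_1}),\dots,\zeta(\sem{e_n}))]_{\Eq}\), and \(\zeta(\sem{\mu u~e'}) = \zeta(\sem{e'})\{\sem{\mu u~e'}/\!/u\}\); I would match each of these against the defining clauses of behavioural substitution from \cref{fig: algebraic semantics} and the definition of \(\{\sem f/v\}\).

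The base and non-recursive inductive cases are direct. For \(e = v\), the clause \(\zeta(t\{\sem f/v\}) = \zeta(\sem f)\) when \(\zeta(t)=[v]_{\Eq}\) gives \(\sem{v}\{\sem f/v\} = \sem f = \sem{v[f/v]}\); for \(e = u \neq v\) and for \(e = 0\), the relevant behaviour is fixed by \(\{\sem f/v\}\) while \(e[f/v] = e\). For \(e = ae'\) and \(e = \sigma(e_1,\dots,e_n)\), the matching clauses of \(\{\sem f/v\}\) push the substitution into the arguments, and since \(\bv(e') \subseteq \bv(e)\) (respectively \(\bv(e_i) \subseteq \bv(e)\)) the hypothesis that no free variable of \(f\) is bound in \(e\) descends to each subterm, so the induction hypothesis applies; injectivity of \(\zeta\) then closes each case.

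The recursion case \(e = \mu u~e'\) is the crux, and I would split on whether \(u = v\). If \(u = v\), then \(v\) is not free in \(\mu v~e'\), so by \cref{lem:freeness and liveness} \(v\) is dead in \(\sem{\mu v~e'}\), whence \(\sem{\mu v~e'}\{\sem f/v\} = \sem{\mu v~e'}\); since also \((\mu v~e')[f/v] = \mu v~e'\), both sides coincide. If \(u \neq v\), the standing assumption forces \(u \notin \fv(f)\) (as \(u \in \bv(e)\)), so the syntactic substitution commutes with the binder, \((\mu u~e')[f/v] = \mu u~(e'[f/v])\), and \cref{lem:freeness and liveness} gives that \(u\) is dead in \(\sem f\). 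This is exactly the side-condition needed to apply \cref{lem:semantic fp distributes}, with the roles of its two variables interchanged (its recursion variable instantiated as our \(u\), its substitution variable as our \(v\), and \(s = \sem f\), \(t = \sem{e'}\)), yielding \(\sem{\mu u~e'}\{\sem f/v\} = \mu u~(\sem{e'}\{\sem f/v\})\). Applying the induction hypothesis to \(e'\) rewrites the body as \(\sem{e'[f/v]}\), and since \(\sem-\) is a homomorphism, re-folding the recursion operator gives \(\mu u~\sem{e'[f/v]} = \sem{\mu u~(e'[f/v])} = \sem{(\mu u~e')[f/v]}\).

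I expect the main obstacle to be the bookkeeping of side-conditions in the recursion case, so that \cref{lem:semantic fp distributes} applies cleanly: one must verify that the binder \(u\) is dead in \(\sem f\), that \(u\) is distinct from \(v\), and that \([f/v]\) is actually well-defined on \(\mu u~e'\) — all of which follow from the assumption that no free variable of \(f\) is bound in \(e\), together with \cref{lem:freeness and liveness}. Once these align, the distribution lemma does the real work of commuting the behavioural substitution past the recursion operator, and everything else reduces to the induction hypothesis and injectivity of \(\zeta\).
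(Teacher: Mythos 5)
Your proposal is correct, and the overall skeleton (induction on \(e\), comparing both sides under \(\zeta\) clause by clause) matches the paper's proof; the non-recursive cases are handled identically. The difference is in the recursion case. The paper does not split on \(u = v\) at all: it writes \(\zeta(\sem e)\) as an explicit representative \([p(\dots)]_{\Eq}\), pushes the two substitutions through it, commutes them with \cref{lem:composing behavioural substitutions}, and concludes by showing that \(\sem{\mu u~e}\{\sem f/v\}\) satisfies the behavioural differential equation defining \(\sem{\mu u~(e[f/v])}\), hence equals it by uniqueness. You instead (i) explicitly dispatch the \(u = v\) subcase via \cref{lem:freeness and liveness} — a case the paper silently glosses over, since its appeal to \cref{lem:composing behavioural substitutions} needs \(u \neq v\), so your treatment is actually the more careful one — and (ii) for \(u \neq v\) you invoke \cref{lem:semantic fp distributes} as a black box to commute \(\{\sem f/v\}\) past the recursion operator. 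Since \cref{lem:semantic fp distributes} is itself proved by exactly the BDE-uniqueness-plus-\cref{lem:composing behavioural substitutions} argument that the paper inlines here, your proof is a clean modularisation of the paper's: it buys a shorter, more readable recursion case at the cost of depending on the earlier lemma (which appears before this one in the appendix, so there is no circularity). All side conditions you flag — \(u \notin \fv(f)\) from the hypothesis that no free variable of \(f\) is bound in \(e\), deadness of \(u\) in \(\sem f\) via \cref{lem:freeness and liveness}, well-definedness of \((\mu u~e')[f/v]\), and the descent of the hypotheses to subterms for the induction — check out.
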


\begin{proof}
	We proceed by induction on \(e\).
	\begin{itemize}
		\item For the variable case, let \(u \neq v\). There are cases to consider. 
		\begin{itemize}
			\item First, suppose \(e = v\). 
			We have \(
			\zeta(\sem{v}\{\sem{f}/v\})
			= \zeta(\sem{f})
			=\zeta (\sem{v[f/v]})
			\).
			\item Now assume \(e=u\). 
			Here, we have \(
			\zeta(\sem{u}\{\sem{f}/v\})
			= \zeta(\sem{u})
			= \zeta(\sem{u[f/v]})
			\).
		\end{itemize}
	
		\item For the inductive step, assume the result for \(e\) and consider the process term \(ae\).
		We have \(
		\zeta(\sem{ae}\{\sem{f}/v\})
		= [(a, \sem{e}\{\sem{f}/v\})]_{\Eq}
		= [(a, \sem{e[f/v]})]_{\Eq})
		= \zeta(\sem{ae[f/v]})
		\). 
		\item Now consider \(\sigma(e_1, \dots, e_n)\) for \(e_i \in \Exp\), \(i \le n\), and assume the result for \(e_1, \dots, e_n\).
		We have
		\begin{align*}
			\zeta(\sem{\sigma(e_1, \dots, e_n)}\{\sem f/v\})
			&= \sigma(\zeta(\sem{e_1}\{\sem f/v\}), \dots, \zeta(\sem{e_n}\{\sem f/v\})) \\
			&= \sigma(\zeta(\sem{e_1[f/v]}), \dots, \zeta(\sem{e_n[f/v]})) \\
			&= \zeta(\sem{\sigma(e_1, \dots, e_n)[f/v]})
		\end{align*}
		\item Finally, assume the result for \(e\) and consider \(\mu u~e\).
		Since no free variable of \(f\) is bound in \(\mu u~e\), \(u\) in particular is not free in \(f\).
		By \cref{lem:freeness and liveness}, \(u\) is therefore dead in \(\sem f\).
		Where \(\sem e = [p(v_1, \dots, v_n, (a_1,t_1), \dots, (a_m,t_m))]_{\Eq}\) and \(w_i = 0\) if \(v_i = u\) else \(w_i = v_i\), this leads to the computation
		\begin{align*}
			&\zeta(\sem{\mu u~e}\{\sem f/v\}) \\
			&= [p(w_1,\dots,w_n, (a_1, t_1\{\sem{\mu u~e}/u\}\{\sem f/v\}), \dots, (a_m, t_m\{\sem{\mu u~e}/u\}\{\sem f/v\}))]_{\Eq} \\
			&= [p(w_1,\dots,w_n, (a_1, t_1\{\sem f/v\}\{\sem{\mu u~e}\{\sem f/v\}/u\}), \dots, \\
			&\hspace{4cm} (a_m, t_m\{\sem f/v\}\{\sem{\mu u~e}\{\sem f/v\}/u\}))]_{\Eq} \tag{\cref{lem:composing behavioural substitutions}}\\
			&= \zeta(\sem{e}\{\sem f/v\})\{\sem{\mu u~e}\{\sem f/v\}/\!/u\}\\
			&= \zeta(\sem{e[f/v]})\{\sem{\mu u~e}\{\sem f/v\}/\!/u\} \tag{inductive hypothesis}
		\end{align*}
		It follows that, \(r = \sem{\mu u~e}\{\sem f/v\}\) satisfies the defining behavioural differential equation of \(\sem{\mu u~(e[f/v])}\).
		It follows that \(
			\sem{(\mu u~e)[f/v]} = \sem{\mu u~(e[f/v])} = \sem{e}\{f/v\} 
		\). \qedhere
	\end{itemize}
\end{proof}

The last two lemmas of this section show that guarded syntactic substitution at the level of \(\Exp\) plays well with guarded syntactic substitution at the level of \(B_M\Exp\).

\begin{lemma}\label{lem:guardingtheunguarded}
	Let \(v \in V\) and \(e,g \in \Exp\).
	Assume that no free variable of \(g\) appears bound in \(e\). 
	Then \(\epsilon(e)[g/\!/v] = \epsilon(e[g/\!/v])\).
\end{lemma}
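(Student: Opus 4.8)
The plan is to prove the identity by structural induction on \(e\), reading each case off the definition of \(\epsilon\) in \cref{fig: coalgebraic operational semantics} together with the clause-by-clause definitions of the two guarded substitution operators (the one on \(\Exp\) and the one on \(B_M\Exp\)). Throughout, the standing hypothesis \(\fv(g)\cap\bv(e)=\emptyset\) is exactly what keeps every ordinary substitution \([g/v]\) and every guarded substitution \([g/\!/v]\) appearing below well-defined, by the well-definedness lemmas of this appendix, and it is inherited by every subterm of \(e\) since \(\bv\) only grows as one descends. I would also use that the \(B_M\Exp\)-level operator \([g/\!/v]\) commutes with each \(S\)-operation \(\sigma\), which is immediate from the way it is lifted in \cref{lem:guardedsubstitution}.

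The base and non-recursive cases are direct. If \(e=u\) with \(u\neq v\) then both sides equal \(\eta(u)\), and if \(e=v\) then the left side is \(\eta(v)[g/\!/v]=[0]_{\Eq}\) while the right side is \(\epsilon(0)=[0]_{\Eq}\). If \(e=af\), then \(\epsilon(af)[g/\!/v]=[(a,f)]_{\Eq}[g/\!/v]=[(a,f[g/v])]_{\Eq}=\epsilon(a(f[g/v]))=\epsilon((af)[g/\!/v])\). If \(e=\sigma(e_1,\dots,e_n)\), then pushing \([g/\!/v]\) through \(\sigma\) and applying the induction hypothesis to each \(e_i\) closes the case.

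The recursion case \(e=\mu u~f\) is the crux, and it splits on whether \(u=v\). When \(u=v\) the syntactic substitution does nothing, \((\mu v~f)[g/\!/v]=\mu v~f\), so the right side is \(\epsilon(\mu v~f)\); it then remains to see that \([g/\!/v]\) acts as the identity on \(\epsilon(\mu v~f)\). I would isolate this as an auxiliary claim: if \(v\) is not free in \(e'\), then \(\epsilon(e')[g/\!/v]=\epsilon(e')\), proved by a short separate induction using that \([g/v]\) fixes any expression in which \(v\) is not free. When \(u\neq v\), the hypothesis forces \(u\notin\fv(g)\), so \((\mu u~f)[g/\!/v]=\mu u~(f[g/\!/v])\); unfolding \(\epsilon\) on the recursion, the right side becomes \(\epsilon(f[g/\!/v])[\mu u~(f[g/\!/v])/\!/u]\), which by the induction hypothesis equals \((\epsilon(f)[g/\!/v])[\mu u~(f[g/\!/v])/\!/u]\). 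The left side unfolds to \(\epsilon(f)[\mu u~f/\!/u][g/\!/v]\), so the whole case reduces to the commutation statement \(q[\mu u~f/\!/u][g/\!/v]=(q[g/\!/v])[\mu u~(f[g/\!/v])/\!/u]\) for the two \(B_M\Exp\)-level guarded substitutions, with \(q=\epsilon(f)\).

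This commutation is the main obstacle. I would verify it on a representative \(p\in S^*(V+A\times\Exp)\) of \(q\), where it becomes a statement about ordinary substitution inside the transition pairs: on a pair \((a,h)\) the two sides produce \((a,h[\mu u~f/u][g/v])\) and \((a,h[g/v][\mu u~(f[g/\!/v])/u])\), while on variables the two sides agree by a direct check using \(u\neq v\). Reconciling the two pair-expressions is precisely where the interaction between guarded substitution and the unfolding of recursion must be controlled, and I expect this to require the standard substitution lemma for ordinary substitution together with careful free/bound bookkeeping showing that the relevant occurrences of \(v\) become inert after substituting the closed term \(\mu u~(f[g/\!/v])\), so that the discrepancy between \([g/v]\) and \([g/\!/v]\) beneath the inner \(\mu u\) does not affect the result. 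Finally, I would transport the representative-level computation back to congruence classes using the preceding identity of this appendix, \([p]_{\Eq}[g/\!/v]=\epsilon(p^{\dagger}[g/\!/v])\), which guarantees that the argument descends well from \(S^*(V+A\times\Exp)\) to \(B_M\Exp\).
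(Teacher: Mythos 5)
Your induction has the same skeleton as the paper's proof, and your variable, prefix, and \(\sigma\) cases are exactly right; you are in fact more careful than the paper in splitting the recursion case on \(u=v\) versus \(u\neq v\), and your auxiliary claim for the \(u=v\) branch (that \([g/\!/v]\) fixes \(\epsilon(e')\) when \(v\) is not free in \(e'\)) is true and provable by the short induction you describe. The genuine gap is the step you defer in the \(u\neq v\) branch. You reduce that branch to the commutation \(q[\mu u~f/\!/u][g/\!/v]=(q[g/\!/v])[\mu u~(f[g/\!/v])/\!/u]\) and observe that on a pair \((a,h)\) this asks for \(h[\mu u~f/u][g/v]=h[g/v][\mu u~(f[g/\!/v])/u]\), which you expect to follow from the standard substitution lemma plus free/bound bookkeeping. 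It does not. Take \(h=u\), which does occur: \((a,u)\) appears in a representative of \(\epsilon(f)\) whenever \(f\) contains a subterm \(au\) in head position. Then the left-hand side is \(\mu u~(f[g/v])\) while the right-hand side is \(\mu u~(f[g/\!/v])\), so you would need \(f[g/v]=f[g/\!/v]\) as expressions; this fails precisely when \(v\) has an unguarded free occurrence in \(f\), e.g.\ \(f=\sigma(au,v)\) gives \(\sigma(au,g)\) versus \(\sigma(au,0)\). No amount of "inertness" bookkeeping can repair this, because the two second components are distinct elements of \(A\times\Exp\), and distinct generators stay distinct in \(B_M\Exp\) for a nontrivial theory. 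So the crux of the lemma is not a routine corollary of the substitution lemma, and the proof as proposed does not close on such \(f\).

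You should know that you have put your finger on exactly the soft spot of the paper's own argument: its recursion case asserts without proof that \([g/\!/v]\) is idempotent and that \([g/\!/v]\) commutes with \([\mu u~(e[g/\!/v])/\!/u]\), and those assertions meet the same obstruction on expressions where \(v\) occurs unguarded beneath an inner binder \(\mu u\). What is missing from your write-up (and what would need to be supplied, possibly together with a side condition on where \(v\) may occur in \(e\), or an argument special to the instances \(g=\mu v~e\) actually used in the soundness proof) is a real treatment of this interaction, not a citation of the ordinary substitution lemma.
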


\begin{proof}
	By induction on \(e\).
	\begin{itemize}
		\item If \(u \neq v\), then \(\epsilon(u)[g/\!/v] = [u]_{\Eq} = \epsilon(u[g/\!/v])\) since \(v\) does not appear in \(u\). 
		Otherwise, \(\epsilon(v)[g/\!/v] = [0]_{\Eq} = \epsilon(v[g/\!/v])\). 
		
		\item In the prefixing case, assume the result for \(e\) and simply observe that \((ae)[g/\!/v] = ae[g/v]\).
		Hence,
		\[
		\epsilon(ae)[g/\!/v] = [(a, e[g/v])]_{\Eq} = \epsilon(ae[g/v]) = \epsilon((ae)[g/\!/v])
		\] 
		
		\item Now assume the result for \(e_1, \dots e_n\).
		We have
		\begin{align*}
			\epsilon(\sigma(e_1,\dots, e_n))[g/\!/v]
			&= \sigma(\epsilon(e_1)[g/\!/v], \dots, \epsilon(e_n)[g/\!/v]) = \sigma(\epsilon(e_1[g/\!/v]), \dots, \epsilon(e_n[g/\!/v]))  \\
			&= \epsilon(\sigma(e_1[g/\!/v],\dots, e_n[g/\!/v])) = \epsilon(\sigma(e_1,\dots, e_n))[g/\!/v]
		\end{align*} 
		
		\item In the recursion step, assume the result for \(e\) and consider \(\mu u~e\).
		In particular, \(u\) is not free in \(g\). 
		It follows that \([g/\!/v][g/\!/v]=[g/\!/v]\), that \((\mu u~e)[g/\!/v] = \mu u~(e[g/\!/v])\), and that the operators \([g/\!/v]\) and \([\mu u~e[g/\!/v]/\!/u]\) commute.
		Thus, 
		\begin{align*}
			\epsilon(\mu u~e[g/\!/v]) 
			&= \epsilon(e[g/\!/v])[\mu u~e[g/\!/v]/\!/u] = \epsilon(e)[g/\!/v][\mu u~e[g/\!/v]/\!/u]	\\
			&= \epsilon(e)[g/\!/v][\mu u~e/\!/u][g/\!/v] = \epsilon(e)[\mu u~e/\!/u][g/\!/v][g/\!/v] = \epsilon(\mu u~e)[g/\!/v] \qedhere
		\end{align*}
	\end{itemize} 
\end{proof}

\begin{lemma}\label{lem:soundness sublemma 1}
	Let \(v \in V\) and \(e,g \in \Exp\).
	If \(v\) is guarded in \(e\) and no free variable of \(g\) appears bound in \(e\), then \(
	\epsilon(e[g/v]) = \epsilon (e)[g/\!/v]
	\).
\end{lemma}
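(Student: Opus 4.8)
The plan is to reduce the statement to \cref{lem:guardingtheunguarded} by proving that, under the guardedness hypothesis, ordinary syntactic substitution and guarded syntactic substitution on \(\Exp\) produce the \emph{same} expression. That is, I would first establish the auxiliary claim: if \(v\) is guarded in \(e\) and no free variable of \(g\) is bound in \(e\), then \(e[g/v] = e[g/\!/v]\). Granting this, the lemma is immediate, since \(\epsilon(e[g/v]) = \epsilon(e[g/\!/v]) = \epsilon(e)[g/\!/v]\), where the second equality is exactly \cref{lem:guardingtheunguarded} (both substitutions, and the hypothesis of that lemma, are available because no free variable of \(g\) is bound in \(e\)).

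I would prove the auxiliary claim by induction on \(e\), following the clauses defining guardedness. The conceptual point is that \([g/v]\) and \([g/\!/v]\) differ only on a \emph{bare} free occurrence of \(v\) at the top level — ordinary substitution sends it to \(g\), guarded substitution to \(0\) — and guardedness rules out exactly such occurrences. Concretely, in the variable case \(e = u \in V \setminus \{v\}\) (and for a constant \(c\), which is a nullary \(\sigma\)) both operators act as the identity, while the offending case \(e = v\) is excluded because \(v\) is not guarded in itself. In the prefix case \(e = af\) the two operators \emph{coincide by definition}, since both unfold to \(a(f[g/v])\); this is the crux, and notably no induction hypothesis on \(f\) is needed (nor available, as \(v\) need not be guarded in \(f\)). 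In the case \(e = \mu v~f\) the variable \(v\) is bound, so both substitutions leave \(e\) unchanged. The remaining cases \(e = \mu u~e_1\) with \(u \neq v\) and \(e = \sigma(e_1,\dots,e_n)\) are handled by pushing both substitutions inside — using \((\mu u~e_1)[g/v] = \mu u~(e_1[g/v])\) and likewise for \([g/\!/v]\), both justified because \(u\) is not free in \(g\) — and then applying the induction hypothesis to each subterm, noting that guardedness of \(v\) is inherited by \(e_1\) and by each \(e_i\).

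The main difficulty is bookkeeping rather than mathematical depth: one must keep track of the well-definedness side conditions for these partial substitution operators, all of which follow from the assumption that no free variable of \(g\) is bound in \(e\) (a property inherited by every subterm), and one must check that the structural cases of the induction on \(e\) line up with the clauses of the definition of guardedness. The only genuinely load-bearing step is the prefix case: one must recognise that guarded syntactic substitution was \emph{defined} to revert to ordinary substitution underneath an action prefix, which is precisely what forces the two operators to agree on guarded expressions.

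I would favour this route over a direct structural induction on \(\epsilon(e)\). The direct approach would, in the recursion case \(e = \mu u~e_1\), force one to commute the two guarded substitution operators \([g/\!/v]\) and \([\mu u~e_1/\!/u]\) on \(B_M\Exp\) and to invoke a substitution-composition identity on the continuations; reducing instead to the already-proved \cref{lem:guardingtheunguarded} packages all of that work away and leaves only the short syntactic comparison of \([g/v]\) and \([g/\!/v]\).
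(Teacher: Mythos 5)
Your proposal is correct, and it factors the argument differently from the paper. The paper proves the lemma by a direct structural induction on \(e\), computing \(\epsilon(e[g/v])\) and \(\epsilon(e)[g/\!/v]\) case by case; in the recursion case \(\mu u\,e\) with \(u\neq v\) it has to commute the operators \([g/\!/v]\) and \([\mu u\,e/\!/u]\) on \(B_M\Exp\) (reusing the machinery from the proof of \cref{lem:guardingtheunguarded}), and it quietly appeals to exactly your auxiliary claim --- ``\(e[g/\!/v]=e[g/v]\) precisely because \(v\) is guarded in \(e\)'' --- as a one-line remark without proof. You instead promote that syntactic identity to the central lemma, prove it by an induction that never touches \(\epsilon\), and then obtain the statement in one step from \cref{lem:guardingtheunguarded}. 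Your case analysis is sound: the offending case \(e=v\) is excluded by clause (i) of guardedness, the prefix case holds definitionally because \((ae)[g/\!/v]=a(e[g/v])\) reverts to ordinary substitution under the action (so no induction hypothesis is needed there, as you note), the case \(e=\mu v\,f\) is handled because both operators fix it, and the cases \(\sigma(e_1,\dots,e_n)\) and \(\mu u\,e_1\) with \(u\neq v\) push through by the induction hypothesis, with well-definedness and the ``no free variable of \(g\) bound in \(e\)'' hypothesis inherited by subterms. What your route buys is a cleaner separation of concerns --- all semantic content is packaged into the already-proved \cref{lem:guardingtheunguarded}, and the new work is purely syntactic --- and it makes explicit a fact the paper uses but does not isolate; what the paper's direct induction buys is independence from the precise statement of \cref{lem:guardingtheunguarded} at the cost of partially re-deriving it.
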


\begin{proof}
	Suppose \(v\) is guarded in \(e\).
	We proceed by induction on the construction of \(e\).
	\begin{itemize}
		\item In the base case we only have to consider \(u \neq v\).
		Here, \(
		\epsilon(u[g/v]) 
		= \epsilon(u) 
		= [u]_{\Eq} 
		= [u[g/\!/v]]_{\Eq} 
		= \epsilon(u)[g/\!/v]
		\).
		
		\item In the inductive step, assume the result for \(e\) and consider \(ae\).
		We have
		\(
			\epsilon(ae[g/v]) 
			= [(a, e[g/v])]_{\Eq}
			= [(a, e)]_{\Eq} [g/\!/v]
			= \epsilon(e)[g/\!/v]
		\)
		as desired.
		
		\item In the branched case, assume the result for \(e_1, \dots, e_n\) and that \(v\) is guarded in \(e_1,\dots,e_n\).
		We have
		\begin{align*}
			\epsilon(\sigma(e_1,\dots,e_n)[g/v])
			&= \epsilon(\sigma(e_1[g/v],\dots,e_n[g/v])) 
			= \sigma(\epsilon(e_1[g/v]), \dots, \epsilon(e_n[g/v])) \\
			&= \sigma(\epsilon(e_1)[g/\!/v], \dots, \epsilon(e_n)[g/\!/v])) 
			= \sigma(\epsilon(e_1), \dots, \sigma(e_n))[g/\!/v]  \\
			&= \epsilon(\sigma(e_1,\dots,e_n))[g/\!/v]
		\end{align*} 
		
		\item Now assume the result for \(e\) and consider \(\mu u~e\).
		If \(v = u\), then we have \(\epsilon((\mu v~e)[g/v]) = \epsilon(\mu v~e) = \epsilon(e)[\mu v~e/\!/v]= \epsilon(e)[\mu v~e/\!/v][g/\!/v]= \epsilon(\mu v~e)[g/\!/v]\).
		Otherwise, assume \(v\) is guarded in \(e\) and compute
		\begin{align*}
			\epsilon(\mu u~e)[g/\!/v] 
			&= \epsilon(e)[\mu u~e/\!/u][g/\!/v]
			= \epsilon(e)[g/\!/v][\mu u~e[g/\!/v]/\!/u] \\
			&\stackrel{(\text{i.h.})} \epsilon(e[g/v])[\mu u~e[g/v]/\!/u] = \epsilon(\mu u~e[g/v])
		\end{align*}
		Here, \(e[g/\!/v] = e[g/v]\) precisely because \(v\) is guarded in \(e\). \qedhere
	\end{itemize}
\end{proof}

\section{Proofs from \cref{sec:final coalgebras}}
In this section, we aim to prove the following theorem, which states that the operational and denotational semantics coincide.

\compositionalitytheorem*

The proof requires the following lemma.

\begin{lemma}\label{lem:compositionality sublemma 2}
	Let \(p\in S^*(V+A\times \Exp)\), \(f\in \Exp\) and \(v\in V\). 
	Assume no free variable of \(g\) appears bound in any expression that appears in \(p\). 
	Then \[B_M(\sem-)([p]_{\Eq})\{\sem{g}/\!/v\} = B_M(\sem-)([p]_{\Eq}[g/\!/v])\]
\end{lemma}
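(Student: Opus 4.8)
The plan is to prove the identity by structural induction on the $S$-term $p\in S^*(V+A\times\Exp)$. The key observation that organises everything is that all three maps involved are $S$-algebra homomorphisms with respect to the free-algebra structures $\rho$: the map $B_M(\sem-)=M(\id_V+\id_A\times\sem-)$ is one by naturality of $\rho$, while the guarded syntactic substitution $[g/\!/v]$ on $B_M\Exp$ and the guarded behavioural substitution $\{\sem g/\!/v\}$ on $B_MZ$ are, by construction, the unique lifts of maps defined on the generators $V+A\times(-)$ (cf.\ \cref{lem:guardedsubstitution} and the analogous lift for $\{\sem g/\!/v\}$). Consequently the two compound operators $\{\sem g/\!/v\}\circ B_M(\sem-)$ and $B_M(\sem-)\circ[g/\!/v]$ are themselves $S$-algebra homomorphisms, so it suffices to check that they agree on generators and then let the induction over $\sigma$-applications run automatically. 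The freshness hypothesis is what guarantees that the ordinary syntactic substitutions $[g/v]$ hidden inside $[g/\!/v]$ are well-defined on every subexpression occurring in $p$.

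For a variable generator $p=u\in V$ I would split on whether $u=v$. If $u=v$, then $B_M(\sem-)$ fixes $[v]_\Eq$ and $\{\sem g/\!/v\}$ sends the generator $v$ to $0$, giving $[0]_\Eq$; symmetrically $[v]_\Eq[g/\!/v]=[0]_\Eq$ and $B_M(\sem-)$ fixes it, so both sides are $[0]_\Eq$. If $u\neq v$, both operators fix $u$ and both sides equal $[u]_\Eq$. For an action generator $p=(a,e)$, naturality of the unit $\eta$ gives $B_M(\sem-)([(a,e)]_\Eq)=[(a,\sem e)]_\Eq$, whence the left-hand side is $[(a,\sem e\{\sem g/v\})]_\Eq$; the right-hand side is $B_M(\sem-)([(a,e[g/v])]_\Eq)=[(a,\sem{e[g/v]})]_\Eq$. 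These coincide exactly because $\sem e\{\sem g/v\}=\sem{e[g/v]}$, which is \cref{lem:compositionality sublemma 1}; its freshness premise is supplied by ours, as $e$ occurs in $p$.

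The inductive case $p=\sigma(p_1,\dots,p_n)$, which also subsumes constants as the case $n=0$, then requires no real work: pushing the homomorphisms through $\sigma$ reduces both sides to $\sigma\big(B_M(\sem-)([p_1]_\Eq[g/\!/v]),\dots,B_M(\sem-)([p_n]_\Eq[g/\!/v])\big)$ after applying the induction hypothesis to each argument componentwise.

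The main obstacle is concentrated entirely in the action case: every other step is bookkeeping that follows from the homomorphism property, but transporting the syntactic substitution $[g/v]$ across $\sem-$ to the behavioural substitution $\{\sem g/v\}$ is where the genuine content lies, and it is imported wholesale from \cref{lem:compositionality sublemma 1}. The one secondary subtlety is to confirm that the hypothesis ``no free variable of $g$ is bound in any expression appearing in $p$'' descends to each subexpression $e$ sitting inside an action generator $(a,e)$ of $p$, so that both $[g/v]$ and the premises of \cref{lem:compositionality sublemma 1} remain legitimately available at every stage of the induction.
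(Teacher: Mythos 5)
Your proposal is correct and follows essentially the same route as the paper's proof: a structural induction on \(p\) whose only substantive step is the action generator \((a,e)\), where the identity \(\sem e\{\sem g/v\} = \sem{e[g/v]}\) from \cref{lem:compositionality sublemma 1} does all the work, with the variable and \(\sigma\) cases being routine. Your framing via \(S\)-algebra homomorphisms agreeing on generators is just a cleaner packaging of the same induction, and your attention to the freshness hypothesis descending to subexpressions is a point the paper leaves implicit.
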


\begin{proof}
	We proceed by induction on \(p\).
	
	\begin{itemize}
		\item Suppose \(p = u\).
		\begin{itemize}
			\item If \(u = v\), then we have \[
				B_M(\sem-)([v]_{\Eq})\{\sem g/\!/v\} = [v]_{\Eq}\{\sem g/\!/v\} = [0]_{\Eq} = B_M(\sem-)([0]_{\Eq}) = B_M(\sem-)([v]_{\Eq}[g/\!/v])
			\]
			\item If \(u \neq v\), then 
			\[
				 B_M(\sem-)([u]_{\Eq})\{\sem g/\!/v\}=[u]_{\Eq}\{\sem g/\!/v\} = [u]_{\Eq} = B_M(\sem-)([u]_{\Eq}) = B_M(\sem-)([u]_{\Eq}[g/\!/v])
				 \]
		\end{itemize}
		\item Now suppose \(p=(a,e)\) for some \(e \in \Exp\).
		We have that \(
		B_M(\sem-)([(a,e)]_{\Eq})\{\sem g/\!/v\}
		= [(a,\sem{e})]_{\Eq}\{\sem g/\!/v\}
		= [(a,\sem{e}\{\sem g/v\}]_{\Eq})
		\).
		By \cref{lem:compositionality sublemma 1},
		\begin{align*}
			B_M(\sem-)([(a,e)]_{\Eq})\{\sem g/\!/v\} 
			&= [(a,\sem{e}\{\sem g/v\}]_{\Eq}) 
			= [(a,\sem{e[g/v]})]_{\Eq} \\
			&= B_M(\sem -)([(a,e[g/v])]_{\Eq})
			= B_M(\sem -)([(a,e)]_{\Eq}[g/\!/v])
		\end{align*}
	
		\item Now assume the result for \(p_1, \dots, p_n \in S^*(V + A\times \Exp)\) and suppose \(p=\sigma(p_1, \dots, p_n)\). 
		We have 
		\begin{align*}
			&B_M(\sem-)([p]_{\Eq})\{\sem g/\!/v\} \\
			&= \sigma(B_M(\sem-)([p_1]_{\Eq}), \dots, B_M(\sem-)([p_n]_{\Eq}))\{\sem g/\!/v\} \\
			&= \sigma(B_M(\sem-)([p_1]_{\Eq})\{\sem g/\!/v\}, \dots, B_M(\sem-)([p_n]_{\Eq})\{\sem g/\!/v\}) \\
			&= \sigma(B_M(\sem-)([p_1]_{\Eq}[g/\!/v]), \dots, B_M(\sem-)([p_n]_{\Eq}[g/\!/v])) \\
			&= B_M(\sem-)([p]_{\Eq}[g/\!/v]) \qedhere
		\end{align*}
	\end{itemize}
\end{proof}

\begin{proof}[Proof of \cref{thm:compositionality}.]
	We prove the desired property by showing that \(\sem-\) is a \(B_M\)-coalgebra homomorphism between \((\Exp,\epsilon)\) and \((Z,\gamma)\).
	This amounts to showing that \(\zeta \circ \sem- = B_M(\sem-) \circ \epsilon\), which establishes the commutativity of the lower square in the diagram below.
	\[\begin{tikzpicture}
		\node (1) {\(\Sigma_MExp\)};
		\node[right=1.5cm of 1] (2) {\(\Sigma_MZ\)};
		\node[below=1cm of 1] (3) {\(Exp\)};
		\node[below=1cm of 2] (4) {\(Z\)};
		\node[below=1cm of 3] (5) {\(B_MExp\)};
		\node[right=1.5cm of 5] (6) {\(B_MZ\)};
		\draw (1) edge[->,dashed] node[above] {\(\Sigma_M\sem-\)} (2);
		\draw (1) edge[->] node[left] {\(\alpha\)} node[right] {\(\cong\)} (3);
		\draw (3) edge [->] node [left] {\(\epsilon\)} (5);
		\draw (3) edge [->, dashed] node [above] {\(\sem-\)} (4);
		\draw (4) edge [->] node [right] {\(\zeta\)} node [left] {\(\cong\)} (6);
		\draw (5) edge [->, dashed] node [above] {\(B_M\sem - \)} (6);
		\draw (2) edge [->] node [right] {\(\gamma\)} (4);
	\end{tikzpicture}\]
	To this end, we show \(\zeta(\sem e) = B_M(\sem-)(\epsilon(e))\) by induction on \(e\).
	\begin{itemize}
		\item In the base case, \(e = v\), and
		\[
		\zeta(\sem{v}) = [v]_{\Eq} = B_M(\sem-)([v]_{\Eq}) = B_M(\sem-)(\epsilon(v))
		\]
	
		\item Now assume the result for \(f\) and let \(e = af\). 
		We have 
		\[
		\zeta(\sem{af}) = [(a,\sem f)]_{\Eq} = B_M(\sem-)([(a,f)]_{\Eq}) = B_M(\sem-)(\epsilon(af))	
		\]
		
		\item Next, assume the result for \(e_1, \dots, e_n\) and let \(e=\sigma(e_1, \dots, e_n)\). 
		We have
		\begin{align*}
			\zeta(\sem{e}) 
			&= \sigma(\zeta(\sem {e_1}), \dots, \zeta(\sem{e_n}))
			= \sigma(B_M(\sem-)(\epsilon(e_1)), \dots, B_M(\sem-)(\epsilon(e_n)))\\
			&= B_M(\sem-)(\sigma(\epsilon(e_1), \dots, \epsilon(e_n)))
			= B_M(\sem-)(\epsilon(e))
		\end{align*}
		
		\item Now assume the result for \(f\) and let \(e = \mu v~f \). 
		We have 
		\begin{align*}
			\zeta(\sem{\mu v~f}) 
			&= \zeta(\sem f)\{\sem{\mu v~f}/\!/v\} 
			= B(\sem-)(\epsilon(f)) \{\sem{\mu v~f}/\!/v\} \\
			&= B(\sem-)(\epsilon(f)[\mu v~f/\!/v]) \tag{\cref{lem:compositionality sublemma 2}} \\
			&= B(\sem-)(\epsilon(\mu v~f)) \qedhere
		\end{align*}
	\end{itemize}
\end{proof}

\section{Proofs from \cref{sec:soundness and completeness}}

\soundnesstheorem*

The proof of this lemma requires the following lemma.

\begin{lemma}\label{lem:soundness sublemma 2}
	Let \(v \in V\) and \(g_1,g_2 \in \Exp\).
	If \(g_1 \equiv g_2\), then for any term \(p \in S^*(V + A\times \Exp)\) we have \[
	B_M([-]_\equiv)(p[g_1/\!/v]) = B_M([-]_\equiv)(p[g_2/\!/v])
	\]
\end{lemma}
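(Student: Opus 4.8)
The plan is to prove the statement by structural induction on $p \in S^*(V + A\times\Exp)$, reducing everything to a single substitutivity fact about $\equiv$: that feeding $\equiv$-equivalent terms into a fixed context produces $\equiv$-equivalent results. Throughout I assume, as the well-definedness of $p[g_i/\!/v]$ requires, that no free variable of $g_1$ or $g_2$ is captured in the expressions occurring in $p$.

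First I would isolate the key auxiliary claim: \emph{if $g_1 \equiv g_2$, then $e[g_1/v] \equiv e[g_2/v]$ for every $e \in \Exp$} (again under the usual non-capture side condition). This is proved by induction on $e$. In the variable case, $u[g_i/v]$ equals $g_i$ when $u = v$, where the hypothesis $g_1 \equiv g_2$ applies directly, and equals $u$ otherwise. In the prefix, branching, and recursion cases, $e[g_i/v]$ is obtained from the $e_j[g_i/v]$ by applying the corresponding $\Sigma_M$-operation (namely $a(-)$, $\sigma$, or $\mu u(-)$ with $u \neq v$), and since $\equiv$ is by definition the smallest $\Sigma_M$-congruence containing $\Eq$ and $\mathsf R$, the inductive hypotheses propagate through each operation. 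In essence this is the observation that a $\Sigma_M$-congruence is closed under one-hole contexts.

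Then I would run the main induction on $p$. When $p = u \in V$, the term $u[g_i/\!/v]$ is $[0]_\Eq$ if $u = v$ and $[u]_\Eq$ otherwise; in both cases it does not mention $g_i$, so applying $B_M([-]_\equiv)$ yields the same element for $i = 1, 2$. When $p = (a, e)$, we have $(a,e)[g_i/\!/v] = [(a, e[g_i/v])]_\Eq$, so $B_M([-]_\equiv)$ sends it to $[(a,\, [e[g_i/v]]_\equiv)]_\Eq$; by the auxiliary claim $[e[g_1/v]]_\equiv = [e[g_2/v]]_\equiv$, and the two sides agree. When $p = \sigma(p_1,\dots,p_n)$, guarded substitution commutes with $\sigma$ by definition, and $B_M([-]_\equiv) = M(\id_V + \id_A \times [-]_\equiv)$ is an $S$-algebra homomorphism because $\rho$ is natural, so the goal reduces to the equalities $B_M([-]_\equiv)(p_j[g_1/\!/v]) = B_M([-]_\equiv)(p_j[g_2/\!/v])$ supplied by the inductive hypotheses.

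The main obstacle is the auxiliary substitutivity claim. Although it reduces to closure of $\equiv$ under the operations of $\Sigma_M$, one must take care that guarded substitution $[g_i/\!/v]$ feeds $g_i$ only into \emph{free} positions beneath action prefixes, via the ordinary substitution $[g_i/v]$ (unguarded occurrences of $v$ are sent to $0$ and contribute nothing depending on $g_i$), and that the non-capture side conditions keep every substitution well-defined through the recursion case. Once this bookkeeping is settled, the congruence property of $\equiv$ does all the real work.
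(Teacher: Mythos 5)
Your proof is correct and follows essentially the same route as the paper's: induction on $p$, with the variable and $\sigma$ cases handled identically and the $(a,e)$ case reduced to $e[g_1/v]\equiv e[g_2/v]$. The only difference is that you prove this last fact as an explicit auxiliary induction on $e$, whereas the paper dispatches it in one line by noting that $\equiv$ is a $\Sigma_M$-congruence; your extra bookkeeping (including the non-capture side conditions through the $\mu u$ case) is sound but not a different argument.
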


\begin{proof}
	By induction on the construction of \(p\). 
	\begin{itemize}
		\item Suppose \(p = u\).
		If \(u \neq v\), then we have \(u[g_1/\!/v] = u = u[g_2/\!/v]\), because \(v\) is not free in \(u\).
		On the other hand, if \(u = v\), then \(v[g_i/\!/v] = 0\) for \(i = 1,2\) by definition.
		
		\item Now let \(p = (a, e)\). 
		We have \[
			B_M([-]_\equiv)((a,e)[g_i/\!/v])
			= B_M([-]_\equiv)((a,e[g_i/v]))
			= [(a,[e[g_i/v]]_\equiv)]_{\Eq}
		\]
		for \(i = 1,2\).
		Since \(\equiv\) is a congruence, \(e[g_1/v] \equiv e[g_2/v]\), so indeed \((a,[e[g_1/v]]_\equiv) = (a,[e[g_2/v]]_\equiv)\) as desired. 
		
		\item In the branched case, assume the identity for \(p_1,\dots, p_n\) and compute 
		\begin{align*}
			& B_M([-]_\equiv)(\epsilon(\sigma(p_1,\dots, p_n))[g_i/\!/v]) \\
			& = B_M([-]_\equiv)(\sigma(p_1[g_i/\!/v],\dots, p_n[g_i/\!/v])) \\
			&= \sigma(B_M([-]_\equiv)(p_1[g_i/\!/v]),\dots, B_M([-]_\equiv)(p_n[g_i/\!/v])) \\
			&= \sigma(B_M([-]_\equiv)(p_1[g_i/\!/v]),\dots, B_M([-]_\equiv)(p_n[g_i/\!/v])) 
		\end{align*}
		for \(i = 1,2\). 
		The desired identity now follows from the induction hypothesis, which states here that \(B_M([-]_\equiv)(p_j[g_i/\!/v]) = B_M([-]_\equiv)(p_j[g_2/\!/v]) \) for \(j \le n\). \qedhere
	\end{itemize}
\end{proof}

Next we prove \cref{thm:soundness} using a technique that is similar in style to others that currently exist in the literature, for example in \cite{jacobs2006bialgebraic} and \cite{silva2010kleene}.

\begin{proof}(of \cref{thm:soundness})
	At present, we are given the following three maps:
	\[\label{eq:soundness square}
		\begin{tikzpicture}
			\node (1) {\(\Exp\)};
			\node[right=3cm of 1] (2) {\(\Expm\)};
			\node[below=1cm of 1] (3) {\(B_M\Exp\)};
			\node[below=1cm of 2] (4) {\(B_M\Expm\)};
			\node[below=0.25cm of 1, xshift=1.8cm] (5) {(\(*\))};
			\draw (1) edge[->] node[above] {\([-]_\equiv\)} (2);
			\draw (1) edge[->] node[left] {\(\epsilon\)} (3);
			\draw (3) edge[->] node[below] {\(B_M([-]_\equiv)\)} (4);
		\end{tikzpicture}\]
	We will show that there is a map \(\bar \epsilon : \Expm \to B_M \Expm\) such that the resulting square commutes. 
	Since \([-]_\equiv\) is surjective, there is at most one such map, because the existence of \(\bar \epsilon\) is equivalent to the statement that \(k := B_M([-]_\equiv)\circ \epsilon\) is constant on \(\equiv\)-equivalence classes of terms.
	Thus, it suffices to show that if \(e \equiv f\), then \(k(e) = k(f)\). 
	We proceed by induction on the length of the derivation of \(e \equiv f\).
	
	In the base case, \(e \equiv f\) is an instance of one of the axioms. 
	That is, \(e \equiv f\) either (1) is an instance of an axiom of \(\Eq\), (2) is an instance of (\acro R1), or (3) is an instance of (\acro R2).
	\begin{itemize}
		\item[(1)] Suppose for the sake of argument that \(\Eq \subseteq S^*X \times S^*X\) and \((t,s) \in \Eq\), and let \(\nu : X \to \Exp\) lift to \(\nu^\# : S^* X \to \Exp\). 
		If we define the map \(h : \Exp \to S^*(V + A\times \Exp)\) inductively as
		\[
		\begin{aligned}
			h(v) &= v \\
			h(ae) &= (a, e)\\
		\end{aligned}
		\qquad\qquad
		\begin{aligned}
			h(\sigma(e_1,\dots,e_n)) &= \sigma(h(e_1), \dots, h(e_n)) \\
			h(\mu v~ e) &= h(e)[\mu v~e/\!/v]
		\end{aligned}
		\]
		then \(\epsilon = [-]_{\Eq} \circ h\).
		Hence, if \(e = \nu^\#(t)\) and \(f = \nu^\#(s)\), then 
		\begin{align*}
			\epsilon(e) &= \epsilon\circ \nu^\#(t) \tag{def. \(e\)}\\
						&= [-]_\equiv \circ h\circ \nu^\#(t)  \tag{def. \(\epsilon\)}\\
						&= ([-]_\equiv \circ h\circ \nu)^\#(t) \tag{univ. property of \((-)^\#\)}\\
						&= ([-]_\equiv \circ h\circ \nu)^\#(s) \tag{\((\Expm, \hat \alpha)\) satisfies \(\Eq\)}\\
						&= [-]_\equiv \circ h\circ \nu^\#(s) \tag{univ. property of \((-)^\#\)}\\
						&= \epsilon(f) \tag{def. \(f\)}
		\end{align*}
		where \(\hat \alpha : \Sigma_M \Expm \to \Expm\) is the quotient algebra of \((\Exp, \alpha)\) by the congruence \(\equiv\). 
		It follows that if \(e \equiv f\) is an axiom of \(\Eq\), then \(\epsilon(e) = \epsilon(f)\), and therefore \(k(e) = k(f)\). 
		
		\item[(2)] Next we consider the equation \(\mu v~e \equiv e[\mu v~e/\!/v]\) of (\acro R1). 
		By definition, \(\epsilon(\mu v~e) = \epsilon(e)[\mu v~e/\!/v]\), and by \cref{lem:guardingtheunguarded} we know that \(\epsilon(e)[\mu v~e/\!/v] = \epsilon(e[\mu v~e/\!/v])\). 
		It immediately follows that \(k(\mu v~e) = k(e[\mu v~e/\!/v])\). 
		
		\item[(3)] Thirdly, we consider the equation \(\mu v~e \equiv \mu w~e[w/v]\) of (\acro R2), in which \(w\) does not appear freely in \(e\). 
		This follows from \cref{lem:soundness sublemma 2}: We have
		\begin{align*}
			B_M([-]_{\equiv})(\epsilon(\mu v~e)) 
			&= B_M([-]_{\equiv})(\epsilon(e)[\mu v~e/\!/v]) \\
			&= B_M([-]_{\equiv})(\epsilon(e)[\mu w~e[w/v]/\!/v]) \tag{\cref{lem:soundness sublemma 2}}\\
			&= B_M([-]_{\equiv})(\epsilon(e[w/v])[\mu w~e[w/v]/\!/w]) \\
			&= B_M([-]_{\equiv})(\epsilon(\mu w~e[w/v]))
		\end{align*}
	\end{itemize}

	For the inductive step, we assume that the proof of \(e \equiv f\) ends either (1) with a deduction rule from equational logic, (2) ends with one of the congruence-generating rules
	\[
		\prftree[r]{(\(S\)-cong)}{(\forall i \le n)~e_i \equiv f_i}{\sigma(e_1,\dots,e_n) \equiv \sigma(f_1,\dots,f_n)}
		\\
		\prftree[r]{(\(A\)-cong)}{e \equiv f}{ae \equiv af}
	\]
	or (3) ends with the rule (\acro R3).
	\begin{itemize}
		\item[(1)] The inference rules of equational logic respect identities across function application, so this case is trivial.
		
		\item[(2)] If the last step is of the proof is (\(S\)-cong), then use the fact that \(\epsilon = [-]_{\Eq}\circ h^\#\) as in the first step of the base case. 
		If the last step is (\(A\)-cong), then observe that 
		\begin{align*}
			B_M([-]_{\equiv})(\epsilon(ae))
			&= B_M([-]_{\equiv})([(a,e)]_{\Eq}) = [(a, [e]_\equiv)]_{\Eq} = [(a, [f]_\equiv)]_{\Eq} \\
			&= B_M([-]_{\equiv})([(a, f)]_{\Eq}) = B_M([-]_{\equiv})(\epsilon(af)) \\
		\end{align*}
		
		\item[(3)] Now suppose the last rule is (\acro R3), and assume that \(v\) is guarded in \(e\).
		We have 
		\begin{align*}
			B_M([-]_\equiv)(\epsilon(\mu v~e))  
			&= B_M([-]_\equiv)(\epsilon(e)[\mu v~e/\!/v]) \tag{def. of \(\epsilon\)}\\
			&= B_M([-]_\equiv)(\epsilon(e)[g/\!/v]) \tag{\cref{lem:soundness sublemma 2}}\\
			&= B_M([-]_\equiv)(\epsilon(e[g/v])) \tag{\cref{lem:soundness sublemma 1}}\\ 
			&= B_M([-]_\equiv)(\epsilon(g)) \tag{induction hypothesis}\\
		\end{align*}
	\end{itemize}
	It follows that \(\equiv\), which is equal to \(\ker([-]_\equiv)\), is contained in \(\ker(B([-]_\equiv)\circ\epsilon)\). 
	Thus, there is a unique map \(\bar \epsilon : \Expm \to B_M(\Expm)\) such that the resulting square (\(*\)) commutes. 
\end{proof}

\subsection{Completeness}

In the following lemma, we assume that \(S\) only has operations of finite arity (this is Assumption~\ref{asm:finite arities}).

\localfinitenesscondition*

\begin{proof}
	Given an arbitrary \(e \in \Exp\), we will explicitly construct a subcoalgebra of \((\Exp, \epsilon)\) that has a finite set of states that includes \(e\).
	To this end, define \(U : \Exp \to \P(\Exp)\) by
	\[
		U(v) = \{v\}
		\qquad 
		U(ae) = \{ae\} \cup U(e)
		\qquad 
		U(\sigma(e_1, \dots, e_n)) = \{\sigma(e_1, \dots, e_n)\} \cup \bigcup_{i < n} U(e_i) 
	\]
	\vspace*{-1em}
	\[
		U(\mu v~e) = \{\mu v~e\} \cup U(e)[\mu v~e/\!/v] := \{\mu v~e\} \cup \{f[\mu v~e/\!/v] \mid f \in U(e)\}
	\]
	Note that \(e \in U(e)\) for all \(e \in \Exp\), and that \(U(e)\) is finite. 
	We begin with following claim, which says that the derivatives of \(e\) can be given in terms of expressions from \(U(e)\): 
	for any \(e \in \Exp\), there is a representative \(S\)-term \(p \in \epsilon(e)\) such that \(p \in S^*(V + A \times U(e))\). 
	This can be seen by induction on \(e\). 
	The only interesting case is the inductive step \(\mu v~e\), in which case we let \(p \in \epsilon(e)\) and note that \(p[\mu v~e/\!/v]\) is a representative of \(\epsilon(\mu v~e)\) in \(S^*(V + A \times U(\mu v~e))\).
	
	To finish the proof of the lemma, fix an \(e \in \Exp\) and define a sequence of sets beginning with \(U_0 = \{e\}\) and proceeding with 
	\[
	 	U_{n+1} = U_n \cup \bigcup_{e_0 \in U_n}\{g \mid (\exists a \in A)(\exists p \in \epsilon(e_0)\cap S^*(V + A\times U(e_0)))~\text{\((a,g)\) appears in \(p\)}\}
	\] 
	Then \(U_0 \subseteq U_1 \subseteq \cdots \subseteq U(e)\), and the latter set is finite. 
	Hence \(U := \bigcup U_n\) is finite and contained in \(U(e)\). 
	We define a coalgebra structure \(\epsilon_U : U \to B_MU\) by taking \(\epsilon_U(e) = [p]_{\Eq}\) where if \(e \in U_n\), then \(p\) is a representative of \(\epsilon(e)\) in \(S^*(V + A\times U_{n+1})\).
	Since \(S^*(V + A \times U_{n+1}) \subseteq S^*(V + A \times U)\), this defines a \(B_M\)-coalgebra structure on \(U\).
	Where \(\iota : U \hookrightarrow \Exp\), we have \(\epsilon(\iota(e)) = \epsilon(e) = B_M(\iota)\circ\epsilon_U(e)\). 
	Thus, \((U,\epsilon_U)\) is a finite subcoalgebra of \((\Exp,\epsilon)\) containing \(e\). 
\end{proof}

\solutionsarehomomorphisms*

\begin{proof}
	We begin by observing that \(\bar \epsilon : \Expm \to B_M\Expm\) is a bijection. 
	Indeed, the map \((-)^\heartsuit : B_M\Exp \to \Expm\) defined\[
		[v]_{\Eq}^\heartsuit = [v]_\equiv 
		\qquad 
		[(a,e)]_{\Eq}^\heartsuit = [ae]_\equiv
		\qquad
		[\sigma(p_1,\dots,p_n)]_{\Eq}^\heartsuit = \sigma([p_1]_{\Eq}^\heartsuit, \dots, [p_n]_{\Eq}^\heartsuit)
	\] 
	is its inverse:
	Clearly \(\bar\epsilon([p]_\Eq^\heartsuit) = [p]_\Eq\) for any \(p \in S^*(V + A\times \Exp)\), so it suffices to see that \(\bar\epsilon([e]_{\equiv})^\heartsuit = [e]_{\equiv}\) for all \(e \in \Exp\).
	This can be done by induction on \(e\), but the only tricky case is \(\mu v~e\). 
	For this case, observe that
	\begin{align*}
		\bar\epsilon([\mu v~e]_\equiv)^\heartsuit 
		&= (B_M([-]_\equiv)(\epsilon(e)[\mu v~e/\!/v]))^\heartsuit
		= (B_M([-]_\equiv)(\epsilon(e[\mu v~e/\!/v])))^\heartsuit \\
		&= (\bar\epsilon([e[\mu v~e/\!/v]]_\equiv))^\heartsuit
		= (\bar\epsilon([e]_\equiv)[[\mu v~e]_\equiv/\!/v])^\heartsuit
		= \bar\epsilon([e]_\equiv)^\heartsuit [[\mu v~e]_\equiv/\!/v]  \\
		&\stackrel{\text{(i.h.)}}= [e[\mu v~e/\!/v]]_\equiv
		= [\mu v~e]_\equiv
	\end{align*}
	where the fifth equality is the induction hypothesis and the last is (\acro {R1}).
	We have also made use of a lifting of syntactic substitution to \(B_M(\Expm)\) that is defined in the usual way, as well as the fact that this lifted syntactic substitution commutes with \((-)^\heartsuit\), which can be proven by induction on terms \(S^*(V + A\times (\Expm))\). 
	
	Now let \(\{x = p_x^\dagger\}_{x \in X}\) be the system of equations associated with the coalgebra \((X,\beta)\).
	Observe that for any \(x,y \in X\), if \(y\) appears in \(p_x\), then it is guarded in \(p_x^\dagger\). 
	This means that \(\phi : X \to \Exp\) is a solution to \(\{x = p_x^\dagger\}_{x \in X}\) if and only if \(\phi(x) \equiv p_x^\dagger[\phi(y)/\!/y]_{y \in X}\).
	Now, if \(\beta(x) = [p_x]_{\Eq}\), we see that
	\begin{align*}
	(B_M([-]_\equiv \circ \phi)(\beta(x)))^\heartsuit 
	&= (B_M([-]_\equiv) \circ B_M(\phi)([p_x]_{\Eq}))^\heartsuit \\
	&= (B_M([-]_\equiv)([p_x]_{\Eq}[\phi(y)/\!/y]_{y \in X}))^\heartsuit \\
	&= ([p_x]_{\Eq}[ [\phi(y)]_\equiv/\!/y ]_{y \in X})^\heartsuit \\
	&= [p_x^\dagger[\phi(y)/\!/y]_{y \in X}]_\equiv 
	\end{align*}
	Thus, \(\phi\) is a solution to the system \(\{x=p_x^\dagger\}_{x \in X}\) if and only if  
	\begin{equation}\label{eq:dagger bout to get pwned}
		[-]_\equiv \circ \phi(x) = (-)^\heartsuit \circ B_M([-]_\equiv \circ \phi) \circ \beta (x) 
	\end{equation}
	for every \(x \in X\).
	The maps \((-)^\heartsuit\) and \(\bar\epsilon\) are inverse to one another, so \cref{eq:dagger bout to get pwned} is equivalent to the identity \(\bar \epsilon \circ [-]_\equiv \circ \phi = B_M([-]_\equiv\circ\phi)\circ\beta\).
	This identity is the defining property of a coalgebra homomorphism of the form \([-]_\equiv~\phi\).
\end{proof}

\milnerslemma*

The following proof is a recreation of the one that appears under \cite[Theorem 5.7]{milner1984complete} with the more general context of our paper in mind.
Remarkably, the essential details of the proof remain unchanged despite the jump in the level of abstraction between the two results. 

\begin{proof}
	Let \(\{x_i = e_i\}_{i \le n}\) be a guarded system of equations.
	We proceed by induction on \(n\).
	In the base case, \(n = 1\).
	This case is straight-forward because \(\phi(x_1) := \mu x_1~e_1\) is its unique solution up to \(\equiv\) by (\acro {R3}). 
	
	Now assume that every system of strictly fewer than \(n\) guarded equations has a unique solution up to \(\equiv\).
	Define 
	\begin{align*}
		f_n &= \mu x_n~e_n
		&\text{and}&&
		f_i &= e_i [f_n/x_n]
	\end{align*}
	for each \(i < n\).
	Since \(x_1,\dots, x_n\) are guarded in \(e_i\) for \(i \le n\), the system \(\{x_i = f_i\}_{i < n}\) is also guarded, and \(x_1,\dots, x_{n-1}\) do not appear freely in any \(f_i\) for \(i < n\).
	By the induction hypothesis, \(\{x_i = f_i\}_{i < n}\) has a unique solution \(\psi : \{x_1, \dots, x_{n-1}\} \to \Exp\) up to \(\equiv\).
	Let \(g_i = \psi(x_i)\) for \(i < n\) and note that \(x_n\) is not free and does not appear bound in any of \(f_1,\dots, f_{n-1}\) by construction.
	Now, take \(g_n = f_n[g_1/x_1, \dots, g_{n-1}/x_{n-1}]\).
	Then \(\phi(x_i) := g_i\) for \(i \le n\) is indeed a solution of the desired form, since
	\begin{align*}
		g_n &= f_n[g_1/x_1, \dots, g_{n-1}/x_{n-1}] = (\mu x_n~e_n)[g_1/x_1, \dots, g_{n-1}/x_{n-1}] \\
		&= \mu x_n~(e_n[g_1/x_1, \dots, g_{n-1}/x_{n-1}]) \tag{\(x_n\) not free in \(g_i\)}\\
		&\equiv e_n[g_1/x_1, \dots, g_{n-1}/x_{n-1}][g_n/x_n] \tag*{(\acro R1),(\(x_n\) guarded in \(g_n\))}\\
		&= e_n[g_1/x_1, \dots, g_{n-1}/x_{n-1}, g_n/x_n] \tag{\(x_n\) not free in \(g_i\)}
	\end{align*} 
	and for any \(i < n\),
	\begin{align*}
		g_i &\equiv f_i [g_1/x_1, \dots, g_{n-1}/x_{n-1}] \\
		&= e_i [f_n/x_n][g_1/x_1, \dots, g_{n-1}/x_{n-1}] = e_i [g_1/x_1, \dots, g_{n-1}/x_{n-1}][f_n[g_1/x_1, \dots, g_{n-1}/x_{n-1}]/x_n] \\
		&= e_i [g_1/x_1, \dots, g_{n-1}/x_{n-1}, f_n[g_1/x_1, \dots, g_{n-1}/x_{n-1}]/x_n] = e_i [g_1/x_1, \dots, g_{n-1}/x_{n-1}, g_n/x_n]
	\end{align*}
	since \(x_n\) is not free in \(g_i\) for any \(i < n\).
	
	To see that the solution is unique, let \(\theta(x_i) = h_i\) for \(i \le n\) be any other solution to \(\{x_i = e_i\}_{i \le n}\).
	Then in particular, \(h_n \equiv e_n[h_1/x_1, \dots, h_{n-1}/x_{n-1}, h_n/x_n] = e_n[h_1/x_1, \dots, h_{n-1}/x_{n-1}][h_n/x_n]\) since \(x_n\) is not free in any \(h_1,\dots,h_{n-1}\).
	This means that \(h_n \equiv \mu x_n~(e_n[h_1/x_1, \dots, h_{n-1}/x_{n-1}])\) by (\acro R3) and guardedness of \(x_n\) in \(e_n\).
	Since \(x_n\) is not free in \(h_i\) for any \(i < n\),
	\begin{align*}
	\mu x_n~(e_n[h_1/x_1, \dots, h_{n-1}/x_{n-1}]) 
	&= (\mu x_n~e_n)[h_1/x_1, \dots, h_{n-1}/x_{n-1}] \\
	&= f_n[h_1/x_1, \dots, h_{n-1}/x_{n-1}]
	\end{align*}
	This makes the restriction of \(\theta\) to \(x_1,\dots,x_{n-1}\) a solution to \(\{x_i = f_i\}_{i < n}\).
	By the induction hypothesis, there is only one such solution, so \(h_i \equiv g_i\) for each \(i < n\).
	It follows that that\[
	h_n \equiv \mu x_n~(e_n[h_1/x_1, \dots, h_{n-1}/x_{n-1}])
	\equiv \mu x_n~(e_n[g_1/x_1, \dots, g_{n-1}/x_{n-1}])
	= g_n
	\] 
	via the congruence laws.
	Hence, \(h_n \equiv g_n\), and overall \(\theta \equiv \phi\). 
\end{proof}

\section{Proofs from \cref{sec:star fragments}}\label{sec:unguarded unravelling}

In this appendix, we are concerned with a particular statement made near the end of \cref{sec:star fragments}: that the theory \(\Eq^*\) is equipotent to the axiomatisations found in the literature, in the particular cases of \(\Eq = \mathsf{SL}\) and \(\Eq= \mathsf{GS}\). 
This requires us to show that the unrestricted equations of the form \(e^{(\sigma)} = ee^{(\sigma)} +_\sigma \skiptt\) are derivable from \(\Eq^*\) in each of these two cases.

First let us consider the case of Milner's star fragment.
Write \(e \to \checkmark\) if \(\checkmark \in \ell(e)\).
Define the operator \(\partial : \SExp \to \SExp\) by induction as
\[
	\partial 0 = 0 = \partial \skiptt 
	\qquad 
	\partial a = a
	\qquad \partial (e + f) = \partial e + \partial f
	\qquad
	\partial(ef) = \begin{cases}
		\partial e f + \partial f &e \to \checkmark \\
		\partial e f & e \not \to \checkmark
	\end{cases}
\]
Note that \(\partial e\) is guarded for all \(e\). 
We have the following.

\begin{lemma}
	For any \(e \in \SExp\), 
	\begin{itemize}[leftmargin=1cm]
		\item [(i)] If \(e \to \checkmark\), then \(\mathsf{SL}^* \vdash e = \partial e + 1\), else \(\mathsf{SL}^* \vdash e = \partial e\).
		\item [(ii)] \(e^* = (\partial e)^*\)
	\end{itemize}
\end{lemma}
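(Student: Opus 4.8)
The plan is to establish (i) by structural induction on $e$, mirroring the clauses in the definition of $\partial$, and then obtain (ii) as a one-line consequence of (i) and the axiom $(\mathsf{E^*4})$. Before starting the induction I would read off from the operational semantics in \cref{fig:operational semantics of SExp} exactly when $e \to \checkmark$: we have $0 \not\to\checkmark$ and $a \not\to\checkmark$, $\skiptt \to\checkmark$, while $e_1 + e_2 \to\checkmark$ iff $e_1\to\checkmark$ or $e_2\to\checkmark$ (because $\ell(e_1+e_2) = \ell(e_1)+\ell(e_2)$ and $+$ is union in $\mathsf{SL}$), and $e_1e_2 \to\checkmark$ iff $e_1\to\checkmark$ and $e_2\to\checkmark$ (because $\ell(e_1e_2)$ arises from $\ell(e_1)$ by substituting $\ell(e_2)$ for $\checkmark$). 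These equivalences are what make the case split in (i) line up with the case split in the definition of $\partial(e_1e_2)$.

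For (i) the base cases are immediate: $\partial 0 = 0$ with $0\not\to\checkmark$, $\partial a = a$ with $a\not\to\checkmark$, and $\skiptt\to\checkmark$ with $\skiptt = 0 + \skiptt = \partial\skiptt + \skiptt$ using $(\mathsf{SL1})$ and $(\mathsf{SL3})$. The sum case $e=e_1+e_2$ splits four ways according to the termination of the $e_i$; in each branch I substitute the inductive hypotheses and rearrange with commutativity, associativity and, where two copies of $\skiptt$ collide, idempotence $\skiptt+\skiptt=\skiptt$ from $(\mathsf{SL2})$. The product case $e=e_1e_2$ is the heart of the argument and uses right-distributivity of sequential composition over $+$ together with $(\mathsf{E^*1})$. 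For example, when $e_1\to\checkmark$ the hypothesis $e_1=\partial e_1+\skiptt$ gives $e_1e_2 = \partial e_1\,e_2 + \skiptt\,e_2 = \partial e_1\,e_2 + e_2$; substituting the hypothesis for $e_2$ and comparing with $\partial(e_1e_2)=\partial e_1\,e_2+\partial e_2$ and the known termination status of $e_1e_2$ yields the claim, while when $e_1\not\to\checkmark$ one has $e_1=\partial e_1$ and $e_1e_2=\partial e_1\,e_2=\partial(e_1e_2)$ directly.

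For (ii) I would feed (i) into a single application of $(\mathsf{E^*4})$. If $e\not\to\checkmark$ then $e=\partial e$ by (i), so $e^* = (\partial e)^*$ by congruence. If $e\to\checkmark$ then $e=\partial e+\skiptt$ by (i), hence $e^* = (\partial e+\skiptt)^{(+)}$ by congruence, and $(\mathsf{E^*4})$ (instantiated with $\tau=\sigma=+$) rewrites this to $(\partial e + 0)^{(+)}$; a final use of $(\mathsf{SL1})$ inside the iteration, again via congruence, gives $(\partial e)^* = e^*$.

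The hard part is the product case of (i). It is the only place that forces the precise characterisation of $e_1e_2\to\checkmark$, and it depends on right-distributivity $(e_1+e_2)f = e_1f+e_2f$, which is not listed among the axioms of $\mathsf{SL}^*$; I would justify it by the observation from \cref{sec:star fragments} that both sides translate to the very same process term. A secondary subtlety is that the clauses given for $\partial$ do not cover $e=f^{(\sigma)}$: since such an expression always satisfies $f^{(\sigma)}\to\checkmark$, completing the induction requires extending the definition of $\partial$ with a clause for iteration (morally $\partial(f^{(\sigma)}) = \partial f\cdot f^{(\sigma)}$) so that $f^{(\sigma)} = \partial(f^{(\sigma)})+\skiptt$ holds.
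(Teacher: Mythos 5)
Your proposal is correct and follows essentially the same route as the paper's own proof: structural induction on \(e\) for (i) with the same case analysis (using right-distributivity of sequencing, \((\mathsf{E^*1})\), and idempotence of \(+\)), and (ii) obtained from (i) via \((\mathsf{E^*4})\). You are also right that the stated definition of \(\partial\) omits the iteration clause \(\partial(f^{(\sigma)}) = \partial f\, f^{(\sigma)}\) --- the paper's proof tacitly uses exactly this clause, and closes that case of the induction by combining (ii) for the subterm with \((\mathsf{E^*5})\) applied to the guarded expression \(\partial f\).
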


In this proofs that follow, we write simply \(e = f\) in place of \(\mathsf{SL}^* \vdash e = f\).

\begin{proof}	
	Statement (ii) follows directly from (i) and (\acro S4).
	We prove statement (i) by induction on \(e\).
	The base cases hold by definition, so we proceed to the inductive step and assume (i) holds for \(e\) and \(f\).
	\begin{itemize}
		\item If \(e + f \to \checkmark\), then either \(e \to \checkmark\) or \(f \to \checkmark\) and the induction hypothesis directly applies. 
		For example, if \(e \to \checkmark\) and \(f \to \checkmark\), then \(e + f = (\partial e + \skiptt) + (\partial f + \skiptt) = \partial e + \partial f + \skiptt = \partial (e + f) + \skiptt\).

		\item In case \(e + f \not\to \checkmark\), we simply have \(e + f = \partial e + \partial f = \partial (e + f)\). 

		\item If \(ef \to \checkmark\), then \(e \to \checkmark\) and \(f \to \checkmark\), and we have \(ef = (\partial e + \skiptt)f = \partial ef + f = \partial e f + \partial f + \skiptt = \partial (ef) + \skiptt\).

		\item If \(ef \not \to \checkmark\), then either \(e \not\to \checkmark\) or \(f \not\to \checkmark\).
		\begin{itemize}
			\item In the first case, \(ef = \partial ef = \partial (ef)\).
			\item And in case \(e \to \checkmark\) but \(f \not\to\checkmark\), \(ef = (\partial e + \skiptt)f = \partial ef + \partial f = \partial (ef).\)
		\end{itemize}
		\item Since \(e^* \to \checkmark\), we need to see that \(e^* = \partial(e^*) + \skiptt\).
		There are two cases to consider:
		\begin{itemize}
			\item If \(e \to \checkmark\), then 
			\(e^* = (\partial e + 1)^* = (\partial e + 0)^* = (\partial e)^* = \partial e(\partial e)^* + \skiptt = \partial ee^* + \skiptt = \partial (e^*) + \skiptt\).
			\item  Otherwise, we have \(e^* = (\partial e)^* \stackrel{(\mathsf S5)}{=} \partial e(\partial e)^* + \skiptt = \partial ee^* + \skiptt = \partial (e^*) + \skiptt\). \qedhere
		\end{itemize}
	\end{itemize}
\end{proof}

\begin{theorem}
	For any \(e \in \Exp\), \(\mathsf{SL}^* \vdash e^* = ee^* + 1\). 
\end{theorem}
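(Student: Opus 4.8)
The plan is to leverage the preceding lemma, whose whole purpose is to replace an arbitrary \(e\) by its \emph{guarded} part \(\partial e\), for which the unrestricted unrolling is exactly the instance \((\mathsf S5)\) of \((\mathsf{E^*5})\). Concretely, I would start from part (ii) of the lemma, \(e^* = (\partial e)^*\), and apply \((\mathsf S5)\) — legitimate because \(\partial e\) is guarded for every \(e\) — to obtain
\[
e^* = (\partial e)^* = \partial e\,(\partial e)^* + \skiptt = \partial e\, e^* + \skiptt ,
\]
where the last step uses part (ii) once more. This derived identity \(e^* = \partial e\, e^* + \skiptt\) is the workhorse of the rest of the argument.

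It then remains to show \(e e^* + \skiptt = \partial e\, e^* + \skiptt\), and I would split on whether \(e \to \checkmark\), exactly as in the lemma. If \(e \not\to \checkmark\), then part (i) gives \(e = \partial e\) and the equality is immediate. If \(e \to \checkmark\), then part (i) gives \(e = \partial e + \skiptt\), so by right-distributivity of composition over \(+\) and the unit law \((\mathsf{E^*1})\),
\[
e e^* = (\partial e + \skiptt) e^* = \partial e\, e^* + \skiptt\, e^* = \partial e\, e^* + e^* .
\]
Substituting the derived identity \(e^* = \partial e\, e^* + \skiptt\) for the trailing summand and using commutativity, associativity, and idempotency of \(+\) (i.e.\ \((\mathsf{SL2})\)–\((\mathsf{SL4})\), in particular \(\partial e\, e^* + \partial e\, e^* = \partial e\, e^*\) and \(\skiptt + \skiptt = \skiptt\)) collapses \(e e^* + \skiptt = \partial e\, e^* + e^* + \skiptt\) back to \(\partial e\, e^* + \skiptt = e^*\), as required.

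The argument is essentially a bookkeeping exercise once the preceding lemma is in hand; the only real content is the observation that \(\partial\) converts the unrestricted unrolling into the guarded one that \((\mathsf S5)\) already supplies. The mildly delicate step — and the one I would write out most carefully — is the \(e \to \checkmark\) case, where the extra copy of \(e^*\) produced by distributing must be absorbed. The absorption is legitimate precisely because \(e^*\) is itself of the form \(\partial e\, e^* + \skiptt\), so feeding it back in and applying idempotency of \(+\) removes the duplication; this is where the semilattice axiom \((\mathsf{SL2})\) is indispensable, and is exactly what would fail for a non-idempotent branching operation, signalling that the theorem is genuinely specific to \(\mathsf{SL}\).
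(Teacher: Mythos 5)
Your proof is correct and follows essentially the same route as the paper's: both reduce the unguarded unrolling to the guarded instance \((\mathsf{S5})\) applied to \(\partial e\), case-split on \(e \to \checkmark\), and absorb the duplicated summand via idempotency of \(+\). The only cosmetic difference is that you distribute \(e = \partial e + \skiptt\) over \(e^*\) directly, whereas the paper applies part (i) of the lemma to the whole expression \(ee^* + \skiptt\) and unfolds \(\partial(ee^* + \skiptt)\); the two calculations are interchangeable.
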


\begin{proof}
	The statement is equivalent to (\acro S5) if \(e \not\to \checkmark\), so it suffices to show the case where \(e \to \checkmark\).
	Since \(ee^* + 1 \to \checkmark\),
	\begin{align*}
		ee^* + 1 &= \partial (ee^* + 1) + 1 = \partial (ee^*) + 1\\
		&= \partial ee^* + \partial (e^*) + 1 = \partial e e^* + \partial e e^* + 1 = \partial ee^* + 1 = e^* \qedhere
	\end{align*}
\end{proof}

Next we consider the case \(\Eq = \mathsf {GS}\).
Write \(e \Rightarrow b\) if \(b = \{\xi \in \At \mid \ell(e)(\xi) = \checkmark\}\).
We follow in the footsteps of the previous proof and define the operator \(\partial : \SExp \to \SExp\) inductively by
\[
	\partial 0 = 0 = \partial \skiptt 
	\qquad 
	\partial a = a
	\qquad 
	\partial (e +_c f) = \partial e +_c \partial f
\]
and if \(e \Rightarrow b\), \[
	\partial (ef) = \partial f +_b \partial ef
	\qquad 
	\partial (e^{(c)}) = 0 +_b \partial e e^{(c)}
\]
Note that \(\partial e \Rightarrow \emptyset\) for all \(e \in \SExp\).

\begin{lemma}
	For any \(e \in \SExp\), if \(e \Rightarrow b\), then 
	\begin{itemize}
		\item [(i)] \(\mathsf{GS}^* \vdash e = \skiptt +_b \partial e\)
		\item [(ii)] \(\mathsf{GS}^* \vdash e^{(c)} = (0 +_b \partial e)^{(c)}\)
	\end{itemize}
\end{lemma}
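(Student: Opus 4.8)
The plan is to prove (i) by structural induction on $e$, and to obtain (ii) as a formal consequence of (i) that is then fed back into the inductive step for starred expressions. Concretely, I would first record the implication \emph{(i) for $e$ implies (ii) for $e$}: assuming $e \Rightarrow b$ and $e = \skiptt +_b \partial e$, one computes
\[
	e^{(c)} = (\skiptt +_b \partial e)^{(c)} = (\partial e +_{\bar b} \skiptt)^{(c)} = (\partial e +_{\bar b} 0)^{(c)} = (0 +_b \partial e)^{(c)},
\]
where the second and fourth equalities are instances of $(\mathsf{GS3})$ and the third is $(\mathsf{E^*4})$. This derivation is self-contained and only ever invokes (i) on the \emph{same} expression, so applying it to a proper subexpression costs only the induction hypothesis; once (i) is established for all of $\SExp$, (ii) follows in full generality.

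For (i) itself, the base cases $0,\skiptt,a$ are immediate from the definition of $\partial$ together with the guarded-semilattice identities $x +_\emptyset y = y$ and $x +_{\At} y = x$ (consequences of $(\mathsf{GS2})$ and $(\mathsf{GS3})$) and the termination sets $0,a \Rightarrow \emptyset$, $\skiptt \Rightarrow \At$. For the branching case $e = e_1 +_c e_2$ with $e_i \Rightarrow b_i$, the termination set is $b = (c \cap b_1) \cup (\bar c \cap b_2)$; after rewriting $e_1,e_2$ by the induction hypothesis, what remains is a pure guarded-semilattice equation in the three ``generators'' $\skiptt,\partial e_1,\partial e_2$. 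For the sequential case $e = e_1 e_2$ with $e_1 \Rightarrow b_1$ and $e_2 \Rightarrow b_2$, the termination set is $b = b_1 \cap b_2$; here I would use the induction hypothesis for $e_1$, right-distributivity of composition over $+_{b_1}$, and $(\mathsf{E^*1})$ to reach $e_1 e_2 = (\skiptt +_{b_2}\partial e_2) +_{b_1}(\partial e_1)e_2$, and then a pure guarded-semilattice equation in $\skiptt,\partial e_2,(\partial e_1)e_2$ reshapes this into $\skiptt +_{b_1\cap b_2}\partial(e_1 e_2)$. Each residual guarded-semilattice equation I would discharge by checking it atom-by-atom in the free guarded semilattice $(1+\Id)^{\At}$ (Appendix A shows this is the free model), whence it is derivable from $\mathsf{GS}$ by Birkhoff completeness.

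The crux is the starred case $e = e_1^{(c)}$ with $e_1 \Rightarrow b_1$, whose termination set is $b = \bar c$. The obstacle is that $e_1$ need not be guarded, so $(\mathsf{E^*5})$ cannot be applied to it directly. I would circumvent this by passing to $d := 0 +_{b_1}\partial e_1$, which \emph{is} guarded because $\partial e_1 \Rightarrow \emptyset$ (and $0 \Rightarrow \emptyset$). Applying (ii) to the subexpression $e_1$ (licensed by the induction hypothesis via the first paragraph) gives $e_1^{(c)} = d^{(c)}$; then $(\mathsf{E^*5})$ unrolls $d^{(c)} = d\,d^{(c)} +_c \skiptt$; right-distributivity together with $(\mathsf{E^*2})$ simplifies $d\,d^{(c)} = 0\,d^{(c)} +_{b_1} (\partial e_1)d^{(c)} = 0 +_{b_1}(\partial e_1)e_1^{(c)}$, which is exactly $\partial(e_1^{(c)})$; and a final application of $(\mathsf{GS3})$ rewrites $\partial(e_1^{(c)}) +_c \skiptt$ as $\skiptt +_{\bar c}\partial(e_1^{(c)})$, yielding (i).

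The main difficulty I anticipate is bookkeeping rather than conceptual: getting every termination set right (especially $\bar c$ for the star and $b_1 \cap b_2$ for composition), keeping the dependency between (i) and (ii) well-founded so that no circularity enters the induction, and ensuring that each use of $(\mathsf{E^*5})$ is justified by an established guardedness fact. I expect the starred case to absorb most of the effort, precisely because it is the one place where (ii), the guardedness detour through $d$, and the unrolling axiom must all be combined.
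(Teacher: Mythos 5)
Your proposal is correct and follows essentially the same route as the paper's proof: (ii) obtained as a direct consequence of (i) via $(\mathsf{GS3})$ and $(\mathsf{E^*4})$, structural induction on $e$ for (i) with the same termination-set bookkeeping in each case, and in the star case the same detour through the guarded surrogate $0 +_{b_1} \partial e_1$ so that $(\mathsf{E^*5})$ may be applied and then unrolled back to $\partial(e_1^{(c)})$. The only cosmetic difference is that you discharge the residual pure guarded-semilattice identities by checking them in the free model $(1+\Id)^{\At}$ and appealing to Birkhoff completeness, where the paper simply writes out the corresponding $\mathsf{GS}$-derivations; the two justifications are equivalent.
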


Again, we will write \(e = f\) in place of \(\mathsf{GS}^* \vdash e=f\) in the following proofs. 

\begin{proof}
	Again, (ii) follows directly from (i), and we prove (i) by induction on \(e\). 
	The base cases hold by definition, so it suffices to assume (i) for \(e\) and \(f\).
	Let \(e \Rightarrow b_1\) and \(f \Rightarrow b_2\). 
	Then since \(e +_c f \Rightarrow b_1c \cup b_2\bar c\), setting \(b = b_1c \cup b_2\bar c\) we derive\[
		e +_c f = (\skiptt +_{b_1} \partial e) +_c (\skiptt +_{b_2} \partial f) 
		= \skiptt +_{b_1c \cup b_2\bar c} (\partial e +_c \partial f) 
		= \skiptt +_b \partial (e +_c f)
	\]
	Next, if \(b = b_1b_2\), then \(ef \Rightarrow b\) and we derive
	\[
		ef = (\skiptt +_{b_1} \partial e)f 
		= (\skiptt +_{b_2} \partial f) +_{b_1} \partial ef
		= \skiptt +_{b_1b_2} (\partial f +_{b_1} \partial ef)
		= \skiptt +_{b} \partial (ef)
	\]
	Finally, \(e^{(c)} \Rightarrow \bar c\), so  
	\begin{align*}
		e^{(c)} 
		&= (\skiptt +_{b_1} \partial e)^{(c)}= (0 +_{b_1} \partial e)^{(c)}  \\
		&= (0 +_{b_1} \partial e)(0 +_{b_1} \partial e)^{(c)} +_c \skiptt  \tag {\acro S5}\\
		&= (0 +_{b_1} \partial ee^{(c)}) +_c \skiptt = \skiptt +_{\bar c} (0 +_{b_1} \partial ee^{(c)}) = \skiptt +_{\bar c} \partial (e^{(c)})\qedhere
	\end{align*}
\end{proof}

\begin{theorem}
	For any \(e \in \SExp\) and \(b \subseteq \At\), \(\mathsf{GS}^* \vdash e^{(b)} = ee^{(b)} +_b \skiptt\).
\end{theorem}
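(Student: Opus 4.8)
The plan is to imitate the argument given just above for Milner's star fragment, now keeping careful track of the Boolean guards. The statement is literally \((\mathsf{S5})\) when \(e\) is guarded, so all the content lies in reducing an arbitrary \(e\) to a guarded expression via the operator \(\partial\) and then reconciling the two sides using the guarded semilattice axioms. Throughout I write \(e = f\) for \(\mathsf{GS}^* \vdash e = f\) and reserve \(b\) for the star index in the statement, writing \(e \Rightarrow c\) for the immediate-termination set of \(e\).

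First I would invoke the preceding lemma: since \(e \Rightarrow c\), part (i) gives \(e = \skiptt +_c \partial e\) and part (ii) gives \(e^{(b)} = (0 +_c \partial e)^{(b)}\). Writing \(g := 0 +_c \partial e\), the expression \(g\) is guarded --- \(0\) is guarded and \(\partial e \Rightarrow \emptyset\) --- so \((\mathsf{S5})\) applies and yields \(g^{(b)} = g\,g^{(b)} +_b \skiptt\). Because \(g^{(b)} = e^{(b)}\), distributing sequential composition on the right (sound since \((e_1 +_\sigma e_2)f\) and \(e_1f +_\sigma e_2f\) denote the same process term) and simplifying \(0\,g^{(b)} = 0\) via \((\mathsf{E^*2})\) gives \(e^{(b)} = (0 +_c X) +_b \skiptt\), where I abbreviate \(X := \partial e\, e^{(b)}\).

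Next I would compute the right-hand side the same way. From \(e = \skiptt +_c \partial e\), right-distribution together with \((\mathsf{E^*1})\) gives \(e\,e^{(b)} = e^{(b)} +_c X\), so \(e\,e^{(b)} +_b \skiptt = (e^{(b)} +_c X) +_b \skiptt\). The key move is to reassociate both expressions with \((\mathsf{GS}4)\), which is available precisely because the theory of guarded semilattices is skew-associative. Setting \(Y := X +_b \skiptt\), one application of \((\mathsf{GS}4)\) rewrites \(e^{(b)} = (0 +_c X) +_b \skiptt\) as \(e^{(b)} = 0 +_{cb} Y\), and another rewrites \(e\,e^{(b)} +_b \skiptt = (e^{(b)} +_c X) +_b \skiptt\) as \(e^{(b)} +_{cb} Y\). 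Substituting the first identity into the second and applying \((\mathsf{GS}4)\) once more followed by idempotency \((\mathsf{GS}1)\), \(Y +_{cb} Y = Y\), collapses the right-hand side: \(e\,e^{(b)} +_b \skiptt = (0 +_{cb} Y) +_{cb} Y = 0 +_{cb} Y = e^{(b)}\), which is the claim.

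The main obstacle is the guard bookkeeping rather than any deep new idea. In the semilattice case a single idempotent \(+\) lets the two copies of the recurring summand merge immediately; here every idempotent fusion must take place under one matching Boolean index, so the computation only closes after \((\mathsf{GS}4)\) has pushed both occurrences of \(Y\) under the common guard \(cb = c \cap b\). The two side conditions I would be most careful about are that \(g = 0 +_c \partial e\) really is guarded, so that \((\mathsf{S5})\) legitimately applies, and that each appeal to right-distributivity is justified at the level of the underlying process terms.
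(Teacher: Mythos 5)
Your proposal is correct and follows essentially the same route as the paper: both arguments lean on the preceding lemma to replace \(e\) (resp.\ \(e^{(b)}\)) by its guarded \(\partial\)-normal form and then collapse the duplicated summand \(\partial e\,e^{(b)}\) under a common Boolean guard using \((\mathsf{GS}4)\) and idempotency \((\mathsf{GS}1)\). The only difference is organisational --- the paper \(\partial\)-normalises the whole right-hand side \(ee^{(b)} +_b \skiptt\) and unfolds \(\partial\) structurally, whereas you apply \((\mathsf{E^*5})\) to \(0 +_c \partial e\) directly and distribute --- and your care about the syntactic guardedness of \(0 +_c \partial e\) matches what the paper itself implicitly assumes when it invokes \((\mathsf{E^*5})\) in the lemma.
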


\begin{proof}
	Suppose \(e \Rightarrow c\). 
	Using the previous lemma, derive
	\begin{align*}
		ee^{(b)} +_b \skiptt 
		&= \skiptt +_{\bar b} \partial (ee^{(b)} +_b \skiptt ) = \skiptt +_{\bar b} (\partial (ee^{(b)}) +_b 0 ) = \skiptt +_{\bar b} \partial (ee^{(b)})\\
		&= \skiptt +_{\bar b} (\partial e^{(b)} +_c \partial ee^{(b)})= \skiptt +_{\bar b} ((0 +_c \partial e e^{(b)}) +_c \partial ee^{(b)}) \\
		&= \skiptt +_{\bar b} (0 +_c \partial ee^{(b)}) = \skiptt +_{\bar b} \partial e^{(b)} = e^{(b)}\qedhere
	\end{align*}
\end{proof}

\end{document}